\crefname{hypothesis}{Hypothesis}{Hypotheses}
\title{Impacts of Game-Theoretic Activation on Epidemic Spread over Dynamical Networks}
\author{Ashish R. Hota, Tanya Sneh, \and Kavish Gupta\thanks{The authors are with the Department of Electrical Engineering, IIT Kharagpur. Email: \email{ahota@ee.iitkgp.ac.in,tanyasneh@gmail.com,kavishgupta1999@gmail.com}.}}
\newcommand{\ignore}[1]{}
\DeclareMathOperator*{\avg}{avg}
\newcommand{\oprocendsymbol}{\hbox{$\square$}}
\newcommand{\oprocend}
\DeclareMathAlphabet{\mathcal}{OMS}{cmsy}{m}{n}      
\newcommand{\FF}{\mathcal{F}}
\newcommand{\Eb}{\mathbb{E}}
\newcommand{\Pb}{\mathbb{P}}
\newcommand{\Rb}{\mathbb{R}}
\newcommand{\Ib}{\mathbb{I}}
\newcommand{\VV}{\mathcal{V}}
\newcommand{\DD}{\mathcal{D}}
\newcommand{\St}{\mathtt{S}}
\newcommand{\SXt}{\mathtt{SX}}
\newcommand{\SAXt}{\mathtt{SAX}}
\newcommand{\SYt}{\mathtt{SY}}
\newcommand{\Xt}{\mathtt{X}}
\newcommand{\Yt}{\mathtt{Y}}
\newcommand{\Rt}{\mathtt{R}}
\newcommand{\At}{\mathtt{A}}
\newcommand{\Nt}{\mathtt{N}}
\newcommand{\NEt}{\mathtt{NE}}
\newcommand{\Zt}{\mathtt{Z}}
\newcommand{\myclearpage}{\clearpage}
\newcommand{\rev}[1]{{#1}}
\begin{document}

\maketitle

\begin{abstract}
We investigate the evolution of epidemics over dynamical networks when nodes choose to interact with others in a selfish and decentralized manner. Specifically, we analyze the susceptible-asymptomatic-infected-recovered (SAIR) epidemic in the framework of activity-driven networks with heterogeneous node degrees and time-varying activation rates, and derive both individual and degree-based mean-field approximations of the exact state evolution. We then present a game-theoretic model where nodes choose their activation probabilities in a strategic manner using current state information as feedback, and characterize the {\it quantal response equilibrium} (QRE) of the proposed setting. We then consider the activity-driven susceptible-infected-susceptible (SIS) epidemic model, characterize equilibrium activation probabilities and analyze epidemic evolution in closed-loop. Our numerical results provide compelling insights into epidemic evolution under game-theoretic activation. Specifically, for the SAIR epidemic, we show that under suitable conditions, the epidemic can persist, as any decrease in infected proportion is counteracted by an increase in activity rates by the nodes. For the SIS epidemic, we show that in regimes where there is an endemic state, the infected proportion could be significantly smaller under game-theoretic activation if the loss upon infection is sufficiently high.
\end{abstract}

\begin{keywords}
Epidemics on networks, Temporal networks, Game theory, Bounded rationality, Stochastic Systems
\end{keywords}

\begin{AMS}
91A10, 91A43, 91B06, 91D30, 92D30, 39A50, 60J10, 93E03
\end{AMS}

\section{Introduction}
Infectious diseases exploit the interaction among human beings and spread through society, sometimes infecting millions of people across the globe within a few months. In the absence of effective medicines and vaccines, reducing the social interaction between individuals is critical to contain the spread of highly infectious diseases. As observed in the context of COVID-19, Governments across the world have imposed various restrictions on human interaction and travel in the hopes of containing the spread of the epidemic and to save lives, often incurring significant economic distress \cite{coibion2020cost}. However, the success of these measures depends critically on whether individuals or groups comply with them. It is often infeasible for Governments to perfectly enforce these restrictions, and as observed in case of COVID-19, there have been instances where the infections have grown during periods of lockdown while in different regions of the world, infections have been contained in the absence of any significant restrictions on human mobility \cite{pachetti2020impact,dahlberg2020effects}. 

Therefore, it is critical to understand how humans make decisions regarding their level of interaction during the spread of an epidemic and the impacts of such decisions on the evolution of the epidemic. In this work, we rigorously investigate this problem by modeling (i) the evolution of the epidemic in the framework of activity-driven networks and (ii) the human decision-making process in a game-theoretic framework that allows the decision-makers to have bounded rationality.

\subsection{Related work}

The susceptible-infected-recovered (SIR) and susceptible-infected-susceptible (SIS) epidemic models remain as two fundamental mathematical models of epidemic dynamics \cite{hethcote2000mathematics,pastor2015epidemic,nowzari2016analysis}. In both epidemic models, susceptible nodes become infected in a probabilistic manner if they come in contact with infected nodes. While in the SIR epidemic, an infected node develops immunity from the disease, in the SIS epidemic, there is a possibility of re-infection. Much of the prior work has investigated the above two epidemic dynamics on well-mixed populations and on static networks \cite{pastor2015epidemic}. However, during the prevalence of an epidemic, people modify their interaction patterns in order to protect themselves from becoming infected. In other words, the network evolves in response to the epidemic and in a time-scale that is comparable to the epidemic evolution. Accordingly, several recent works have focused on understanding epidemic dynamics on temporal or dynamical networks \cite{masuda2017temporal,pare2017epidemic,enright2018epidemics}.

In this work, we consider the framework of activity-driven networks (ADNs) \cite{perra2012activity,zino2017analytical,masuda2017temporal}. In the framework of ADNs, nodes activate with a certain probability (possibly depending on their epidemic state) and form connections with other nodes at random. Once the connections are formed, the epidemic states of the nodes \rev{change} according to the SIR or SIS model. Eventually, the connections are discarded. Several recent papers have investigated SIR and SIS epidemics on ADNs. Specifically, \cite{zino2020analysis,nadini2018epidemic,zino2020assessing} consider continuous-time evolution of epidemic states and network topology and derive conditions under which the epidemic is eradicated. In contrast, \cite{ogura2019optimal} derives mean-field approximations of discrete-time evolution of the SIS epidemic on ADNs. The above settings assume that (i) all nodes are (degree) homogeneous, (ii) nodes activate at a reduced rate without any strategic (game-theoretic) decision-making, (iii) optimal activity rates are computed via centralized optimization, and (iv) the impacts of asymptomatic carriers (as is the case with COVID-19) are not considered. In our prior work \cite{hota2020generalized}, we generalized the setting in \cite{ogura2019optimal} to include asymptomatic carriers and degree heterogeneity in the SIS epidemic model. 

In contrast with the above works where the activity rates are specified in an exogenous manner or optimized by a central authority, humans take these decisions in a strategic manner by evaluating the (social and economic) benefits of such interactions with the possible risk of becoming infected. Therefore, in this paper, we propose a game-theoretic framework where nodes choose their activation probabilities in order to maximize a suitably defined state-dependent utility function. Earlier work on game-theoretic reduction in contact rates has primarily focused on the setting with a well-mixed population \cite{bauch2004vaccination,theodorakopoulos2013selfish,reluga2010game,fenichel2011adaptive,choi2020optimal,chang2020game}. Recent papers have analyzed the impacts of game-theoretic protection against networked epidemic models \cite{omic2009protecting,trajanovski2017designing,hota2019game,hota2019game2}; these works focus on the SIS epidemic model (on static networks) and define the cost in terms of the infection probability at the endemic state of the epidemic. In a related work \cite{huang2019differential}, the authors consider a differential game framework for the SIS epidemic. A few recent works consider game-theoretic models of social distancing in a comparable time-scale as epidemic evolution; specifically \cite{cho2020mean,dasaratha2020virus} consider adaptive activation rates in well-mixed populations while \cite{lagos2020games} studies the impacts of local and aggregate information on the equilibrium strategies in a networked SIR epidemic model. \rev{In a closely related work \cite{eksin2016disease}, the authors considered the SIS epidemic model on static networks where nodes take decisions on whether to interact or not as a function of the current epidemic prevalence and analyzed the pure Nash equilibria.} To the best of our knowledge, ours is one of the first papers to rigorously analyze game-theoretic or strategic choice of activation in temporal networks, particularly in the ADN framework that \rev{incorporates} (i) bounded rationality in decision-making, (ii) heterogeneous node degrees, and (iii) asymptomatic carriers. We now summarize the contributions of this work. 

\subsection{Summary of contributions} We first define an activity-driven and adaptive analogue of the susceptible-asymptomatic-infected-recovered (SAIR) epidemic model \cite{robinson2013model, ansumali2020modelling,stella2020role} (Section \ref{section:a-siyr}). In the SAIR epidemic, nodes in the asymptomatic state can cause new infections and can potentially recover without ever exhibiting symptoms. Consequently, this is an appropriate framework to model diseases such as COVID-19 (compared to the classical SIR epidemic model) \cite{ansumali2020modelling}. Furthermore, both SIR and SEIR (with state E pertaining to the exposed state) models are special cases of the SAIR epidemic model as we show later. We consider a discrete-time model of epidemic and network evolution analogous to the setting in \cite{ogura2019optimal}, and refer our model as the A-SAIR epidemic. We derive individual-based and degree-based mean-field approximations of the exact Markovian evolution of the epidemic under arbitrary state and time-dependent activation rates and heterogeneous node degrees. 

In Section \ref{section:activation_game}, we formally introduce the {\it activation game} where each node decides whether to activate or not at each time. Nodes who activate receive a benefit, while susceptible nodes experience a risk of becoming infected due to their interactions. We assume that nodes in the asymptomatic state are not aware of being infected and continue to behave as susceptible nodes. Symptomatic nodes no longer face the risk of becoming infected, but incur a higher cost if they activate. We assume that nodes choose their actions according to the logit choice model which captures bounded rationality prevalent in human decision-making \cite{luce2012individual,greene2009discrete}, and consider the notion of quantal response equilibrium (QRE) \cite{mckelvey1995quantal} as the solution concept.\footnote{The QRE is a generalization of the Nash equilibrium which allows for stochastic uncertainty in the utility functions of the nodes, has strong behavioral foundations and has been shown to explain/predict observed human behavior in many applications \cite{mckelvey1995quantal,goeree2002quantal,haile2008empirical}.} We define the infection risk of the nodes via (degree-based) mean-field approximations and characterize the activation probabilities of the nodes at the QRE in the A-SAIR epidemic model. Our analysis shows that nodes use information regarding the current epidemic prevalence as feedback to determine their activity rates at the QRE. 

In Section \ref{section:a-sis}, we consider the activity-driven SIS epidemic model, derive the individual and degree-based mean-field approximations, and characterize the activation probabilities of the nodes at the QRE; the latter being a nonlinear function of the current epidemic prevalence. We then analyze the nonlinear closed-loop dynamics under game-theoretic activation probabilities in several special cases of interest. 

Finally, in Section \ref{section:simulation}, we report extensive simulations of the evolution of the epidemic states under game-theoretic activation decisions. Our results highlight the accuracy of the mean-field approximations, the impacts of asymptomatic carriers and heterogeneity in degree distributions on the evolution of the A-SAIR epidemic. Specifically, we show that under certain conditions, selfish and decentralized activation decisions can lead to long-term persistence of the disease in the population because any decline in the infected population is counteracted by an increase in activation probabilities, and vice versa. For the A-SIS epidemic, we show that if the loss due to infection is sufficiently high, the proportion of infected nodes at the endemic state can be made arbitrarily small under game-theoretic activation. We conclude with a discussion on directions for future research.
\section{Activity-Driven Adaptive SAIR Epidemic Model}
\label{section:a-siyr}


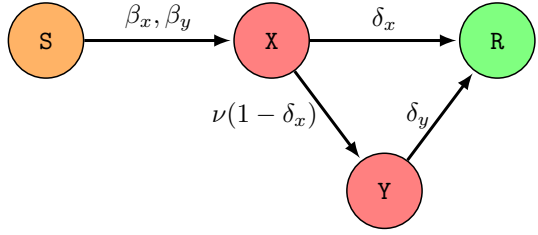
\begin{figure}[tb]
\centering
\begin{tikzpicture}[font=\sffamily]

\tikzset{node style/.style={state, minimum width=1cm, line width=0.2mm, fill=orange!60!white}}
\tikzset{node style1/.style={state, minimum width=1cm, line width=0.2mm, fill=red!50!white}}
\tikzset{node style2/.style={state, minimum width=1cm, line width=0.2mm, fill=green!50!white}}
\node[node style] at (0, 0)     (St)     {$\St$};
\node[node style1] at (3, 0)     (At)     {$\Xt$};
\node[node style2] at (6, 0) (Rt)  {$\Rt$};
\node[node style1] at (4.5, -2) (Yt) {$\Yt$};
        \draw[every loop,
              auto=right,
              line width=0.4mm,
              >=latex,
              draw=black,
              fill=black]
            (St)     edge[bend right=0, auto=left] node[above] {$\beta_x, \beta_y$} (At)
            (At)     edge[bend right=0] node[above] {$\delta_x$} (Rt)
            (At) edge node[left] {\rev{$\nu (1-\delta_x)$}} (Yt)
            (Yt) edge node[left] {$\delta_y$} (Rt);
    \end{tikzpicture}
\caption{\footnotesize Probabilistic evolution of states in the A-SAIR epidemic model. Self-loops are omitted for better clarity. See Definition \ref{def:siyr} for the formal definition. Red indicates that both $\Xt$ and $\Yt$ are infected (as well as infectious) states.}
\label{fig:siys_tran}
\end{figure}

In this section, we formally define the activity-driven adaptive SAIR (A-SAIR) epidemic model and approximate the probabilistic evolution of the epidemic states via individual-based and degree-based mean-field (DBMF) approximations. Let $\VV = \{v_1, v_2, \ldots, v_n\}$ denote the set of $n$ individuals or nodes. Each node remains in one of the four possible states: susceptible ($\St$), asymptomatic ($\Xt$), infected with symptoms or symptomatic ($\Yt$) and recovered ({$\Rt$}). We assume that the recovered state includes the nodes that are deceased. Both asymptomatic and symptomatic individuals are infectious, which captures the characteristics of epidemics such as COVID-19. The states evolve in discrete-time. If at time $k \in \{0,1,\ldots\}$, node $v_i$ is susceptible (respectively, asymptomatic, symptomatic and recovered), we denote this by $v_i(k) \in \St$ (respectively, $v_i(k) \in \Xt$, $v_i(k) \in \Yt$ and $v_i(k) \in \Rt$). Given a network or contact pattern, the probabilistic state evolution is defined below.

\begin{definition}\label{def:siyr}
Let $\beta_x, \beta_y, \delta_x, \delta_y, \nu \in [0, 1]$ be constants pertaining to infection, recovery and transition rates. The state of each node $v_i$ evolves as follows.
\begin{enumerate}
\item \rev{If $v_i(k) \in \St$, then $v_i(k+1) \in \Xt$ if it is infected by at least one neighbor. The probability of infection is $\beta_x$ for an asymptomatic neighbor and $\beta_y$ for a symptomatic neighbor. The infection caused by each neighbor is independent of the infection caused by others.}
\item If $v_i(k) \in \Xt$, then $v_i(k+1) \in \Rt$ with probability $\delta_x$ and $v_i(k+1) \in \Yt$ with probability $\nu(1-\delta_x)$. 
\item If $v_i(k) \in \Yt$, then $v_i(k+1) \in \Rt$ with probability $\delta_y$. 
\end{enumerate}
The state remains unchanged otherwise. \hfill \oprocendsymbol
\end{definition}

The possible transitions of the states are illustrated in Figure \ref{fig:siys_tran}. Thus, in our model, both asymptomatic and symptomatic nodes can potentially infect a susceptible node, albeit with possibly different probabilities ($\beta_x$ and $\beta_y$, respectively). Upon being infected, a susceptible node becomes asymptomatic. From there on, it can either get cured and become recovered with probability $\delta_x$, and if not, it transitions to the symptomatic state with probability $\nu$. Thus $\nu^{-1}$ captures the delay in onset of symptoms. The recovery/curing rate for symptomatic nodes is $\delta_y$. Nodes in state $\Xt$ act as asymptomatic carriers of the disease. 

We denote the degree of node $v_i$ as $d_i \in \DD$ where $\DD \subset \{1,2,\ldots,d_{\max}\}$ denotes the set of all degrees of the nodes. The maximum degree $d_{\max}$ is assumed to be finite. The proportion of nodes with degree $d \in \DD$ is denoted by $m_d$ with $\sum_{d \in \DD} m_d = 1$. We denote the average or mean degree by $d_{\avg} := \sum_{d \in \DD} dm_d$. We now formally define the state-dependent evolution of the network or contact pattern for our setting. 

\begin{definition}\label{def:act_asiyr}
For each node $v_i \in \VV$, let $\pi_{\Zt,i}(k) \in [0,1]$ denote the activation probability of node $v_i$ at time $k$ in state $\Zt \in \{\St,\Xt,\Yt,\Rt\}$. Let $\beta_x, \beta_y, \delta_x, \delta_y, \nu \in [0, 1]$ be constants pertaining to infection, recovery and transition rates as before. The A-SAIR model is defined by the following procedures:
\begin{enumerate}
\item At the initial time $k = 0$, each node is in one of the four possible states.
\item At each time $k = 1, 2, \ldots$, each node $v_i$ activates with probability $\pi_{\Zt,i}(k)$ where $\Zt$ is the state of $v_i$ at $k$ independently of other nodes .
\item Node $v_i$, upon activation, chooses $d_i$ other nodes randomly and uniformly among all nodes and forms edges with them. If $v_j$ is chosen by $v_i$, an (undirected) edge $(v_i,v_j)$ is created with probability $1$.
\item Once the edges are formed, the states of the nodes get updated following Definition \ref{def:siyr}. 
\item These edges are discarded at time $k + 1$. Steps 2-4 are repeated for $k \geq 1$. \hfill \oprocendsymbol
\end{enumerate} 
\end{definition}

\sloppy
Note that the activation probabilities are potentially time-varying, state-dependent and heterogeneous across nodes. 

\begin{remark}
\rev{In our model, if a node $v_i$ does not activate, it may still interact with others; specifically when another node $v_j$ activates and chooses $v_i$ to form an edge. This feature captures settings where an individual may not go out to interact with others, but can not prevent others from paying a visit. In contrast, when node $v_i$ activates, it forms $d_i$ edges on its own, in addition to the edges that may be formed by other activating nodes choosing $v_i$.} \hfill \oprocendsymbol
\end{remark}

The following remark highlights the generality of the proposed model.

\begin{remark}\label{remark:asiyr_special}
The above model captures several epidemic models as special cases.
\begin{enumerate}
\item When $\beta_x = \delta_x = 0$, then nodes in asymptomatic state $\Xt$ necessarily become symptomatic before recovering. Furthermore, nodes in $\Xt$ do not cause any new infections. Thus, we recover the activity-driven analogue of the classical SEIR epidemic model with nodes in $\Xt$ being analogous to the ``exposed" state. 
\item When $\nu = 0$ and none of the nodes are in state $\Yt$ at time $k=0$, we recover the activity-driven analogue of the classical SIR epidemic with $\Xt$ being the state of infected nodes. \hfill \oprocendsymbol
\end{enumerate}
\end{remark}

\subsection{State Evolution and Mean-Field Approximations}\label{section:asiyr-mfa}

We now analyze the state evolution and derive mean-field approximations of the above general model. We define (indicator) random variables $S_i(k), X_i(k), Y_i(k)$ and $R_i(k)$ associated with node $v_i$ that take values in the set $\{0,1\}$ with $S_i(k) = 1$ if $v_i(k) \in \St$, $X_i(k) = 1$ if $v_i(k) \in \Xt$, $Y_i(k) = 1$ if $v_i(k) \in \Yt$ and $R_i(k) = 1$ if $v_i(k) \in \Rt$. Since a node can only be in one of the four possible states, we have $S_i(k) + X_i(k) + Y_i(k) + R_i(k) = 1$. Similarly, we define a $\{0,1\}$-valued random variable $A_{ij}(k)$ which takes value $1$ if the edge $(v_i,v_j)$ exists at time $k$. We also denote by $N_x$ a Bernoulli random variable that takes value $1$ with probability $x \in [0,1]$. The exact state evolution of a node $v_i$ under the A-SAIR epidemic model can now be formally stated as:
\begin{subequations}\label{eq:asiyr_markov}
\begin{align}
& S_i(k+1) = S_i(k) \underset{j \neq i}{\Pi} \left[ 1 - A_{ij}(k) (X_j(k)N_{\beta_x} + Y_j(k)N_{\beta_y}) \right], \label{eq:asiyr_markov_s}
\\ & X_i(k+1) = (1-N_{\delta_x})(1-N_{\nu}) X_i(k) \nonumber
\\ & \qquad \qquad \qquad + S_i(k) \big[1 - \underset{j \neq i}{\Pi} \left[ 1 - A_{ij}(k) (X_j(k)N_{\beta_x} + Y_j(k)N_{\beta_y}) \right]\big],  \label{eq:asiyr_markov_x}
\\ & Y_i(k+1) = (1-N_{\delta_x})N_{\nu} X_i(k) + (1-N_{\delta_y})Y_i(k), \label{eq:asiyr_markov_y}
\\ & R_i(k+1) = R_i(k) + N_{\delta_x}X_i(k) + N_{\delta_y}Y_i(k). \label{eq:asiyr_markov_r}
\end{align}
\end{subequations}
It is easy to see that $S_i(k+1) + X_i(k+1) + Y_i(k+1) + R_i(k+1) = S_i(k) + X_i(k) + Y_i(k) + R_i(k) = 1$. We denote the probability of node $v_i$ being susceptible at time $k$ by $s_i(k)$, i.e., $s_i(k) := \Pb(v_i(k) \in \St) = \Pb(S_i(k) = 1) = \Eb[S_i(k)]$. Similarly, we define $x_i(k) := \Eb[X_i(k)]$ , $y_i(k) := \Eb[Y_i(k)]$ and $r_i(k) := \Eb[R_i(k)]$.

Note that the infection states follow a Markov process with a $4^n \times 4^n$ transition probability matrix. While analyzing the exact state evolution of this model is computationally intractable, we first present a mean-field approximation of the evolution of the epidemic states for each individual node. 

\begin{theorem}[Individual-based mean-field approximation of the A-SAIR epidemic]\label{thm:asiyr_main}
Consider the A-SAIR epidemic model defined in Definition \ref{def:act_asiyr}.  Let $\bar{d}_i = d_i/(n-1)$, and for all $i, j$, define the constants
\begin{align}
\beta^{ij}_x(k) & := \beta_x [1-(1-\pi_{\St,i}(k) \bar{d}_i)(1 - \pi_{\Xt,j}(k) \bar{d}_j)] \label{eq:betaxij_tv},
\\ \beta^{ij}_y(k) & := \beta_y [1-(1-\pi_{\St,i}(k) \bar{d}_i)(1 - \pi_{\Yt,j}(k) \bar{d}_j)]. \label{eq:betayij_tv}
\end{align}
Then,
\begin{subequations}\label{eq:mfa_linear_dynamics_asiyr}
\begin{align}
& s_i(k+1) \simeq s_i(k) \Big(1-\sum_{j \neq i} \big[\beta^{ij}_x(k) x_j(k) + \beta^{ij}_y(k) y_j(k) \big]\Big), 
\\ & x_i(k+1) \simeq (1-\delta_x) (1-\nu) x_i(k) + s_i(k) \sum_{j \neq i} \left[\beta^{ij}_x(k) x_j(k) + \beta^{ij}_y(k) y_j(k) \right], 
\\ & y_i(k+1) = (1-\delta_x) \nu x_i(k) + (1-\delta_y)y_i(k),
\\ & r_i(k+1) = r_i(k) + \delta_x x_i(k) + \delta_y y_i(k), 
\end{align}
\end{subequations}
for all nodes $v_i$ and $k \geq 0$. \hfill \oprocendsymbol
\end{theorem}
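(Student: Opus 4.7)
The plan is to derive the four approximate identities in \eqref{eq:mfa_linear_dynamics_asiyr} by taking expectations of the exact Markovian evolution \eqref{eq:asiyr_markov} term-by-term and introducing mean-field closures only where products of random variables across distinct nodes appear.

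The last two equations of \eqref{eq:mfa_linear_dynamics_asiyr} are essentially free: in \eqref{eq:asiyr_markov_y} and \eqref{eq:asiyr_markov_r}, the Bernoulli variables $N_{\delta_x}, N_{\nu}, N_{\delta_y}$ model within-node transition draws at time $k$ and are drawn independently of the state indicators $X_i(k), Y_i(k), R_i(k)$. Thus their expectations factorize exactly, yielding the stated linear recursions with equality rather than approximation.

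The two interesting lines correspond to \eqref{eq:asiyr_markov_s} and \eqref{eq:asiyr_markov_x}. For the susceptible equation, I would first condition on $\{S_i(k) = 1\}$ via the tower property to write
\[
s_i(k+1) \;=\; s_i(k)\,\Eb\Bigl[\textstyle\prod_{j \neq i}\bigl(1 - A_{ij}(k)(X_j(k) N_{\beta_x} + Y_j(k) N_{\beta_y})\bigr)\,\Big|\,S_i(k) = 1\Bigr],
\]
and then invoke two standard mean-field approximations. First, treat the per-neighbor factors as independent so that the expectation of the product becomes the product of expectations. Second, linearize via $\prod_j(1 - p_j) \approx 1 - \sum_j p_j$, which is consistent because each $p_j$ scales like $\bar d_j = d_j/(n-1)$. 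Inside each surviving term I would compute $\Pb(A_{ij}(k) = 1 \mid S_i(k) = 1, X_j(k) = 1)$ by observing that, conditional on the states of $v_i$ and $v_j$, the edge event is determined by two independent activation draws; this gives $1 - (1 - \pi_{\St,i}(k)\bar d_i)(1 - \pi_{\Xt,j}(k)\bar d_j)$, with the analogous expression for the $\Yt$ case. A further mean-field step replaces the conditional marginal $\Pb(X_j(k) = 1 \mid S_i(k) = 1)$ by $x_j(k)$, and multiplication by $\beta_x$ (respectively $\beta_y$) reproduces exactly the coefficients $\beta^{ij}_x(k)$ and $\beta^{ij}_y(k)$ from \eqref{eq:betaxij_tv}--\eqref{eq:betayij_tv}. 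The equation for $x_i(k+1)$ then follows by writing $\Eb[S_i(k)(1 - \prod(\cdot))] = s_i(k) - [\text{the expression just derived}]$ and adding the exact contribution $\Eb[(1 - N_{\delta_x})(1 - N_\nu) X_i(k)] = (1-\delta_x)(1-\nu) x_i(k)$.

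The main obstacle is the mean-field independence closure itself: strictly, $S_i(k)$, $X_j(k)$, and $A_{ij}(k)$ are correlated through the shared history of the joint Markov chain, so replacing joint expectations by products of marginals incurs an error that is not controlled inside the proof. Because the theorem is stated with $\simeq$ rather than equality, this is a derivation of an approximate closed system rather than a rigorous convergence statement; a fully rigorous justification would require a propagation-of-chaos argument as $n \to \infty$, which I would not attempt within the proof of this theorem.
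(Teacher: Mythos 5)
Your derivation is correct and lands on exactly the expressions in \eqref{eq:mfa_linear_dynamics_asiyr}: the exact treatment of \eqref{eq:asiyr_markov_y}--\eqref{eq:asiyr_markov_r}, the computation of $\Pb(A_{ij}(k)=1 \mid \cdot)$ from the two independent activation draws, and the closure $\Pb(X_j(k)=1\mid S_i(k)=1)\simeq x_j(k)$ all match the paper. The one place you diverge is in how the product over neighbors is handled. The paper does not factorize $\Eb[\prod_{j\neq i}(1-p_j)]$ into a product of expectations; instead it applies the Weierstrass product inequality \emph{pathwise} to the $\{0,1\}$-valued summands, $1-\prod_{j\neq i}(1-p_j)\leq \sum_{j\neq i}p_j$ with $p_j = A_{ij}(k)(X_j(k)N_{\beta_x}+Y_j(k)N_{\beta_y})$, and only then takes expectations term by term. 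This means the paper needs no independence assumption \emph{across} neighbors $j$ --- only the pairwise closure $\Pb(\SXt^k_{ij})\simeq s_i(k)x_j(k)$ --- and the resulting infection term is a genuine deterministic upper bound on the exact infection probability (modulo that single closure), a one-sidedness the paper exploits in Remark \ref{remark:WPI_doerr} to quantify when the approximation is tight (roughly, when $\sum_{j\neq i}[\beta^{ij}_x(k)x_j(k)+\beta^{ij}_y(k)y_j(k)]\ll 2$). Your factorize-then-linearize route stacks an additional independence closure across neighbors and loses the clean upper-bound interpretation, but is otherwise an equally standard mean-field derivation reaching the same first-order system; your closing caveat about the uncontrolled error and the need for a propagation-of-chaos argument for rigor applies equally to both routes.
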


\begin{proof}
We compute expectation on both sides of \eqref{eq:asiyr_markov_x} and \eqref{eq:asiyr_markov_y} and obtain
\begin{subequations}\label{eq:asiyr_mfa}
\begin{align}
& x_i(k+1) = (1-\delta_x)(1-\nu) x_i(k) \nonumber
\\ & \qquad \qquad \qquad + \Eb\Big[S_i(k) \big[1 - \underset{j \neq i}{\Pi} \left[1 - A_{ij}(k) (X_j(k)N_{\beta_x} + Y_j(k)N_{\beta_y}) \right]\big]\Big], \label{eq:asiyr_mfa2_x}
\\ & y_i(k+1) = (1-\delta_x) \nu x_i(k) + (1-\delta_y)y_i(k),
\end{align}
\end{subequations}
For the product term in the R.H.S. of \eqref{eq:asiyr_mfa2_x}, the Weierstrass product inequality yields
\begin{align*}
& 1 - \underset{j \neq i}{\Pi} \left[ 1 - A_{ij}(k) (X_j(k)N_{\beta_x} + Y_j(k)N_{\beta_y}) \right] \leq \sum_{j\neq i} A_{ij}(k) (X_j(k)N_{\beta_x} + Y_j(k)N_{\beta_y}). 
\end{align*}
Consequently, we have
\begin{align}
& \Eb\Big[S_i(k) \big[1 - \underset{j \neq i}{\Pi} \left[ 1 - A_{ij}(k) (X_j(k)N_{\beta_x} + Y_j(k)N_{\beta_y}) \right]\big]\Big] \nonumber
\\ & \qquad \qquad \leq \!\sum_{j \neq i} \!\beta_x \Eb[A_{ij}(k) S_i(k) X_j(k)] \!+\! \beta_y \Eb[A_{ij}(k) S_i(k) Y_j(k)]. \label{eq:ineq_main}
\end{align}

We now focus on evaluating the expectation terms in the above equation. Recall that $A_{ij}(k)$ is a random variable that indicates the presence of the edge $(v_i,v_j)$ at time $k$ and is governed by the states and activation probabilities of nodes $v_i$ and $v_j$ according to Definition \ref{def:act_asiyr}. In order to compute the expectation terms, we introduce the following notation for events of interest:
\begin{align}
& \SXt^k_{ij} = ``v_i(k) \in \St \text{ and } v_j(k) \in \Xt," \label{eq:def_SXtij}
\\ & \SYt^k_{ij} = ``v_i(k) \in \St \text{ and } v_j(k) \in \Yt," \label{eq:def_SYtij}
\\ & \Gamma^k_{i\to j} = \text{``$v_i$ is activated, and chooses } v_j \text{ as neighbor at time $k$."} \label{eq:def_Gammatij}
\end{align}
With the above notation in place, we have
\begin{subequations}\label{eq:sub_cond_exp}
\begin{align}
\Eb[A_{ij}(k) S_i(k) X_j(k)] & = \Pb(A_{ij}(k)=1, S_i(k) = 1, X_j(k) = 1) \nonumber
\\ & = \Pb(A_{ij}(k)=1|\SXt^k_{ij})\Pb(\SXt^k_{ij}), \label{eq:sub_cond_exp1}
\\ \Eb[A_{ij}(k) S_i(k) Y_j(k)] & = \Pb(A_{ij}(k)=1|\SYt^k_{ij})\Pb(\SYt^k_{ij}). \label{eq:sub_cond_exp2}
\end{align}
\end{subequations}
We now focus on the first equation above and note that
\begin{align*}
\Pb(A_{ij}(k)=1|\SXt^k_{ij}) & = \Pb(\Gamma^k_{i \to j} \cup \Gamma^k_{j \to i}| \SXt^k_{ij})
\\ & = \Pb(\Gamma^k_{i \to j} | \SXt^k_{ij}) + \Pb(\Gamma^k_{j \to i} | \SXt^k_{ij}) - \Pb(\Gamma^k_{i \to j} | \SXt^k_{ij}) \Pb(\Gamma^k_{j \to i} | \SXt^k_{ij})
\\ & = 1-[1-\Pb(\Gamma^k_{i \to j} | \SXt^k_{ij})][1-\Pb(\Gamma^k_{j \to i} | \SXt^k_{ij})]
\\ & = 1-[1-\pi_{\St,i}(k)\bar{d}_i][1-\pi_{\Xt,j}(k)\bar{d}_j]
\end{align*}
following Definition \ref{def:act_asiyr}. Note that the probability of an activated node $v_i$ choosing a specific node $v_j$ is $\bar{d}_i = \frac{d_i}{n-1}$. Similarly for events conditioned on $\SYt^k_{ij}$, we have 
\begin{align*}
\Pb(A_{ij}(k)=1|\SYt^k_{ij}) & = 1-[1-\Pb(\Gamma^k_{i \to j} | \SYt^k_{ij})] [1-\Pb(\Gamma^k_{j \to i} | \SYt^k_{ij})]
\\ & = 1-[1-\pi_{\St,i}(k)\bar{d}_i][1-\pi_{\Yt,j}(k)\bar{d}_j].
\end{align*}
Finally, we approximate $\Pb(\SXt^k_{ij})$ and $\Pb(\SYt^k_{ij})$ by assuming that the events $v_i(k) \in \St$ and $v_j(k) \in \Xt$ (or $v_j(k) \in \Yt$) are independent. Substituting the above in \eqref{eq:sub_cond_exp} yields
\begin{align*}
\Eb[A_{ij}(k) S_i(k) X_j(k)] & = [1-[1-\pi_{\St,i}(k)\bar{d}_i][1-\pi_{\Xt,j}(k)\bar{d}_j]] \Pb(\SXt^k_{ij}) 
\\ & \simeq [1-[1-\pi_{\St,i}(k)\bar{d}_i][1-\pi_{\Xt,j}(k)\bar{d}_j]] s_i(k) x_j(k), 
\\ \Eb[A_{ij}(k) S_i(k) Y_j(k)] & \simeq [1-[1-\pi_{\St,i}(k)\bar{d}_i][1-\pi_{\Yt,j}(k)\bar{d}_j]] s_i(k) y_j(k).
\end{align*}
The result now follows upon substituting the above in \eqref{eq:ineq_main} using the definition of $\beta^{ij}_x$ and $\beta^{ij}_y$, and taking expectations on both sides of \eqref{eq:asiyr_markov_s} and \eqref{eq:asiyr_markov_r}. 
\end{proof}

\begin{remark}\label{remark:WPI_doerr}
\rev{While evaluating the evolution of the probability of a node being in different epidemic states in the above proof, we have relied on two approximations. 
\begin{itemize}
\item We have assumed that the probability of two nodes being susceptible and infected respectively are independent, i.e., $\Pb(\SXt^k_{ij}) \simeq s_i(k)x_j(k)$ and $\Pb(\SYt^k_{ij}) \simeq s_i(k)y_j(k)$. This is reasonable when the number of nodes is large, and because the network is formed anew in each time step in a probabilistic manner.
\item We have applied the Weierstrass product inequality to upper bound the event that corresponds to a susceptible node becoming infected. This inequality \cite[Lemma 1.4.8]{doerr2020probabilistic} states that for $p_1, \ldots, p_N \in [0,1]$, we have 
$$ 1 - \sum^N_{i=1} p_i \leq \Pi^N_{i=1} (1-p_i) \leq 1-\! \sum^N_{i=1} p_i + \frac{1}{2} \big(\sum^N_{i=1} p_i\big)^2 \!= 1 - \big(\sum^N_{i=1} \! p_i\big) \big(1-\frac{1}{2} \sum^N_{i=1} p_i\big). $$
In the proof, we have used the left hand side inequality. Nevertheless, the inequality on the right hand side suggests that the (left hand side) inequality is tight when $\sum^N_{i=1} p_i \ll 2$. While the summand in the proof $\sum_{j\neq i} A_{ij}(k) (X_j(k)N_{\beta_x} + Y_j(k)N_{\beta_y})$ is a random variable, its expectation is precisely $\sum_{j \neq i} \left[\beta^{ij}_x(k) x_j(k) + \beta^{ij}_y(k) y_j(k) \right]$. Thus, the bound is tight (on average) when $\sum_{j \neq i} \left[\beta^{ij}_x(k) x_j(k) + \beta^{ij}_y(k) y_j(k) \right]$ is much smaller than $2$ and $n$ is large. A (weak) sufficient condition for this is $(\beta_x + \beta_y) d_{\max} \ll 1$.  \hfill \oprocendsymbol
\end{itemize}} 
\end{remark}


The result shows that the evolution of the probability of a node being in any of the four epidemic states is approximated by the dynamics stated in \eqref{eq:mfa_linear_dynamics_asiyr} with state dimension $4n$ which is considerably smaller than the exact Markovian evolution with dimension $4^n$. We now leverage the above result to obtain a degree-based mean field (DBMF) approximation of the A-SAIR epidemic model when all nodes of a given degree choose an identical state-dependent activation probability and are equally likely to be in any of the epidemic states. The accuracy of the DBMF approximation improves when the number of nodes increases. 

\begin{definition}\label{thm:asiyr_dbmf}
Consider the A-SAIR epidemic model defined in Definition \ref{def:act_asiyr}. Let all nodes with degree $d \in \DD$ choose an identical (state-dependent) activation probability denoted by $\pi_{d,\Zt}(k)$ with $\Zt \in \{\St,\Xt,\Yt,\Rt\}$. Let $\bar{d} = d/(n-1)$, and for all $d, t \in \DD$, define the constants
\begin{align}
\beta^{dt}_x(k) & := \beta_x [1-(1-\pi_{d,\St}(k) \bar{d})(1 - \pi_{t,\Xt}(k) \bar{t})] \label{eq:betaxij_tv_dbmf},
\\ \beta^{dt}_y(k) & := \beta_y [1-(1-\pi_{d,\St}(k) \bar{d})(1 - \pi_{t,\Yt}(k) \bar{t})]. \label{eq:betayij_tv_dbmf}
\end{align}
Let $s_d(k), x_d(k), y_d(k), r_d(k)$ denote the proportion of degree $d$ nodes that are susceptible, asymptomatic, symptomatic and recovered at time $k$, respectively. Then, we define the DBMF approximation of the A-SAIR epidemic as the dynamics given by
\begin{subequations}\label{eq:dbmf_asiyr}
\begin{align}
& s_d(k+1) = s_d(k) \Big(1-\sum_{t \in \DD} nm_t \big[\beta^{dt}_x(k) x_t(k) + \beta^{dt}_y(k) y_t(k) \big]\Big), 
\\ & x_d(k+1) = (1-\delta_x)\! (1-\nu) x_d(k)\! + s_d(k) \! \sum_{t \in \DD} \! nm_t \big[\beta^{dt}_x(k) x_t(k) \!+ \beta^{dt}_y(k) y_t(k) \big], 
\\ & y_d(k+1) = (1-\delta_x) \nu x_d(k) + (1-\delta_y)y_d(k),
\\ & r_d(k+1) = r_d(k) + \delta_x x_d(k) + \delta_y y_d(k), 
\end{align}
\end{subequations}
for all degrees $d$ and $k \geq 0$. \hfill \oprocendsymbol
\end{definition}

The above approximation is inspired by the analysis in Theorem \ref{thm:asiyr_main}. Specifically, for a node $v_i$ with degree $d$, we have
\rev{\begin{align*}
& \sum_{j \neq i} \big[\beta^{ij}_x(k) x_j(k) + \beta^{ij}_y(k) y_j(k) \big] 
\\ = & \sum_{t \in \DD, t \neq d} nm_t \big[\beta^{dt}_x(k) x_t(k) + \beta^{dt}_y(k) y_t(k) \big] + (n-1)m_d \big[\beta^{dd}_x(k) x_d(k) + \beta^{dd}_y(k) y_d(k) \big]
\\ \simeq & \sum_{t \in \DD} nm_t \big[\beta^{dt}_x(k) x_t(k) + \beta^{dt}_y(k) y_t(k) \big],
\end{align*}}
as all nodes with degree $t$ contribute an identical quantity in the summation and the number of nodes with degree $t$ is $nm_t$. \rev{When the number of nodes $n$ is large, $(n-1)m_d \simeq nm_d$ which yields the above approximation.}

Note that when we all nodes with a given degree $d$ have an identical activation probability, we denote the activation probability as $\pi_{d,\Zt}$ while when we consider the activation probability of a node $v_i$, we denote it as $\pi_{\Zt,i}$. In the following section, we rely on the DBMF approximation to analyze strategic activation by the nodes. 

\begin{remark}
The dynamics under the mean-field approximations (\eqref{eq:mfa_linear_dynamics_asiyr} and \eqref{eq:dbmf_asiyr}) indicate that the proportion of susceptible nodes is monotonically decreasing and any equilibrium must be free from infected nodes. When the activation probabilities are constants, the dynamics (\eqref{eq:mfa_linear_dynamics_asiyr} and \eqref{eq:dbmf_asiyr}) are analogous to the dynamics of the classical SAIR epidemic analyzed in \cite{ansumali2020modelling,stella2020role}. We omit further discussions on this as it is not a major contribution of our work. \hfill \oprocendsymbol
\end{remark}
\section{To Activate or Not: Game-Theoretic Model}
\label{section:activation_game}

In this section, we formulate and analyze a game-theoretic setting where nodes choose their activation probabilities in a selfish and decentralized manner. We consider the A-SAIR epidemic model described above. At the beginning of time $k$, nodes decide whether to activate themselves or not in order to maximize their individual utilities at time $k$ in a myopic manner. The utility of a node is a function of its epidemic state, its chosen action and the states and actions of other nodes. Since the network is randomly created anew at each time step, all nodes with a given degree and epidemic state face an identical decision problem and as a result, we assume that they adopt the same strategy. We refer the collection of nodes with degree $d \in \DD$ as a subpopulation with degree $d$.

We now define the utility of a node in different epidemic states. We assume that if a node activates, it receives $1$ unit of benefit (due to social and economic interactions) irrespective of its epidemic state. In addition, we assume that a symptomatically infected node incurs a cost $c \in \Rb_{+}$ if it activates itself, which captures, for instance, deterrent placed by authorities for violating quarantine rules. If the node is susceptible, there is a risk that it will get infected when it comes in contact with another infected node and incurs a loss $L \in \Rb_{+}$ if it becomes infected. An asymptomatic node is unaware of its infection state and as a result, behaves as a susceptible node. Nodes in symptomatic and recovered states do not incur any risk of infection. 

Formally, a strategy profile for the nodes is denoted by $\pi := \{\pi_\St, \pi_\Xt, \pi_\Yt, \pi_\Rt\}$ with $\pi_{\Zt} := \{\pi_{d,\Zt}\}_{d \in \DD} \subseteq [0,1]^{|\DD|}$ where $\pi_{d,\Zt} \in [0,1]$ denotes the activation probability of a node with degree $d$ in epidemic state $\Zt \in \{\St,\Xt,\Yt,\Rt\}$. If a node decides to activate, we denote its action as $\At$; the decision to not activate is denoted as $\Nt$. At time $k$, we denote the global epidemic state as $e(k) := \{s_d(k),x_d(k),y_d(k),r_d(k)\}_{d \in \DD}$, i.e, the proportion of nodes in each epidemic state in each subpopulation. We assume that at a given time $k$, nodes are aware of the epidemic state either completely or at an aggregate level (to be made precise subsequently). The information available to the nodes is denoted as $\bar{e}(k)$. The utility of a node $v$ with degree $d \in \DD$ in different epidemic states at time $k$ are defined as:
\begin{subequations}\label{eq:utilitydef_asiyr}
\begin{align}
& u_{d,\St}(\At,\pi,\bar{e}(k)) = 1 - L\Rb_d(\pi, \bar{e}(k) | v(k) \in \St\At),
\\ & u_{d,\St}(\Nt,\pi, \bar{e}(k)) = - L\Rb_d(\pi, \bar{e}(k) | v(k) \in \St\Nt),
\\ & u_{d,\Xt}(\At,\pi, \bar{e}(k)) = u_{d,\St}(\At,\pi,\bar{e}(k)), \qquad u_{d,\Xt}(\Nt,\pi, \bar{e}(k)) = u_{d,\St}(\Nt,\pi, \bar{e}(k)),
\\ & u_{d,\Yt}(\At,\pi, \bar{e}(k)) = 1 - c, \qquad \qquad \qquad u_{d,\Yt}(\Nt,\pi, \bar{e}(k)) = 0,
\\ & u_{d,\Rt}(\At,\pi, \bar{e}(k)) = 1, \qquad \qquad \qquad \quad \quad u_{d,\Rt}(\Nt,\pi, \bar{e}(k)) = 0,
\end{align}
\end{subequations}
where $\Rb_d(\pi, \bar{e}(k) | v(k) \in \St\At)$ (respectively, $\Rb_d(\pi, \bar{e}(k) | v(k) \in \St\Nt)$) denotes the risk factor for a susceptible node of degree $d$ under strategy profile $\pi$ if it activates (respectively, does not activate) denoted by $\St\At$ (respectively, $\St\Nt$). Subsequently, we will make the notion of risk factor precise. 

We refer the above setting as the {\it activation game}. We assume that nodes choose their actions in a probabilistic manner following the logit choice model. \rev{Let the information available to the nodes be $\bar{e}(k)$ and the strategy profile be $\pi$. Then, a node with degree $d$ in epidemic state $\Zt$ chooses action $\At$ with probability}
\begin{equation}\label{eq:logit_def}
\sigma^\lambda_{d, \Zt,\At} (\pi, \bar{e}(k)) = \frac{e^{\lambda u_{d,\Zt}(\At,\pi, \bar{e}(k))}}{e^{\lambda u_{d,\Zt}(\At,\pi, \bar{e}(k))}+e^{\lambda u_{d,\Zt}(\Nt,\pi, \bar{e}(k))}}, \qquad \Zt \in \{\St,\Xt,\Yt,\Rt\}, d \in \DD,
\end{equation} 
where the parameter $\lambda$ captures the error in decision-making process of the nodes. As $\lambda \to 0$, both actions are chosen with probabilities approaching $0.5$ (nodes choose their actions completely randomly) while as $\lambda \to \infty$, the action with higher utility is chosen with probability approaching $1$ (nodes are perfectly rational). Intermediate values of $\lambda$ capture bounded rationality in the decisions made by the nodes \cite{mckelvey1995quantal}. 

\rev{Note that $\sigma^\lambda_{d, \Zt,\At} (\pi, \bar{e}(k))$ denotes the best response of a node with degree $d$ in epidemic state $\Zt$ to the strategy profile $\pi$. While the strategy profile $\pi$ specifies an activation probability $\pi_{d,\Zt}$ for such a node, it may not be the optimal activation probability for this node given its utility functions and the logit choice model. Consequently, such a strategy profile $\pi$ will not remain invariant. A strategy profile $\pi$ for which the specified activation probabilities ($\pi_{d,\Zt}$'s) coincide with the best responses ($\sigma^\lambda_{d,\Zt,\At}(\pi,\bar{e}(k))$'s) is referred to as a {\it quantal response equilibrium} of the game.}

\begin{definition}\label{def:qredef_asiyr}
A strategy profile $\pi^{\lambda}_\NEt(k) = \{\pi^\lambda_{\St,\NEt}(k), \pi^\lambda_{\Xt,\NEt}(k), \pi^\lambda_{\Yt,\NEt}(k), \pi^\lambda_{\Rt,\NEt}(k)\}$ is a quantal response equilibrium (QRE) with parameter $\lambda$ at time $k$ if
\begin{align*}
\pi^\lambda_{d, \Zt,\NEt}(k) & = \sigma^\lambda_{d,\Zt,\At} (\pi^\lambda_{\NEt}(k),\bar{e}(k)) 
\\ & = \frac{e^{\lambda u_{d,\Zt}(\At,\pi^{\lambda}_\NEt(k), \bar{e}(k))}}{e^{\lambda u_{d,\Zt}(\At,\pi^{\lambda}_\NEt(k), \bar{e}(k))}+e^{\lambda u_{d,\Zt}(\Nt,\pi^{\lambda}_\NEt(k), \bar{e}(k))}}
\\ & = \frac{e^{\lambda \Delta u_{d,\Zt}(\pi^{\lambda}_\NEt(k), \bar{e}(k))}}{e^{\lambda \Delta u_{d,\Zt}(\pi^{\lambda}_\NEt(k), \bar{e}(k))}+1},
\end{align*} 
where $\Delta u_{d,\Zt}(\pi, \bar{e}(k)) := u_{d,\Zt}(\At,\pi,\bar{e}(k)) - u_{d,\Zt}(\Nt,\pi,\bar{e}(k))$ for $\Zt \in \{\St,\Xt,\Yt,\Rt\}, d \in \DD$. \hfill \oprocendsymbol
\end{definition}

The above definition is essentially a consistency condition: \rev{the probability with which a node activates at the QRE coincides with its best response under the logit choice model \eqref{eq:logit_def}. In other words, the QRE strategy profile is a fixed point under the best response mapping.} Since there is a finite number of subpopulations and all nodes in a given subpopulation choose among finitely many actions, there always exists a QRE of the activation game \rev{as shown in \cite{mckelvey1995quantal} relying on fixed point arguments.}

\begin{remark}
The parameter $c$ can alternatively be viewed as an incentive given to infected nodes to self-isolate or self-quarantine themselves. In both interpretations (penalty and incentive), the difference in utility $\Delta u_{d,\Yt}(\pi, \bar{e}(k)) = 1-c$. \hfill \oprocendsymbol
\end{remark}


\subsection{Approximating the infection risk for susceptible nodes}
\label{section:riskfactor_mfa}

For a node $v_i$ with degree $d$, a natural choice for the risk factor $\Rb_d(\pi, \bar{e}(k)| v_i(k) \in \St\At)$ is the probability with which the susceptible node becomes infected upon activation, i.e., the quantity $\Pb(v_i(k+1) \in \Xt | v_i(k) \in \St\At; \pi)$. Computing this probability may be prohibitive in large networks as each node needs to know the epidemic states of every other node. Therefore, we compute an upper bound on this probability in a manner analogous to our derivation of mean-field approximations as
\begin{align}
& \Pb(v_i(k+1) \in \Xt | v_i(k) \in \St\At; \pi) \nonumber
\\ & \qquad = \Eb\big[1 - \underset{j \neq i}{\Pi} \left[ 1 - A_{ij}(k) (X_j(k)N_{\beta_x} + Y_j(k)N_{\beta_y}) \right]|v_i(k) \in \St\At; \pi\big] \nonumber
\\ & \qquad \leq \Eb\big[\sum_{j \neq i} \left[A_{ij}(k) (X_j(k)N_{\beta_x} + Y_j(k)N_{\beta_y}) \right]|v_i(k) \in \St\At; \pi\big] \nonumber
\\ & \qquad = \sum_{j \neq i} \beta_x \Eb[A_{ij}(k)X_j(k)|v_i(k) \in \St\At; \pi] + \beta_y \Eb[A_{ij}(k) Y_j(k)|v_i(k) \in \St\At; \pi], \label{eq:mfbound_siyr_1}
\end{align}
where the inequality is an application of Weierstrass product inequality. 

We now compute the above two expectation terms for a node $v_j$ with degree $t$. Recall from \eqref{eq:def_Gammatij} that $\Gamma^k_{i \to j}$ denotes the event that node $v_i$ activates and forms an edge with node $v_j$ at time $k$. We denote by $\SAXt^k_{ij}$ the event that $v_i$ is susceptible and activates itself while $v_j$ is asymptomatically infected at time $k$. We compute
\begin{align*}
\Eb[A_{ij}(k) X_j(k)|v_i(k) \in \St\At; \pi] & = \Pb(A_{ij}(k) = 1|v_i(k) \in \St\At, X_j(k) = 1; \pi) \Pb(X_j(k) = 1)
\\ & = \left[1\!-\![1\!-\Pb(\Gamma^k_{i \to j} | \SAXt^k_{ij}; \pi)][1\!-\Pb(\Gamma^k_{j \to i} | \SAXt^k_{ij}; \pi)]\right] x_t(k)
\\ & = \left[1-[1-\bar{d}][1-\bar{t}\pi_{t,\Xt}]\right] x_t(k)
\\ & = \left[\bar{t}\pi_{t,\Xt} + \bar{d}(1-\bar{t}\pi_{t,\Xt})\right]x_t(k),
\end{align*}
where $\bar{d} = d/(n-1)$ is the probability of node $v_i$ choosing a specific node $v_j$; the interpretation of $\bar{t}$ is analogous. Furthermore, $\Pb(X_j(k) = 1) = x_t(k)$ as node $j$ is chosen at random and $x_t(k)$ denotes the proportion of nodes with degree $t$ that are in state $\Xt$ at time $k$. Following identical arguments, we obtain
\begin{align}
\Eb[A_{ij}(k) Y_j(k)|v_i(k) \in \St\At; \pi] & = \left[\bar{t}\pi_{t,\Yt} + \bar{d}(1-\bar{t}\pi_{t,\Yt})\right]y_t(k).
\end{align}
Note that both expectations evaluate to an identical quantity for all nodes with a given degree $t$. Now, substituting the above in \eqref{eq:mfbound_siyr_1}, we obtain
\begin{align}
& \Pb(v_i(k+1) \in \Xt | v_i(k) \in \St\At; \pi) \nonumber 
\\ & \quad \quad \simeq \sum_{t \in \DD} nm_t \beta_x x_t(k) \left[\bar{t}\pi_{t,\Xt} \!+\! \bar{d}(1-\bar{t}\pi_{t,\Xt})\right] \!+\! nm_t \beta_y y_t(k) \left[\bar{t}\pi_{t,\Yt} \!+\! \bar{d}(1-\bar{t}\pi_{t,\Yt})\right] \nonumber
\\ & \quad \quad \simeq \sum_{t \in \DD} m_t \beta_x x_t(k) \left[t\pi_{t,\Xt} + d(1-\bar{t}\pi_{t,\Xt})\right] + m_t \beta_y y_t(k) \left[t\pi_{t,\Yt} +d(1-\bar{t}\pi_{t,\Yt})\right] \nonumber
\\ & \quad \quad =: \Rb_d(\pi, \bar{e}(k) | v_i(k) \in \St\At), \label{eq:risk_main_SA} 
\end{align}
where the first implication follows since the number of nodes with degree $t$ is $nm_t$ and $(n-1)m_d \simeq nm_d$ for large values of $n$. The last approximate equality also follows from $nt/(n-1) \simeq t$ for large values of $n$. The risk factor thus defined in \eqref{eq:risk_main_SA} is a function of the proportion of asymptomatic and symptomatic nodes and the activation probabilities in asymptomatic and symptomatic states for each degree $d \in \DD$. 

We now follow analogous arguments and approximate the probability of a susceptible node $v_i$ with degree $d$ becoming infected when it does not activate as
\begin{align}
& \Pb(v_i(k+1) \in \Xt | v_i(k) \in \St\Nt; \pi) \nonumber
\\ & \qquad \leq \sum_{j \neq i} \beta_x \Eb[A_{ij}(k)X_j(k)|v_i(k) \in \St\Nt; \pi] + \beta_y \Eb[A_{ij}(k) Y_j(k)|v_i(k) \in \St\Nt; \pi] \nonumber
\\ & \qquad \simeq \sum_{t \in \DD} \!\!nm_t \beta_x \Pb(A_{ij}(k) \!= 1|v_i(k) \in \St\Nt, X_j(k) \!= 1, d_j \!= t; \pi) \Pb(X_j(k) \!= 1| d_j \!= t) \nonumber
\\ & \qquad \quad + \sum_{t \in \DD} \!\!nm_t \beta_y \Pb(A_{ij}(k) \!= 1|v_i(k) \in \St\Nt, Y_j(k) \!= 1, d_j \!= t; \pi) \Pb(Y_j(k) \!= 1| d_j \!= t) \nonumber
\\ & \qquad = \sum_{t \in \DD} nm_t \beta_x x_t(k) \Pb(\Gamma^k_{j \to i} | \SXt^k_{ij}; \pi) + nm_t \beta_y y_t(k) \Pb(\Gamma^k_{j \to i} | \SYt^k_{ij}; \pi) \nonumber
\\ & \qquad \simeq \sum_{t \in \DD} \beta_x m_t x_t(k) t \pi_{t,\Xt} + \beta_y m_t y_t(k) t \pi_{t,\Yt} =: \Rb_d(\pi, \bar{e}(k) | v(k) \in \St\Nt). \label{eq:risk_main_SN}
\end{align}

Recall from Definition \ref{def:qredef_asiyr} that the equilibrium activation probability is a function of the difference in the utility of the nodes $\Delta u_{d,\Zt}(\pi,\bar{e}(k))$. Therefore, we compute
\begin{align*}
\Delta u_{d,\St}(\pi, \bar{e}(k)) &= u_{d,\St}(\At,\pi,\bar{e}(k)) - u_{d,\St}(\Nt,\pi,\bar{e}(k)) 
\\ & = 1 - L[\Rb_d(\pi, \bar{e}(k) | v(k) \in \St\At)-\Rb_d(\pi, \bar{e}(k) | v(k) \in \St\Nt)]
\\ & = 1 - L \left[\sum_{t \in \DD} m_t \beta_x x_t(k) d(1-\bar{t}\pi_{t,\Xt}) + m_t \beta_y y_t(k) d(1-\bar{t}\pi_{t,\Yt})\right].
\end{align*}

Thus, in order to decide whether to activate or not in the proposed framework, a node needs to be aware of the activation probabilities of asymptomatic and symptomatic nodes and the proportion of asymptomatic and symptomatic nodes for each degree present in the network. As a result, even when the nodes are aware of the required information, solving for the QRE requires \rev{finding a strategy profile $\pi$ that is a fixed point of the logit choice based best response mapping as stated in Definition \ref{def:qredef_asiyr},} which requires solving for the solution of $|\DD|$ coupled nonlinear equations. 

In order to reduce the information and computational burden, we further approximate $\Delta u_{d,\St}(\pi, \bar{e}(k))$. \rev{In particular, if the maximum degree $d_{\max}$ is much smaller compared to the total number of nodes $n$, which is a fairly reasonable assumption in a large network, we have $\bar{t}\pi_{t,\Zt} \leq \frac{d_{\max}}{n-1} \ll 1$. Then, assuming $1-\bar{t}\pi_{t,\Zt} \simeq 1$, we obtain}
\begin{align}\label{eq:diff_utility_asiyr}
\Delta u_{d,\St}(\pi, \bar{e}(k)) & \simeq 1 - L \left[\sum_{t \in \DD} m_t \beta_x x_t(k) d + m_t \beta_y y_t(k) d\right] \nonumber
\\ & = 1-Ld[\beta_x \bar{x}(k)+\beta_y \bar{y}(k)],
\end{align}
where $\bar{x}(k) := \sum_{t \in \DD} m_t x_t(k)$ and $\bar{y}(k) := \sum_{t \in \DD} m_t y_t(k)$ denote the expected fraction of (all) nodes that are asymptomatic and symptomatic at time $k$. The above expression is also independent of the activation decisions of other nodes. In the subsequent analysis and in our simulations, we use the approximation in \eqref{eq:diff_utility_asiyr}, and assume that the nodes are aware of $\bar{e}(k) = (\bar{x}(k),\bar{y}(k))$ at each time $k$. 


\begin{remark}
\rev{To summarize, in this subsection, we define the risk factor as the approximate probability of becoming infected in the next time step at a given strategy profile $\pi$ and infection state $e(k)$. Our approximation assumes that $n \simeq n-1$ and relies on applications of the Weierstrass product inequality in \eqref{eq:mfbound_siyr_1} and in the first inequality in \eqref{eq:risk_main_SN}. Following earlier discussion in Remark \ref{remark:WPI_doerr}, the Weierstrass product inequality is tight if the summand is much smaller than $2$, i.e., when $R_d$ is small. Further, $R_d$ is an upper bound on the probability of infection, i.e., the nodes take a more conservative view of infection risk. Finally, the approximation in \eqref{eq:diff_utility_asiyr} is accurate when the maximum degree $d_{\max}$ is much smaller compared to the number of nodes $n$. This approximation renders $\Delta u_{d,\St}(\pi, \bar{e}(k))$ independent of the strategy profile $\pi$ and it depends only on the aggregate infection prevalence $(\bar{x}(k),\bar{y}(k))$.} \hfill \oprocendsymbol
\end{remark}

\subsection{Activation Probabilities at the Quantal Response Equilibrium} 

The following result characterizes the activation probabilities of the nodes at the QRE in different epidemic states of the A-SAIR model. 

\begin{proposition}\label{prop:qre_asiyr}
Consider the A-SAIR epidemic model with the utility functions of the nodes as defined in \eqref{eq:utilitydef_asiyr}. Let the nodes be aware of $\bar{e}(k) = (\bar{x}(k),\bar{y}(k))$ at time $k$ and let $\Delta u_{d,\St}(\pi, \bar{e}(k))$ be given as \eqref{eq:diff_utility_asiyr}. Then, the activation probabilities of the nodes at the quantal response equilibrium with parameter $\lambda$ are given by
\begin{align}\label{eq:qre_prob_asiyr}
\pi^\lambda_{d, \St,\NEt}(k) & = \pi^\lambda_{d, \Xt,\NEt}(k) = \frac{e^{\lambda - \lambda L d [\beta_x \bar{x}(k)+\beta_y \bar{y}(k)]}}{e^{\lambda- \lambda L d [\beta_x \bar{x}(k)+\beta_y \bar{y}(k)]}+1}, 
\\ \pi^\lambda_{d, \Rt,\NEt}(k) & = \frac{e^{\lambda}}{e^{\lambda}+1}, \qquad \pi^\lambda_{d, \Yt,\NEt}(k) = \frac{e^{\lambda(1-c)}}{e^{\lambda(1-c)}+1}, \qquad k \geq 0, d \in \DD. \hspace{10mm} \hfill \oprocendsymbol
\end{align}
\end{proposition}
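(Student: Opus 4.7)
The plan is to directly invoke Definition \ref{def:qredef_asiyr}, which states that at a QRE, each node's activation probability must equal the logit best response evaluated at the equilibrium profile. Concretely, for each epidemic state $\Zt$ I need to compute $\Delta u_{d,\Zt}(\pi^\lambda_{\NEt}(k),\bar{e}(k))$ and then plug it into the final formula in Definition \ref{def:qredef_asiyr}, namely $\pi^\lambda_{d,\Zt,\NEt}(k) = e^{\lambda \Delta u_{d,\Zt}}/(e^{\lambda \Delta u_{d,\Zt}}+1)$. The key enabling observation, already established in Section \ref{section:riskfactor_mfa}, is that under the aggregate-information approximation \eqref{eq:diff_utility_asiyr}, $\Delta u_{d,\St}(\pi,\bar{e}(k))$ does \emph{not} depend on the strategy profile $\pi$. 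This collapses what would otherwise be a genuine fixed-point problem (coupled nonlinear equations across degrees) into a direct closed-form evaluation.

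The plan then proceeds case-by-case. For $\Zt=\St$, I substitute \eqref{eq:diff_utility_asiyr}, obtaining $\Delta u_{d,\St} = 1 - Ld[\beta_x\bar{x}(k)+\beta_y\bar{y}(k)]$, and this slots directly into the logit expression to give the first claimed formula. For $\Zt=\Xt$, the utility definitions \eqref{eq:utilitydef_asiyr} specify that asymptomatic nodes have exactly the same utility as susceptible nodes in both actions; hence $\Delta u_{d,\Xt} = \Delta u_{d,\St}$ and the activation probability is identical, yielding the equality $\pi^\lambda_{d,\Xt,\NEt}(k) = \pi^\lambda_{d,\St,\NEt}(k)$. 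For $\Zt=\Yt$, a direct subtraction using \eqref{eq:utilitydef_asiyr} gives $\Delta u_{d,\Yt} = (1-c) - 0 = 1-c$, and for $\Zt=\Rt$, similarly $\Delta u_{d,\Rt} = 1-0 = 1$. Substituting these scalars into the logit form gives the last two formulas in the statement.

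There is no real obstacle here: because of the aggregate-information simplification, no fixed-point argument is required, and the proposition is essentially a corollary of Definition \ref{def:qredef_asiyr} combined with the utility definitions \eqref{eq:utilitydef_asiyr} and the approximation \eqref{eq:diff_utility_asiyr}. The one subtlety worth flagging explicitly in writing is that self-consistency is automatic precisely because $\Delta u_{d,\St}$ (and thus $\Delta u_{d,\Xt}$) depends only on $\bar{e}(k) = (\bar{x}(k),\bar{y}(k))$ and not on the activation probabilities of other nodes; if one instead used the un-approximated risk \eqref{eq:risk_main_SA}--\eqref{eq:risk_main_SN}, the QRE would still exist by the fixed-point argument cited after Definition \ref{def:qredef_asiyr}, but it would not admit a closed form. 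Since the other two states ($\Yt$ and $\Rt$) have utilities that are independent of $\pi$ by construction, consistency is immediate there as well.
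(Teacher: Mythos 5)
Your proposal is correct and matches the paper's own (informal) justification: the paper likewise observes that the utility differences in states $\Yt$ and $\Rt$ are the constants $1-c$ and $1$, and that under the approximation \eqref{eq:diff_utility_asiyr} the difference $\Delta u_{d,\St}=\Delta u_{d,\Xt}$ depends only on $d$ and $(\bar{x}(k),\bar{y}(k))$ and not on the strategy profile, so the fixed-point condition of Definition \ref{def:qredef_asiyr} reduces to a direct evaluation of the logit formula. Your explicit remark that this collapses the fixed-point problem into a closed-form computation is exactly the point the paper makes in the paragraph following the proposition.
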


The above result is a consequence of the fact that the utilities in symptomatic and recovered states do not depend on the degrees, epidemic states and the strategy profile; see \eqref{eq:utilitydef_asiyr}. Furthermore, in the susceptible and asymptomatic states, the difference in utilities only depend on the node degree and aggregate epidemic states $\bar{x}(k)$ and $\bar{y}(k)$ under the approximation \eqref{eq:diff_utility_asiyr}. Nodes use the aggregate epidemic states as {\it feedback} to determine the equilibrium activation probability $\pi^\lambda_{d, \St,\NEt}(k)$ in a decentralized manner under the logit choice framework. Since nodes with a larger degree experience a greater risk due to exposure to a larger number of nodes, the equilibrium activation probability is monotonically decreasing with respect to the node degree. 

\sloppy
When all nodes choose their activation probabilities according to their QRE strategies, the evolution of the epidemic states can be approximated via the DBMF approximation defined in Definition \ref{thm:asiyr_dbmf} with activation probabilities given by $\pi^\lambda_{d, \Zt,\NEt}(k), \Zt \in \{\St,\Xt,\Yt,\Rt\}$ as stated in Proposition \ref{prop:qre_asiyr}. 

\begin{remark}
Other notions of risk/cost of getting infected can be analyzed in the proposed framework and QRE strategies can be derived following Definition \ref{def:qredef_asiyr}. \hfill \oprocendsymbol
\end{remark}

\begin{remark}
\rev{In the proposed formulation, the agents evaluate their risk as the probability of becoming infected in the next time-step, and maximize their instantaneous payoffs in a myopic manner. While we do not consider agents who strategize in a look-ahead manner in this work, our assumption is a reasonable first approximation in the context of epidemics. Indeed, as observed during early stages of COVID-19, there is often considerable uncertainty involved about future trajectory of epidemics, which makes it difficult for an individual to plan actions over a longer time horizon. Nevertheless, analyzing the equilibrium behavior of agents who maximize their payoffs over a longer time horizon remains an interesting avenue for future research.} \hfill \oprocendsymbol
\end{remark}

We now discuss the implications of Proposition \ref{prop:qre_asiyr} for the A-SIR epidemic model.

\subsubsection{A-SIR Epidemic Model} In the conventional SIR epidemic model, there is no distinction between asymptomatic and symptomatic epidemic states. The activity-driven analogue of the SIR epidemic model, denoted by A-SIR epidemic model, is a special case of the A-SAIR epidemic with $\nu = 0$ and $Y_i(0) = 0$ for all nodes $v_i$. Thus, the individual-based and degree-based mean-field approximations for the A-SIR model are special cases of the approximations derived in Theorem \ref{thm:asiyr_main} and Definition \ref{thm:asiyr_dbmf} with $\nu = 0$ and $y_i(0) = 0$ or $y_d(0) = 0$ for all nodes/degrees.  

We assume that infected nodes are aware of being infected, i.e., they behave as symptomatic nodes in the A-SAIR model. As a consequence of the analysis in the previous subsection and Proposition \ref{prop:qre_asiyr}, we have
\begin{align}\label{eq:qre_asir_prob}
\pi^\lambda_{d, \Rt,\NEt}(k) & = \frac{e^{\lambda}}{e^{\lambda}+1}, \qquad 
\pi^\lambda_{d, \Xt,\NEt}(k) = \frac{e^{\lambda(1-c)}}{e^{\lambda(1-c)}+1},
\\ \pi^\lambda_{d, \St,\NEt}(k) & = \frac{e^{\lambda \Delta u_{d,\St}(\pi^\lambda_{\NEt}(k),\bar{x}(k))}}{e^{\lambda \Delta u_{d,\St}(\pi^\lambda_{\NEt}(k),\bar{x}(k))}+1}, \qquad d \in \DD,
\end{align}
where
\begin{align}
\Delta u_{d,\St}(\pi^\lambda_{\NEt}(k),\bar{x}(k)) & = 1 - Ld \sum_{t \in \DD}  \beta_x m_t x_t(k) (1-\bar{t}\pi^\lambda_{t,\Xt,\NEt}) \simeq 1 - Ld \beta_x \bar{x}(k),
\end{align}
following \eqref{eq:diff_utility_asiyr}. In the special case when all nodes have an identical degree $d$, then $\bar{x}(k) = x_d(k)$. In the following section, we analyze the activity-driven SIS epidemic model. In Section \ref{section:simulation}, detailed simulations are carried out for both epidemic models, and the impacts of cost parameters, $\lambda$ and degree distribution are investigated.
\section{A-SIS Epidemic under Game-Theoretic Activation}
\label{section:a-sis}

We now consider the activity-driven SIS (A-SIS) epidemic model under game-theoretic activation decisions by the nodes. An analogous model was first introduced in \cite{ogura2019optimal} with state-dependent activation probabilities, but without any strategic decision-making by the nodes. In the A-SIS epidemic model, each node remains in two possible states: susceptible ($\St$) and infected ($\Xt$). Given a network or contact pattern, the probabilistic state evolution is formally defined below which is analogous to Definition \ref{def:siyr}.

\begin{definition}\label{def:a-sis}
Let $\beta, \delta \in [0, 1]$ be constants pertaining to infection and recovery rates. The state of each node $v_i$ evolves as follows.
\begin{enumerate}
\item If $v_i(k) \in \St$, then $v_i(k+1) \in \Xt$ \rev{if it is infected by at least one neighbor. The probability of infection is $\beta$ for an infected neighbor. The infection caused by each neighbor is independent of the infections caused by others.} 
\item If $v_i(k) \in \Xt$, then $v_i(k+1) \in \St$ with probability $\delta$.
\end{enumerate}
The state remains unchanged otherwise. \hfill \oprocendsymbol
\end{definition}

\begin{figure}[tb]
\centering
\begin{tikzpicture}[font=\sffamily]

\tikzset{node style/.style={state, minimum width=1cm, line width=0.2mm, fill=orange!60!white}}
\tikzset{node style1/.style={state, minimum width=1cm, line width=0.2mm, fill=red!50!white}}
\tikzset{node style2/.style={state, minimum width=1cm, line width=0.2mm, fill=green!50!white}}
\node[node style] at (0, 0)     (St)     {$\St$};
\node[node style1] at (3, 0)   (At)     {$\Xt$};
        \draw[every loop,
              auto=right,
              line width=0.4mm,
              >=latex,
              draw=black,
              fill=black]
            (St)     edge[bend right=30, auto=left] node[above] {$\beta$} (At)
            (At)     edge[bend right=30] node[above] {$\delta$} (St);
    \end{tikzpicture}
\caption{\footnotesize Probabilistic evolution of states in the A-SIS epidemic model. Self-loops are omitted for better clarity. See Definition \ref{def:a-sis} for the formal definition.}
\label{fig:asis}
\end{figure}
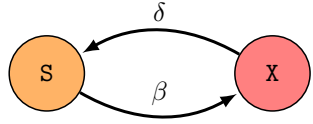

The state transitions are illustrated in Figure \ref{fig:asis}. The state-dependent evolution of the network or contact pattern is analogous to the definition in the A-SAIR epidemic setting (Definition \ref{def:act_asiyr}) with the only distinction being that nodes can be in one of two possible states and the epidemic parameters are $\beta$ and $\delta$ defined above. We denote the activation probabilities of a node $v_i$ at time $k$ when it is susceptible and infected by $\pi_{\St,i}(k)$ and $\pi_{\Xt,i}(k)$, respectively. 

In this section, we first derive the individual-based and degree-based mean-field approximations to the exact Markovian state evolution. We then derive the activation probabilities chosen by the nodes in different states at the QRE, followed by analyzing the closed-loop dynamics. 

\subsection{State Evolution and Mean-Field Approximations}\label{section:asis-mfa}

We follow analogous notations and arguments as the A-SAIR epidemic model from Section \ref{section:asiyr-mfa}. For a node $v_i$, we define a random variable $X_i(k)$ with $X_i(k) = 1$ if $v_i$ is infected at time $k$ and $X_i(k) = 0$ if susceptible. Under the A-SIS epidemic, the state of $v_i$ evolves as
\begin{align}\label{eq:asis_markov}
X_i(k+1) = (1-N_{\delta}) X_i(k) + (1-X_i(k)) \big[1 - \underset{j \neq i}{\Pi} \left[ 1 - A_{ij}(k) X_j(k)N_{\beta} \right]\big],
\end{align}
where $A_{ij}(k), N_{\delta}$ and $N_{\beta}$ are as defined in Section \ref{section:asiyr-mfa}. We define $x_i(k) := \Pb(X_i(k)=1) = \Eb[X_i(k) = 1]$ as before. Note that the epidemic states follow a Markov process with a $2^n \times 2^n$ transition probability matrix. The following result derives an individual-based mean-field approximation.  

\begin{theorem}\label{thm:asis}
Consider the A-SIS epidemic model where node $v_i$ connects to $d_i$ other nodes upon activation. Let $\bar{d}_i = d_i/(n-1)$. Then,
\begin{align}\label{eq:mfa_linear_dynamics_asis}
x_i(k\!+\!1) & \simeq \!(1\!-\!\delta)x_i(k) \!+ \!(1\!-x_i(k))\! \sum_{j \neq i} \!\beta \big[1\!-\!(1\!-\pi_{\St,i}(k) \bar{d}_i)(1 \!- \pi_{\Xt,j}(k) \bar{d}_j)\big]  x_j(k) 
\end{align}
for all nodes $v_i$ and $k \geq 0$. \hfill \oprocendsymbol
\end{theorem}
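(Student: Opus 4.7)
The plan is to mimic the proof of Theorem~\ref{thm:asiyr_main}, specializing the argument to the two-state A-SIS dynamics in \eqref{eq:asis_markov} and thereby avoiding the separate bookkeeping for asymptomatic and symptomatic compartments. Concretely, I would take expectations on both sides of \eqref{eq:asis_markov}, noting that $N_\delta$ is independent of $X_i(k)$ so that $\Eb[(1-N_\delta)X_i(k)] = (1-\delta)x_i(k)$, and that the factor $N_\beta$ in each product term is independent of the adjacency indicator $A_{ij}(k)$ and of the state variables. The task then reduces to approximating
\begin{equation*}
\Eb\Big[(1-X_i(k))\big[1-\underset{j\neq i}{\Pi}(1-A_{ij}(k)X_j(k)N_\beta)\big]\Big].
\end{equation*}

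Next, I would apply the Weierstrass product inequality \cite[Lemma 1.4.8]{doerr2020probabilistic} exactly as in \eqref{eq:ineq_main}, giving
\begin{equation*}
1-\underset{j\neq i}{\Pi}(1-A_{ij}(k)X_j(k)N_\beta) \leq \sum_{j\neq i} A_{ij}(k)X_j(k)N_\beta,
\end{equation*}
which after taking expectations and pulling out $\beta = \Eb[N_\beta]$ leaves the task of computing $\Eb[A_{ij}(k)(1-X_i(k))X_j(k)]$ for each $j\neq i$. This is the direct analogue of \eqref{eq:sub_cond_exp1}, with the event $\SXt^k_{ij}$ replaced by ``$v_i(k)\in\St$ and $v_j(k)\in\Xt$'' under the binary state space.

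To evaluate $\Eb[A_{ij}(k)(1-X_i(k))X_j(k)]$, I would condition on this event and then mirror the calculation following \eqref{eq:sub_cond_exp}: the adjacency indicator is $1$ if either $v_i$ activates and picks $v_j$ or vice versa, so
\begin{equation*}
\Pb(A_{ij}(k)=1\mid v_i(k)\in\St,\,v_j(k)\in\Xt) = 1-[1-\pi_{\St,i}(k)\bar{d}_i][1-\pi_{\Xt,j}(k)\bar{d}_j],
\end{equation*}
using that the conditional probability of a given activated node selecting a specific target is $\bar{d}_i$ (resp. $\bar{d}_j$) by Definition~\ref{def:act_asiyr}. I would then invoke the same independence approximation used for the A-SAIR proof, namely $\Pb(v_i(k)\in\St,\,v_j(k)\in\Xt)\simeq (1-x_i(k))x_j(k)$, which is justified by the fact that the contact network is rebuilt afresh at every time step (cf.\ Remark~\ref{remark:WPI_doerr}). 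Substituting yields the claimed expression.

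The only step that requires care is the independence approximation $\Pb(v_i(k)\in\St,\,v_j(k)\in\Xt)\simeq (1-x_i(k))x_j(k)$: unlike the SAIR case, in the SIS dynamics a node can return to the susceptible state and subsequently be reinfected, so correlations between the states of two nodes can in principle be reinforced over time. However, this is precisely the standard individual-based mean-field closure used for SIS on temporal networks \cite{ogura2019optimal}, and the argument in Remark~\ref{remark:WPI_doerr} still applies since the contact pattern is resampled at each $k$. Given this, the remaining bookkeeping is routine and the theorem follows by substituting the computed expectations back into the evolution equation for $x_i(k+1)$.
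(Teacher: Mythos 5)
Your proposal is correct and follows essentially the same route as the paper's own proof in Appendix~\ref{sec:appendix_proofs}: take expectations in \eqref{eq:asis_markov}, apply the Weierstrass product inequality, evaluate $\Pb(A_{ij}(k)=1\mid \SXt^k_{ij})$ via the two activation events, and close with the independence approximation $\Pb(\SXt^k_{ij})\simeq(1-x_i(k))x_j(k)$. Your added remark about reinfection-induced correlations in the SIS case is a fair caveat that the paper does not spell out, but it does not change the argument.
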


The proof is analogous to the proof of Theorem \ref{thm:asiyr_main}, \rev{and is presented in Appendix \ref{sec:appendix_proofs}. It relies on an application of the Weierstrass product inequality and assuming that the infection states of two nodes $v_i$ and $v_j$ are independent (as before).} We now define the DBMF approximation of the A-SIS epidemic model.

\begin{definition}\label{def:asis_dbmf}
Consider the A-SIS epidemic model defined in Definition \ref{def:a-sis}. Let all nodes with degree $d \in \DD$ choose an identical (state-dependent) activation probability denoted by $\pi_{d,\Zt}(k)$ with $\Zt \in \{\St,\Xt\}$. Then, the proportion of nodes with degree $d$ that are infected evolves as
\begin{equation}\label{eq:dbmf_asis}
 x_d(k\!+\!1) \!= \!(1\!-\!\delta)x_d(k) \!+ \!(1\!-\!x_d(k))\! \sum_{t \in \DD} \!\!nm_t\beta \big[1\!-\!(1\!-\!\pi_{d,\St}(k) \bar{d})(1 \!-\! \pi_{t,\Xt}(k) \bar{t})\big] x_t(k) 
\end{equation}
where $\bar{d} = d/(n-1)$ and $\bar{t} = t/(n-1)$. \hfill \oprocendsymbol
\end{definition}

We arrived at the above definition from the individual based mean-field approximation in Theorem \ref{thm:asis}, by noting that all nodes with degree $t$ contribute an identical quantity to the summation and the number of nodes with degree $t$ is $nm_t$. 

The above approximations are analogous to the deterministic evolution and the N-Intertwined Mean-Field Approximation (NIMFA) of the classical SIS epidemic \cite{van2011n,pare2018analysis}.  If the activation probabilities remain constant as the epidemic evolves, then we have the following result on their equilibrium behavior. 

\begin{proposition}\label{prop:sis_endemic_rho}
Consider the DBMF approximation of the A-SIS epidemic in Definition \ref{def:asis_dbmf} with $\delta, \beta \in (0,1]$. Let the activation probabilities be time-invariant and denoted by $\pi_{d,\St}, \pi_{d,\Xt} \in (0,1]$ for $d \in \DD$. Let $\beta(d_{\avg} + d_{\max}) \leq 1$.  Consider the matrix $\FF \in \Rb^{|\DD| \times |\DD|}$ with entries
\begin{align*}
[\FF]_{d,t} = \begin{cases}
(1-\delta) + nm_d \beta \big[1-(1-\pi_{d,\St} \bar{d})(1 - \pi_{d,\Xt} \bar{d})\big], \qquad & \text{if $t = d$,}
\\ nm_t \beta \big[1-(1-\pi_{d,\St} \bar{d})(1 - \pi_{t,\Xt} \bar{t})\big], \qquad & \text{otherwise.}
\end{cases}
\end{align*}
Let $\rho(\FF)$ denote the spectral radius of $\FF$. Then, $[0,1]^{|\DD|}$ is an invariant set for \eqref{eq:dbmf_asis},
\begin{enumerate}
\item if $\rho(\FF) \leq 1$, then $x = \mathbf{0} \in \Rb^{|\DD|}$ is the unique (disease free) equilibrium of \eqref{eq:dbmf_asis} which is asymptotically stable with domain of attraction $[0,1]^{|\DD|}$, and
\item if $\rho(\FF) > 1$, there are two equilibria, $\mathbf{0}$ and $x^*$ with $x^*_d > 0, \forall d \in \DD$. If in addition $\delta + \beta(d_{\avg} + d_{\max}) \leq 1$, then the endemic equilibrium $x^*$ is asymptotically stable with domain of attraction $[0,1]^{|\DD|} \setminus \{\mathbf{0}\}$. \hfill \oprocendsymbol
\end{enumerate}
\end{proposition}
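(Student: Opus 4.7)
The plan is to treat the right side of \eqref{eq:dbmf_asis} as a map $T(x)_d := (1-\delta)x_d + (1-x_d)\sum_t g_{dt} x_t$ with $g_{dt} := nm_t\beta[1-(1-\pi_{d,\St}\bar{d})(1-\pi_{t,\Xt}\bar{t})]$, and to collect the entries into $G=[g_{dt}]$, so that $\FF = (1-\delta)I + G$ and $T(x) = \FF x - \diag(x)Gx$. First I would verify invariance of $[0,1]^{|\DD|}$: nonnegativity is immediate, while the elementary bound $1-(1-a)(1-b)\leq a+b$ combined with the hypothesis gives $\sum_t g_{dt}\leq \beta(d+d_{\avg})\tfrac{n}{n-1}\leq 1$, so that $T(x)_d \leq (1-\delta)x_d + (1-x_d) \leq 1$. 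The same identity delivers the componentwise dominance $T(x)\leq \FF x$ that will be used throughout. Since $\beta, m_t$ and the activation probabilities are all strictly positive, $G$ and hence $\FF$ are entrywise positive; Perron--Frobenius then supplies strictly positive left and right eigenvectors $v$ and $u$ with $v^{\top}\FF = \rho(\FF)v^{\top}$ and $\FF u = \rho(\FF) u$.

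For the disease-free case $\rho(\FF)\leq 1$, I would use $V(x):=v^{\top}x$ as a Lyapunov function:
\begin{equation*}
V(T(x)) = \rho(\FF) V(x) - \sum_{d} v_d x_d (Gx)_d \leq V(x),
\end{equation*}
with equality forcing both $\rho(\FF)=1$ and $x_d(Gx)_d = 0$ for every $d$; because $G$ is entrywise positive, the latter condition forces $x=\mathbf{0}$. LaSalle's invariance principle on the compact invariant set $[0,1]^{|\DD|}$ then yields $x(k)\to\mathbf{0}$ from every initial condition, and uniqueness of the equilibrium follows because any nontrivial $x^*$ would give $v^{\top}x^* < \rho(\FF) v^{\top}x^* \leq v^{\top}x^*$, a contradiction.

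For the endemic case $\rho(\FF)>1$ with the additional hypothesis $\delta + \beta(d_{\avg}+d_{\max})\leq 1$, I first verify $T$ is monotone nondecreasing on $[0,1]^{|\DD|}$: off-diagonally $\partial T_d/\partial x_t = (1-x_d)g_{dt}\geq 0$, while on the diagonal $\partial T_d/\partial x_d = (1-\delta) + g_{dd}(1-x_d) - \sum_t g_{dt} x_t \geq (1-\delta) - \sum_t g_{dt} \geq 0$ by the bound from the setup. Expanding, $T(\epsilon u) - \epsilon u = \epsilon(\rho(\FF)-1)u - \epsilon^2 \diag(u)Gu$ is coordinatewise strictly positive for all small enough $\epsilon>0$, so monotonicity makes $\{T^k(\epsilon u)\}_k$ componentwise nondecreasing and bounded, hence convergent to an endemic fixed point $x^*\geq \epsilon u > \mathbf{0}$. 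To establish global attractivity of $x^*$ on $[0,1]^{|\DD|}\setminus\{\mathbf{0}\}$, I would rely on strict subhomogeneity: a short computation gives $T(\alpha x) - \alpha T(x) = \alpha(1-\alpha)\diag(x)Gx > 0$ for $\alpha\in(0,1)$ and $x\gg \mathbf{0}$, so classical monotone subhomogeneous map theory (Krause--Nussbaum type) identifies $x^*$ as the unique positive fixed point and as a global attractor on the interior. Any $x(0)\in [0,1]^{|\DD|}\setminus\{\mathbf{0}\}$ enters the interior after one step since, picking $t$ with $x_t(0)>0$, we have $T(x(0))_d \geq (1-x_d(0))g_{dt}x_t(0) > 0$ whenever $x_d(0)<1$ and $T(x(0))_d \geq (1-\delta)x_d(0) > 0$ whenever $x_d(0)>0$.

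The hard part is the uniqueness/global attractivity step in the endemic regime; this is precisely where the additional hypothesis $\delta + \beta(d_{\avg}+d_{\max})\leq 1$ enters, since without it monotonicity of $T$ can fail and neither the sandwich argument using $T^k(\mathbf{1})$ nor the subhomogeneity-based uniqueness argument goes through. A secondary care point is the boundary case $\rho(\FF) = 1$ in the disease-free paragraph, which is why LaSalle's principle is preferable to a direct contraction estimate.
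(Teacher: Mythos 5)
Your proof is correct in substance, but it takes a genuinely different route from the paper. The paper does not prove this proposition from scratch: it observes that with constant activation probabilities the DBMF dynamics \eqref{eq:dbmf_asis} is an instance of the generic discrete-time networked SIS model of \cite{pare2018analysis,liu2020stability} (restated as Theorem \ref{theorem:dt_sis_prior} in Appendix \ref{sec:appendix_prior}, with $h=1$, $N=|\DD|$, and $\beta_{ij}$ matched to $nm_t\beta[1-(1-\pi_{d,\St}\bar d)(1-\pi_{t,\Xt}\bar t)]$), verifies the row-sum condition of Assumption \ref{ass:dt_sis_prior} via the same elementary bound $1-(1-a)(1-b)\le a+b$ that you use, and then cites the black-box theorem. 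You instead give a self-contained argument: a linear Lyapunov function $v^{\top}x$ built from the Perron left eigenvector plus LaSalle for the disease-free case (handling the boundary case $\rho(\FF)=1$ cleanly), and monotonicity plus strong sublinearity $T(\alpha x)\gg\alpha T(x)$ with Krause--Nussbaum-type theory for existence, uniqueness, and global attractivity of the endemic state. This is essentially a reproof of the cited theorem specialized to the rank-structure of $G$; it is longer but makes transparent exactly where the extra hypothesis $\delta+\beta(d_{\avg}+d_{\max})\le 1$ enters (monotonicity of $T$, hence the sandwich and sublinearity arguments), which the paper's citation-based proof hides inside \cite{liu2020stability}. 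Two small points: your chain $\sum_t g_{dt}\le\tfrac{n}{n-1}\beta(d+d_{\avg})\le 1$ does not strictly follow from $\beta(d_{\avg}+d_{\max})\le 1$ because of the factor $n/(n-1)$ --- but the paper's own verification commits exactly the same approximation ($n\simeq n-1$), so you are on equal footing; and your ``enters the interior after one step'' claim needs $\delta<1$ when some coordinate equals $1$, which is guaranteed in the endemic regime by the additional hypothesis but is worth stating.
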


\rev{The proof follows by adapting analogous results obtained in \cite{pare2018analysis,liu2020stability} for the discrete-time (classical) SIS epidemic dynamic to the dynamic under the DBMF approximation. The main result from \cite{pare2018analysis,liu2020stability} is presented in Appendix \ref{sec:appendix_prior} and the proof of Proposition \ref{prop:sis_endemic_rho} is presented in Appendix \ref{sec:appendix_proofs}.} While we have stated the above result for the DBMF approximation, an analogous result holds for the individual-based MFA derived in Theorem \ref{thm:asis} as well. We omit the details in the interest of space.


\subsection{Game-Theoretic Choice of Activation Probabilities}

We now consider the activation game for the SIS epidemic model. Analogous to the setting studied for the A-SAIR epidemic in Section \ref{section:activation_game}, we assume that all nodes with a given degree $d$ choose an identical activation probability. We now formally define the utilities of the nodes building upon the setting in Section \ref{section:activation_game}. 

We assume that all nodes receive benefit $1$ upon activation. However, an infected node incurs a cost $c \in \Rb_{+}$ if it activates. Susceptible nodes weigh the risk of becoming infected in the next time step by a loss $L \in \Rb_{+}$. Formally, let $\pi_{d,\Zt} \in [0,1]$ denotes the activation probability of a node with degree $d$ in epidemic state $\Zt \in \{\St,\Xt\}$. The strategy profile of the nodes is denoted by $\pi := \{\pi_{d,\St}, \pi_{d,\Xt}\}_{d \in \DD} \subseteq [0,1]^{2|\DD|}$. The global epidemic state at time $k$ is denoted by $e(k) := \{x_d(k)\}_{d \in \DD}$, i.e, the proportion of infected nodes in each subpopulation. Note that since there are only two epidemic states, we have $s_d(k) = 1 - x_d(k)$. Let $\bar{e}(k)$ denote the information available to the nodes regarding the epidemic state. The utilities of a node $v$ with degree $d \in \DD$ in different epidemic states at time $k$ are defined as:
\begin{subequations}\label{eq:utilitydef_asis}
\begin{align}
& u_{d,\St}(\At,\pi,\bar{e}(k)) = 1 - L\Rb_d(\pi, \bar{e}(k) | v(k) \in \St\At),
\\ & u_{d,\St}(\Nt,\pi, \bar{e}(k)) = - L\Rb_d(\pi, \bar{e}(k) | v(k) \in \St\Nt),
\\ & u_{d,\Xt}(\At,\pi, \bar{e}(k)) = 1 - c, \qquad \qquad \quad u_{d,\Xt}(\Nt,\pi, \bar{e}(k)) = 0,
\end{align}
\end{subequations}
where $\Rb_d(\pi, \bar{e}(k) | v(k) \in \St\At)$ and $\Rb_d(\pi, \bar{e}(k) | v(k) \in \St\Nt)$ denote the respective risk factors for a susceptible node of degree $d$ under strategy profile $\pi$ if it activates (denoted by $\St\At$) and does not activate (denoted by $\St\Nt$).

We define the risk factors as approximations of the probability with which a susceptible node becomes infected in the next time step following the approach in Section \ref{section:riskfactor_mfa}. We specifically note that the analysis in Section \ref{section:riskfactor_mfa} continues to hold for the A-SIS epidemic model when we exclude the symptomatically infected state. Therefore, building upon \eqref{eq:risk_main_SA} and \eqref{eq:risk_main_SN}, we define
\begin{align*}
\Rb_d(\pi, \bar{e}(k) | v(k) \in \St\At) & := \sum_{t \in \DD} \beta m_t x_t(k) \left[t\pi_{t,\Xt} + d(1-\bar{t}\pi_{t,\Xt})\right],
\\ \Rb_d(\pi, \bar{e}(k) | v(k) \in \St\Nt) & := \sum_{t \in \DD} \beta m_t x_t(k) t \pi_{t,\Xt}, 
\\ \Delta u_{d,\St}(\pi, \bar{e}(k)) & = 1 - L \left[\sum_{t \in \DD} m_t \beta x_t(k) d(1-\bar{t}\pi_{t,\Xt})\right] \simeq 1-Ld\beta \bar{x}(k),
\end{align*}
where $\bar{x}(k)$ is the proportion of nodes that are infected at time $k$ (across all subpopulations) and we have assumed $d_{\max} \ll n$. Accordingly, the activation probabilities at the QRE with parameter $\lambda$ are given by
\begin{align}\label{eq:qre_asis_actprob}
\pi^\lambda_{d, \St,\NEt}(k) & := \frac{e^{\lambda - \lambda L d \beta \bar{x}(k)}}{e^{\lambda - \lambda L d \beta \bar{x}(k)}+1}, \qquad 
\pi^\lambda_{d, \Xt,\NEt}(k) = \frac{e^{\lambda(1-c)}}{e^{\lambda(1-c)}+1} =: \pi^\lambda_c, \qquad d \in \DD.
\end{align}

In other words, nodes use the information regarding the proportion of infected nodes in the network as feedback and adjust their the activation probabilities to maximize their individual utility in the logistic quantal response choice model. When $\lambda = 0$, the activation probabilities are $0.5$. For larger values of $\lambda$, nodes choose the action that leads to a higher utility with a greater probability. 

When a node is already infected, the activation decision depends on the penalty parameter $c$. When $c > 1$, the penalty is larger than the benefit from activation, and as a result, $\pi^\lambda_c \leq 0.5$ with $\pi^\lambda_c \to 0$ as $\lambda \to \infty$. For susceptible nodes, the higher the degree and the prevalence of the epidemic, the smaller is the activation probability at the QRE. For sufficiently high $\lambda$, if $dL\beta\bar{x}(k)<1$, then nodes with degree $d$ activate with high probability; otherwise the equilibrium activation probability is negligible. 

\subsection{Analysis of Closed-Loop Dynamics}\label{section:sis_closedloop}

We now analyze the evolution of the epidemic states when nodes choose their activation probabilities according to \eqref{eq:qre_asis_actprob} in the DBMF approximation framework. Following Definition \ref{def:asis_dbmf}, the evolution of the proportion of infected nodes with degree $d$ is given by
\begin{align}\label{eq:asis_dbmf_closed}
x_d(k+1) & = (1-\delta) x_d(k) + (1 - x_d(k)) \nonumber
\\ & \qquad \cdot \sum_{t \in \DD} \!nm_t \beta \left[1-\Big(1-\frac{e^{\lambda - \lambda L d \beta \bar{x}(k)}}{e^{\lambda - \lambda L d \beta \bar{x}(k)}+1} \bar{d}\Big)(1 \!- \pi^\lambda_c \bar{t})\right] x_t(k).
\end{align}

Note that under QRE activation probabilities, the R.H.S. of $x_d(k+1)$ is nonlinear as $\pi^\lambda_{d,\St,\NEt}(k)$ depends on $\bar{x}(k)$. Therefore, the equilibrium characterization from Proposition \ref{prop:sis_endemic_rho} is not applicable here. We now exploit the structure of \eqref{eq:asis_dbmf_closed} to obtain further insights on the closed-loop dynamics for specific parameter regimes.  

\subsubsection{Rationality parameter $\lambda = 0$}
In this case, nodes choose the activation probability to be $0.5$ irrespective of their degree and epidemic state. Furthermore, the dynamics is independent of cost parameters ($c$ and $L$) as the activation decisions are not strategic. Therefore, the behavior of the closed-loop dynamics \eqref{eq:asis_dbmf_closed} is in accordance with Proposition \ref{prop:sis_endemic_rho} with $\pi_{d, \St} = \pi_{d, \Xt} = 0.5$ for all $d \in \DD$.

\subsubsection{Rationality parameter $\lambda \to \infty$, Cost $c > 1$}
\rev{Recall that as $\lambda \to \infty$, individuals have a high degree of rationality and choose the action with higher payoff with probability $1$. When $c > 1$, the cost of activation exceeds the benefit for infected nodes, and therefore, $\pi^{\lambda}_c \to 0$ as $\lambda \to \infty$. Similarly, as $\lambda \to \infty$, the activation probability of susceptible nodes is given by
\begin{equation}\label{eq:act_prob_lbminf}
\lim_{\lambda \to \infty} \pi^\lambda_{d,\St,\NEt}(k) = \lim_{\lambda \to \infty} \frac{e^{\lambda - \lambda L d \beta \bar{x}(k)}}{e^{\lambda - \lambda L d \beta \bar{x}(k)}+1} = 
\begin{cases}
0, \quad \text{if } \bar{x}(k) > \frac{1}{L\beta d}, 
\\ \frac{1}{2}, \quad \text{if } \bar{x}(k) = \frac{1}{L\beta d},
\\ 1, \quad \text{if } \bar{x}(k) < \frac{1}{L\beta d}.
\end{cases}
\end{equation}}
\rev{Consequently, the dynamics in \eqref{eq:asis_dbmf_closed} is given by
\begin{align}\label{eq:cl_sis_rational_prior}
x_d(k+1) \!= \begin{cases}
(1\!-\delta) x_d(k), \qquad & \text{if } \bar{x}(k) \in (\frac{1}{L\beta d},1],
\\ (1\!-\delta) x_d(k) \!+ (1-x_d(k)) \beta d \sum_{t \in \DD} m_t x_t(k), & \text{if } \bar{x}(k) \in [0, \frac{1}{L\beta d})
\\ (1\!-\delta) x_d(k) \!+ (1-x_d(k)) \frac{\beta d}{2} \sum_{t \in \DD} m_t x_t(k) , & \text{if } \bar{x}(k) = \frac{1}{L\beta d}.
\end{cases} 
\end{align}}
\rev{Thus, the closed-loop dynamics admits the form of a variable structure system which switches between different modes depending on the value of $\bar{x}(k) = \sum_{t \in \DD} m_t x_t(k)$. For each degree $d$, activation probability is $0$ if the proportion of infected nodes exceeds the threshold $\frac{1}{L\beta d}$, and $1$ otherwise. In order to avoid intricate corner cases, we focus on the following dynamics in the remainder of this section:}
\rev{\begin{align}\label{eq:cl_sis_rational}
x_d(k+1) \!= \begin{cases}
(1\!-\delta) x_d(k), \qquad & \text{if } \bar{x}(k) \in (\frac{1}{L\beta d},1],
\\ (1\!-\delta) x_d(k) \!+ (1-x_d(k)) \beta d \sum_{t \in \DD} m_t x_t(k), & \text{if } \bar{x}(k) \in [0, \frac{1}{L\beta d}]. 
\end{cases} 
\end{align}}

\begin{remark}\label{remark:NPI_inter}
\rev{The dynamics stated in \eqref{eq:cl_sis_rational} is also of independent interest as it captures a class of non-pharmaceutical interventions where interactions or gatherings beyond a certain size are restricted as a function of infection prevalence.} \hfill \oprocendsymbol
\end{remark}

\rev{We now investigate the equilibria and stability of \eqref{eq:cl_sis_rational} by exploiting its structure. We first consider the more tractable case when all nodes have degree $d$.}

\begin{proposition}\label{prop:sis_cl_homo}
\rev{Consider the dynamics \eqref{eq:cl_sis_rational} where all nodes have degree $d$, i.e., $\DD = \{d\}$ and $m_d = 1$. Let $\beta, \delta \in (0,1]$, $\beta d \leq 1$ and $x_d(0) \in [0,1]$. Then,
\begin{enumerate}
\item if $\delta \geq \beta d$, then $x_d = 0$ is the unique equilibrium, and is asymptotically stable with domain of attraction $[0,1]$, 
\item if $\delta < \beta d$ and $x^*_d := 1 - \frac{\delta}{\beta d} \leq \frac{1}{L\beta d}$, then there are two equilibrium points, $0$ and $x^*_d$; furthermore, $x^*_d$ is stable with domain of attraction $(0,1]$, 
\item if $\delta < \beta d$ and $x^*_d >  \frac{1}{L\beta d}$, then $\frac{1}{L\beta d}$ acts as a sliding surface for $x_d(k)$ for $x_d(0) \in (0,1]$. \hfill \oprocendsymbol
\end{enumerate}}
\end{proposition}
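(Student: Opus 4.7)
With $\DD = \{d\}$ and $m_d = 1$ one has $\bar{x}(k) = x_d(k)$, so \eqref{eq:cl_sis_rational} reduces to a \emph{scalar switched map} that applies $x \mapsto (1-\delta)x$ in the upper regime $x > 1/(L\beta d)$ and the scalar SIS map $f(x) := (1-\delta)x + \beta d\, x(1-x)$ in the lower regime. The whole argument will rest on two algebraic identities,
\begin{equation*}
f(x) - x \;=\; \beta d\, x\,(x_d^* - x), \qquad f(x) - x_d^* \;=\; (x - x_d^*)\,(1 - \beta d\, x),
\end{equation*}
and on the fact that $f$ is concave with $f(0)=0$ and strictly monotone increasing on $[0, x_d^*]$ (its maximum lies at $(1-\delta+\beta d)/(2\beta d) > x_d^*$ since $\beta d \leq 1+\delta$). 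Invariance of $[0,1]$ under the switched map follows at once: $(1-\delta)x \leq x$ in the upper regime, while $\beta d \leq 1$ together with $f(0)=0$ and $f(1)=1-\delta$ gives $f([0,1]) \subseteq [0,1]$ in the lower regime.

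\textbf{Cases (i) and (ii).} I would first classify the fixed points: the upper branch has only $x=0$ as a fixed point (which is not in the upper regime), and the lower branch has $\{0, x_d^*\}$. In case (i), $x_d^* \leq 0$, so $0$ is the only equilibrium in $[0,1]$; in case (ii), $x_d^* \in (0, 1/(L\beta d)]$ lies in the lower regime, yielding the two equilibria listed in the statement. For any initial condition in the upper regime, geometric contraction by $1-\delta$ drives the trajectory into the lower regime in finitely many steps. In case (i), the first identity gives $f(x) - x = \beta d\, x\,(x_d^* - x) < 0$ for all $x > 0$, so $x_d(k)$ is strictly decreasing and bounded below by $0$, hence converges to $0$. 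In case (ii), for $x_d(k_0) \in (0, x_d^*)$ the first identity together with monotonicity of $f$ on $[0, x_d^*]$ yields a sequence in $(x_d(k_0), x_d^*)$ that is monotone increasing; for $x_d(k_0) \in (x_d^*, 1/(L\beta d)]$, the second identity combined with $1 - \beta d\, x \geq 0$ (using $\beta d \leq 1$ and $x \leq 1$) gives $x_d^* \leq f(x) \leq x$, hence a monotone decreasing sequence bounded below by $x_d^*$. In both sub-cases the trajectory remains in the lower regime and converges to $x_d^*$ by continuity of $f$.

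\textbf{Case (iii).} Here $x_d^* > 1/(L\beta d)$, so the endemic fixed point of $f$ is \emph{not} in the lower regime and $0$ becomes the only fixed point of the switched map in $[0,1]$. Sliding behaviour will be shown in two steps. First, any upper-regime trajectory is contracted by $1-\delta$ per step and must enter the lower regime in finitely many steps. Second, whenever $x_d(k) \in (0, 1/(L\beta d)]$ one has $x_d(k) < x_d^*$ and hence $f(x_d(k)) > x_d(k)$ by the first identity; because $f(1/(L\beta d)) > 1/(L\beta d)$ in this case, iterating $f$ strictly increases the trajectory until it crosses the threshold and re-enters the upper regime. Thus the trajectory alternates between the two regimes indefinitely and cannot converge to a fixed point; a quantitative bound on the invariant band is obtained by combining the upper-regime contraction rate $1-\delta$ with the monotone growth of $f$ below threshold.

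\textbf{Main obstacle.} The most delicate step is case (iii): unlike a continuous-time Filippov system, a discrete-time trajectory cannot remain on the surface $x = 1/(L\beta d)$ but only chatters across it, and formalizing ``sliding'' requires tracking the joint action of both regime maps over a full switching cycle and showing that the pair of switches induces a well-defined bounded band around the threshold, independent of the initial condition in $(0,1]$. The other two cases reduce to essentially standard monotone-dynamics arguments once the two identities above are in hand.
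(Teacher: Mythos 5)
Your proposal is correct and follows essentially the same route as the paper: it reduces to the scalar switched map and rests on the same two identities, $f(x)-x=\beta d\,x(x_d^*-x)$ (which the paper uses implicitly via the sign of $-\delta+(1-x_d(k))\beta d$) and $f(x)-x_d^*=(x-x_d^*)(1-\beta d\,x)$ (which the paper derives verbatim), with the same case-by-case sign/monotonicity analysis and the same finite-time exit from the upper regime. The only difference is that you establish convergence in cases (i)--(ii) directly by monotone-sequence arguments rather than invoking the cited discrete-time SIS stability theorem, and you are slightly more explicit than the paper about the finite-time threshold crossings and the resulting invariant band in case (iii).
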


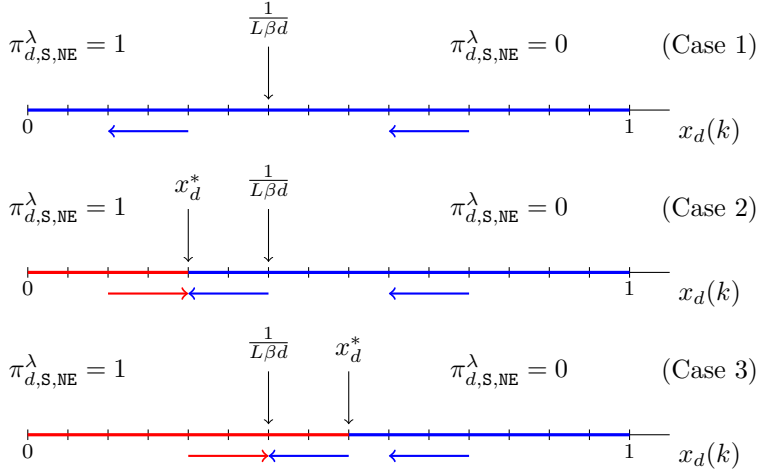
\begin{figure}
\centering
\begin{tikzpicture}[scale = 8] 
\draw (0,0) -- (16/15,0); 
\foreach \x in {0,1/15,2/15,1/5,4/15,5/15,2/5,7/15,8/15,3/5,10/15,11/15,4/5,13/15,14/15,1} \draw[shift={(\x,0)}] (0pt,2/8pt) -- (0pt,-2/8pt); 
\foreach \x in {0,1} 
\draw[shift={(\x,0)}] node[below] {\footnotesize $\x$}; 
\draw[very thick, blue] (6/15,0) -- (1,0); 
\draw[very thick, blue] (0,0pt) -- (6/15,0pt); 
\draw[->] (6/15, 3 pt)node[above]{$\frac{1}{L\beta d}$} -- (6/15,1/2 pt); 
\draw[->, thick, blue] (4/15, -1 pt) -- (2/15,-1 pt); 
\draw[->, thick, blue] (11/15, -1 pt) -- (9/15,-1 pt); 
\node at (17/15, 3 pt) {(Case 1)}; 
\node at (12/15, 3 pt) {$\pi^\lambda_{d,\St,\NEt} = 0$}; 
\node at (17/15, -1 pt) {$x_d(k)$}; 
\node at (1/15, 3 pt) {$\pi^\lambda_{d,\St,\NEt} = 1$}; 
\end{tikzpicture}
\begin{tikzpicture}[scale = 8] 
\draw (0,0) -- (16/15,0); 
\foreach \x in {0,1/15,2/15,1/5,4/15,5/15,2/5,7/15,8/15,3/5,10/15,11/15,4/5,13/15,14/15,1} \draw[shift={(\x,0)}] (0pt,2/8pt) -- (0pt,-2/8pt); 
\foreach \x in {0,1} 
\draw[shift={(\x,0)}] node[below] {\footnotesize $\x$}; 
\draw[very thick, blue] (4/15,0) -- (1,0); 
\draw[very thick, red] (0,0pt) -- (4/15,0pt); 
\draw[->] (6/15, 3 pt)node[above]{$\frac{1}{L\beta d}$} -- (6/15,1/2 pt); 
\draw[->] (4/15, 3 pt)node[above]{$x^*_d$} -- (4/15,1/2 pt); 
\node at (12/15, 3 pt) {$\pi^\lambda_{d,\St,\NEt} = 0$}; 
\node at (17/15, -1 pt) {$x_d(k)$}; 
\node at (1/15, 3 pt) {$\pi^\lambda_{d,\St,\NEt} = 1$};
\draw[->, thick, red] (2/15, -1 pt) -- (4/15,-1 pt); 
\draw[->, thick, blue] (6/15, -1 pt) -- (4/15,-1 pt);
\draw[->, thick, blue] (11/15, -1 pt) -- (9/15,-1 pt); 
\node at (17/15, 3 pt) {(Case 2)};  
\end{tikzpicture}
\begin{tikzpicture}[scale = 8] 
\draw (0,0) -- (16/15,0); 
\foreach \x in {0,1/15,2/15,1/5,4/15,5/15,2/5,7/15,8/15,3/5,10/15,11/15,4/5,13/15,14/15,1} \draw[shift={(\x,0)}] (0pt,2/8pt) -- (0pt,-2/8pt); 
\foreach \x in {0,1} 
\draw[shift={(\x,0)}] node[below] {\footnotesize $\x$}; 
\draw[very thick, blue] (8/15,0) -- (1,0); 
\draw[very thick, red] (0,0pt) -- (8/15,0pt); 
\draw[->] (6/15, 3 pt)node[above]{$\frac{1}{L\beta d}$} -- (6/15,1/2 pt); 
\node at (12/15, 3 pt) {$\pi^\lambda_{d,\St,\NEt} = 0$}; 
\node at (17/15, -1 pt) {$x_d(k)$}; 
\node at (1/15, 3 pt) {$\pi^\lambda_{d,\St,\NEt} = 1$};
\draw[->] (8/15, 3 pt)node[above]{$x^*_d$} -- (8/15,1/2 pt); 
\draw[->, thick, red] (4/15, -1 pt) -- (6/15,-1 pt); 
\draw[->, thick, blue] (8/15, -1 pt) -- (6/15,-1 pt);
\draw[->, thick, blue] (11/15, -1 pt) -- (9/15,-1 pt); 
\node at (17/15, 3 pt) {(Case 3)};   
\end{tikzpicture}
\caption{\small{\rev{Evolution of the infected proportion $x_d(k)$ under game-theoretic activation when all nodes degree $d$ in the regime $\lambda \to \infty, c > 1$ for the three cases stated in Proposition \ref{prop:sis_cl_homo}. The arrows indicate the direction along which $x_d(k)$ evolves under the dynamic \eqref{eq:cl_sis_rational}; red (blue) arrow indicates the regime where $x_d(k)$ increases (decreases). The axis is marked with red (blue) to denote regimes where $x_d(k)$ increases (decreases) when $\pi_{d,\St} = 1$.}}}
\label{fig:asis_dbmf_cl}
\end{figure}

The proof is presented in Appendix \ref{sec:appendix_sishet_proofs}. Figure \ref{fig:asis_dbmf_cl} provides a visual representation of each of the three cases. The result shows that if the loss upon infection $L$ is sufficiently small, then the nodes continue to activate with probability $1$ and the infected proportion converges to $x^*_d$ or $0$ depending on the infection and recovery rates. For sufficiently large $L$, nodes do not activate for $x_d(k) > \frac{1}{L\beta d}$. Therefore, when $x^*_d > \frac{1}{L\beta d}$, $x^*_d$ is no longer an equilibrium point of the closed loop dynamics; instead $x_d(k)$ oscillates around the threshold $\frac{1}{L\beta d}$. Therefore, by increasing $L$, the infected proportion can be maintained at $\frac{1}{L\beta d}$ under decentralized decision-making instead of $x^*_d$ which is the endemic state when susceptible nodes activate with probability $1$. Our simulations in Section \ref{sec:sim_asis} illustrate this phenomenon.

\rev{Before we analyze the equilibria of the closed-loop dynamics \eqref{eq:cl_sis_rational} with heterogeneous node degrees, we introduce some notation and intermediate results. Consider the dynamics
\begin{equation}\label{eq:allact_sis_dyn}
x_d(k+1) = (1-\delta)x_d(k) + (1-x_d(k)) \beta d \sum_{t \in \DD'} m_t x_t(k), \qquad d \in \DD' \subseteq \DD,
\end{equation}
which is akin to the DBMF approximation of the epidemic with degrees $d \in \DD'$, $\pi_{d,\St} = 1$ and $\pi_{d,\Xt} = 0$. We now state the following result.}

\begin{proposition}\label{cor:dt_allact}
\rev{Suppose $\delta, \beta \in (0,1]$, $\beta d (\sum_{t \in \DD'} m_t) \leq 1$ for all $d \in \DD'$. Then, if $\delta \geq \beta \sum_{t \in \DD'} m_t t$, then the disease-free state $x = \mathbf{0}$ is the unique equilibrium of \eqref{eq:allact_sis_dyn}, and is asymptotically stable with domain of attraction $[0,1]^{\DD'}$. If $\delta < \beta \sum_{t \in \DD'} m_t t$, then there exist two equilibria, $\mathbf{0}$ and an endemic equilibrium, $x^*(\DD')$ with $x^*_t(\DD') > 0, \forall t \in \DD'$. If in addition $\delta + \beta d (\sum_{t \in \DD'} m_t) \leq 1$, then $x^*(\DD')$ is asymptotically stable with domain of attraction $[0,1]^{|\DD'|} \setminus \{\mathbf{0}\}$.} \hfill \oprocendsymbol
\end{proposition}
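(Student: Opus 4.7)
The plan is to recognize the dynamics \eqref{eq:allact_sis_dyn} as the DBMF approximation of Definition \ref{def:asis_dbmf} restricted to the degree set $\DD'$ with the specific activation profile $\pi_{d,\St} \equiv 1$ and $\pi_{t,\Xt} \equiv 0$. Substituting these choices into $1-(1-\pi_{d,\St}\bar{d})(1-\pi_{t,\Xt}\bar{t})$ collapses the factor to $\bar{d}$, and combined with $n\bar{d} = nd/(n-1) \simeq d$ the dynamics reduces to \eqref{eq:allact_sis_dyn}. This identifies the proposition as a structural specialization of Proposition \ref{prop:sis_endemic_rho}; I will verify that the adaptation of the abstract stability results in Appendix \ref{sec:appendix_prior} to this special profile goes through under the stated hypotheses.

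The main computational step is to identify the linearization $\FF$ of \eqref{eq:allact_sis_dyn} at $\mathbf{0}$. Writing $\mathbf{d}$ and $\mathbf{m}$ for the column vectors $(d)_{d \in \DD'}$ and $(m_d)_{d \in \DD'}$, a direct differentiation gives
\begin{equation*}
\FF = (1-\delta)I + \beta\, \mathbf{d}\mathbf{m}^T,
\end{equation*}
a rank-one perturbation of $(1-\delta)I$. Its spectrum consists of $1-\delta$ (with multiplicity $|\DD'|-1$, eigenspace $\{\mathbf{m}\}^{\perp}$) and the simple eigenvalue $1-\delta+\beta\,\mathbf{m}^T\mathbf{d} = 1-\delta+\beta\sum_{t\in\DD'}m_t t$ (eigenvector $\mathbf{d}$). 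Hence $\rho(\FF) = 1-\delta+\beta\sum_t m_t t$, and the threshold $\rho(\FF) \leq 1$ coincides exactly with $\delta \geq \beta\sum_t m_t t$. In the subcritical regime, the first conclusion of Proposition \ref{prop:sis_endemic_rho} applied to this restricted DBMF yields that $\mathbf{0}$ is the unique equilibrium and is asymptotically stable on $[0,1]^{|\DD'|}$.

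For the supercritical regime $\delta < \beta\sum_t m_t t$, I would construct the endemic equilibrium explicitly. Setting $x_d(k+1) = x_d(k)$ in \eqref{eq:allact_sis_dyn} and writing $\bar{x}^* := \sum_{t \in \DD'} m_t x^*_t(\DD')$, I obtain $x_d^* = \beta d\,\bar{x}^*/(\delta + \beta d\,\bar{x}^*)$, which is strictly positive whenever $\bar{x}^* > 0$. Substituting back produces the scalar consistency equation
\begin{equation*}
1 = \sum_{d \in \DD'} \frac{\beta d\, m_d}{\delta + \beta d\, \bar{x}^*},
\end{equation*}
whose right-hand side is strictly decreasing in $\bar{x}^*$, equals $(\beta/\delta)\sum_t m_t t > 1$ at $\bar{x}^* = 0$, and tends to $0$ as $\bar{x}^* \to \infty$. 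This gives a unique positive $\bar{x}^*$ and hence a unique endemic equilibrium $x^*(\DD')$ with all components strictly positive.

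The main obstacle is that the second conclusion of Proposition \ref{prop:sis_endemic_rho} is stated under the global bound $\delta + \beta(d_{\avg}+d_{\max}) \leq 1$, whereas the hypothesis here is the weaker condition $\delta + \beta d(\sum_{t \in \DD'} m_t) \leq 1$ for all $d \in \DD'$. The looser bound suffices because the $\bar{t}\pi_{t,\Xt}$ contribution vanishes under $\pi_{t,\Xt} \equiv 0$: the only coefficient appearing when upper-bounding $x_d(k+1)$ is $\beta d \sum_t m_t$, from which invariance of $[0,1]^{|\DD'|}$ is immediate via $x_d(k+1) \leq (1-\delta)x_d(k) + (1-x_d(k)) \leq 1$, and the Jacobian entries $(1-\delta) - \beta d \bar{x}(k) + \beta d m_d(1-x_d(k))$ remain non-negative throughout $[0,1]^{|\DD'|}$ precisely under this sharper condition. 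To close the argument I would retrace the monotonicity-based contraction argument from \cite{pare2018analysis,liu2020stability} restated in Appendix \ref{sec:appendix_prior} with this sharper coefficient, obtaining the asymptotic stability of $x^*(\DD')$ on $[0,1]^{|\DD'|}\setminus\{\mathbf{0}\}$.
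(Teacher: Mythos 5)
Your proposal is correct, and the central computation is the same as the paper's: recognize \eqref{eq:allact_sis_dyn} as a networked SIS dynamic whose linearization at $\mathbf{0}$ is the rank-one perturbation $(1-\delta)\Ib + \beta\,\mathbf{d}\mathbf{m}^\top$ with spectral radius $1-\delta+\beta\sum_{t\in\DD'}m_t t$, and then appeal to the known discrete-time SIS equilibrium/stability results. The one place where you take a longer road is the stability of the endemic state: the ``main obstacle'' you identify (that Proposition \ref{prop:sis_endemic_rho} assumes $\delta+\beta(d_{\avg}+d_{\max})\leq 1$) only arises because you route through Proposition \ref{prop:sis_endemic_rho}, whose hypotheses were coarsened to cover general activation profiles. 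The paper instead instantiates Theorem \ref{theorem:dt_sis_prior} directly with $h=1$, $\delta_i=\delta$ and $[B]_{d,t}=\beta d m_t$; its per-row conditions $h\sum_j\beta_{ij}\leq 1$ and $h(\delta_i+\sum_j\beta_{ij})\leq 1$ then read off exactly as $\beta d\sum_{t\in\DD'}m_t\leq 1$ and $\delta+\beta d\sum_{t\in\DD'}m_t\leq 1$, and irreducibility of $B$ is immediate since all its entries are positive. So there is no need to retrace the contraction argument of \cite{pare2018analysis,liu2020stability} with a sharper coefficient --- the cited theorem already delivers the conclusion under precisely the stated hypotheses. Your explicit construction of the endemic equilibrium via the scalar consistency equation $1=\sum_{d\in\DD'}\beta d m_d/(\delta+\beta d\bar{x}^*)$ is sound and is in fact the same computation the paper performs later in the proof of Proposition \ref{prop:endemic_monotone}; here it is not needed for existence, which Theorem \ref{theorem:dt_sis_prior} supplies, but it does no harm and makes the argument more self-contained.
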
  

\rev{The proof is presented in Appendix \ref{sec:appendix_sishet_proofs}. We do not require $\sum_{t \in \DD'} m_t$ to be $1$ for the above result. Let the mass of infected nodes (across all degrees) at the endemic equilibrium of \eqref{eq:allact_sis_dyn} be denoted by $\bar{x}^*(\DD') = \sum_{t \in \DD'} m_t x^*_t(\DD')$. We now state the following useful result with proof presented in Appendix \ref{sec:appendix_sishet_proofs}.} 

\begin{proposition}\label{prop:endemic_monotone}
\rev{Let $\DD$ be a set of degrees with the mass of nodes with degree $d$ being $m_d$, and let $\delta, \beta \in (0,1]$, $\beta d (\sum_{t \in \DD} m_t) \leq 1$ for all $d \in \DD$. Let $\DD' \subseteq \DD$ with the mass of nodes with degree $d \in \DD'$ being $m_d$ as before. If the epidemic dynamics \eqref{eq:allact_sis_dyn} has an endemic state for degrees $\DD'$, then there exists an endemic state for the dynamics \eqref{eq:allact_sis_dyn} with degrees being $\DD$, and $\bar{x}^*(\DD') \leq \bar{x}^*(\DD)$.} \hfill \oprocendsymbol
\end{proposition}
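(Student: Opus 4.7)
The plan is to reduce the endemic equilibrium condition of \eqref{eq:allact_sis_dyn} to a single scalar fixed-point equation in the aggregate infected mass $\bar{x}$, and then exploit monotonicity of the associated function both in $\bar{x}$ and in the index set $\DD'$. First, I would set $x_d(k+1) = x_d(k) = x_d$ in \eqref{eq:allact_sis_dyn} restricted to any degree set $\DD'$, giving $\delta x_d = (1 - x_d)\beta d \bar{x}(\DD')$ with $\bar{x}(\DD') = \sum_{t \in \DD'} m_t x_t$. Solving yields the component formula
\begin{equation*}
x_d = \frac{\beta d \, \bar{x}(\DD')}{\delta + \beta d \, \bar{x}(\DD')}, \qquad d \in \DD'.
\end{equation*}
Multiplying by $m_d$, summing over $d \in \DD'$, and cancelling a factor of $\bar{x}(\DD') > 0$ (which holds at an endemic equilibrium) gives the scalar consistency equation $f_{\DD'}(\bar{x}(\DD')) = 1$, where
\begin{equation*}
f_{\DD'}(\xi) := \sum_{d \in \DD'} m_d \, \frac{\beta d}{\delta + \beta d \, \xi}.
\end{equation*}

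Next, I would record two elementary monotonicity properties of $f$. For each fixed index set, $\xi \mapsto f_{\DD'}(\xi)$ is strictly decreasing on $[0,\infty)$ since each summand is. Moreover, for $\DD' \subseteq \DD$ and any $\xi \geq 0$,
\begin{equation*}
f_{\DD}(\xi) \; = \; f_{\DD'}(\xi) + \sum_{d \in \DD \setminus \DD'} m_d \, \frac{\beta d}{\delta + \beta d \, \xi} \; \geq \; f_{\DD'}(\xi),
\end{equation*}
because all added terms are nonnegative.

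For existence of an endemic state on $\DD$, I would invoke Proposition \ref{cor:dt_allact}: the assumption that \eqref{eq:allact_sis_dyn} has an endemic state on $\DD'$ is equivalent to $\delta < \beta \sum_{t \in \DD'} m_t t$, and since the latter sum only grows when we enlarge $\DD'$ to $\DD$, the existence condition $\delta < \beta \sum_{t \in \DD} m_t t$ holds a fortiori; the bound $\beta d (\sum_{t \in \DD} m_t) \leq 1$ is assumed directly. Thus Proposition \ref{cor:dt_allact} yields an endemic equilibrium for \eqref{eq:allact_sis_dyn} on $\DD$ with $\bar{x}(\DD) = \bar{x}^*(\DD) > 0$ satisfying $f_{\DD}(\bar{x}^*(\DD)) = 1$.

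For the inequality $\bar{x}^*(\DD') \leq \bar{x}^*(\DD)$, I would combine the two properties: $f_{\DD}(\bar{x}^*(\DD')) \geq f_{\DD'}(\bar{x}^*(\DD')) = 1 = f_{\DD}(\bar{x}^*(\DD))$, and since $f_{\DD}$ is strictly decreasing this forces $\bar{x}^*(\DD') \leq \bar{x}^*(\DD)$. The main obstacle I anticipate is purely technical rather than conceptual: making sure the cancellation of $\bar{x}$ and the scalar reduction are fully justified (i.e., $\bar{x}^*(\DD') > 0$ implies $x_d^* > 0$ for every $d \in \DD'$ via the component formula), so that working with $f$ truly captures all endemic equilibria. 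Everything else reduces to the two monotonicity observations above.
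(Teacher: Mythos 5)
Your proposal is correct and follows essentially the same route as the paper: reduce the endemic equilibrium to the scalar consistency equation $\sum_{d} m_d \beta d /(\delta + \beta d \,\bar{x}^*) = 1$, get existence on $\DD$ from the threshold condition of Proposition \ref{cor:dt_allact} growing with the degree set, and deduce the ordering of $\bar{x}^*$ from monotonicity of the summands. The only cosmetic difference is that you compare $f_{\DD}$ at the two aggregate values while the paper compares $f_{\DD'}$; the two are symmetric forms of the same inequality chain.
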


We are leverage the above results to analyze the closed-loop dynamics \eqref{eq:cl_sis_rational} when node degrees are heterogeneous. Without loss of generality, let $\DD = \{d_1, d_2, \ldots, d_p\}$ with $d_1 < d_2 < \ldots < d_p$. Let $\DD_{j} := \{d_1, \ldots, d_{j}\}$, and let $\bar{x}^{**}(\DD_{j})$ be the proportion of infected nodes at the stable equilibrium of \eqref{eq:allact_sis_dyn} with degrees in $\DD_j$, i.e., 
\begin{equation}
\bar{x}^{**}(\DD_{j}) := \begin{cases} 0, \qquad & \text{if } \delta \geq \beta \sum_{d \in \DD_j} m_{d} d, 
\\ \bar{x}^{*}(\DD_{j}) = \sum_{d \in \DD_j} m_{d} x^*_{d}(\DD_j), \qquad & \text{if } \delta < \beta \sum_{d \in \DD_j} m_{d} d.
\end{cases}
\end{equation}

We now state the following main result.

\begin{theorem}\label{thm:dt_cl_sis_main}
\rev{Consider the DBMF approximation of the closed-loop SIS epidemic dynamics \eqref{eq:cl_sis_rational}. Let $\beta, \delta \in (0,1]$ and $\beta d_p \leq 1$. Then, 
\begin{enumerate}
\item if $\delta \geq \beta \sum_{d \in \DD} d m_d$, then $x = \mathbf{0} \in \Rb^{p}$ is the unique equilibrium of the dynamics, and it is asymptotically stable with region of attraction $[0,1]^p$. 
\item Otherwise, let $d_r \in \{d_1, \ldots, d_p\}$ be the largest degree with $\bar{x}^{**}(\DD_{r}) \leq \frac{1}{L\beta d_r}$. Then, there exists a nonzero endemic equilibrium $\mathbf{x}^*_{cl}$ if and only if $d_r = d_p$ or $d_r < d_p$ with $\bar{x}^{**}(\DD_{r}) \in \big(\frac{1}{L\beta d_{r+1}},\frac{1}{L\beta d_r}]$. Furthermore, $\mathbf{x}^*_{cl}$ is given by
\begin{equation}\label{eq:cl_endemic_final}
\mathbf{x}^*_{d,cl} := \begin{cases} x^*_{d}(\DD_r), \qquad & d \leq d_r, 
\\ 0, \qquad & d > d_r.
\end{cases} 
\end{equation} \hfill \oprocendsymbol
\end{enumerate}}
\end{theorem}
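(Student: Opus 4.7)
The plan is to reduce the analysis of the variable-structure closed-loop dynamics \eqref{eq:cl_sis_rational} to the all-activate dynamics \eqref{eq:allact_sis_dyn} restricted to initial segments of $\DD$. At any time $k$, the set of ``active'' degrees (those with $\bar{x}(k)\leq 1/(L\beta d)$) is an initial segment $\DD_j=\{d_1,\ldots,d_j\}$ because the threshold $1/(L\beta d)$ is decreasing in $d$. On this active segment the dynamics coincides with \eqref{eq:allact_sis_dyn} for $\DD_j$, while on the inactive complement it is pure exponential decay $x_d(k+1)=(1-\delta)x_d(k)$. This reduction lets me plug in Propositions \ref{cor:dt_allact} and \ref{prop:endemic_monotone} in both parts of the theorem.

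For Part 1, uniqueness of the disease-free equilibrium is immediate: at any equilibrium $\mathbf{x}^*$ with active set $\DD_j$, the inactive components give $x_d^*=0$, and the active components satisfy the equilibrium equation of \eqref{eq:allact_sis_dyn} restricted to $\DD_j$. Since $\delta\geq\beta\sum_{d\in\DD}m_d d\geq\beta\sum_{d\in\DD_j}m_d d$, Proposition \ref{cor:dt_allact} forces $\mathbf{x}^*|_{\DD_j}=\mathbf{0}$. For global attractivity I would use $V(k):=\bar{x}(k)$. Expanding the dynamics gives
\begin{equation*}
\bar{x}(k+1)=\bar{x}(k)\Big[1-\delta+\beta\sum_{d\in J(k)}m_d d\,(1-x_d(k))\Big],
\end{equation*}
and bounding the sum by $\sum_{d\in\DD}m_d d-d_1\bar{x}(k)$ (using $d\geq d_1$ componentwise together with $J(k)\subseteq\DD$) and invoking $\beta\sum_d m_d d\leq\delta$ yields $V(k+1)\leq V(k)-\beta d_1 V(k)^2$. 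If $V(k)\to L>0$, the per-step decrement is at least $\beta d_1 L^2>0$, forcing $V$ to become negative in finitely many steps, a contradiction. Hence $V(k)\to 0$, and since $x_d(k)\leq V(k)/m_d$ for each $d$, every component vanishes.

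For Part 2, the same partition argument shows that any nonzero equilibrium must have $x_d^*=x_d^*(\DD_j)$ for $d\leq d_j$ and $x_d^*=0$ for $d>d_j$, where $x^*(\DD_j)$ is the nonzero endemic of \eqref{eq:allact_sis_dyn} restricted to $\DD_j$ guaranteed by Proposition \ref{cor:dt_allact}. Self-consistency of the partition demands (b) $\bar{x}^*(\DD_j)\leq 1/(L\beta d_j)$ and, if $j<p$, (c) $\bar{x}^*(\DD_j)>1/(L\beta d_{j+1})$. By the definition of $d_r$, condition (b) holds exactly for $j\leq r$, so $j>r$ is ruled out. For $j<r$, Proposition \ref{prop:endemic_monotone} gives $\bar{x}^{**}(\DD_j)\leq\bar{x}^{**}(\DD_{j+1})\leq 1/(L\beta d_{j+1})$ (the second inequality because $j+1\leq r$ lies in the range where (b) holds), which violates (c). Hence $j=r$ is the unique candidate, and it yields a nonzero equilibrium exactly when (c) holds at $j=r$: vacuously when $r=p$, and as $\bar{x}^{**}(\DD_r)>1/(L\beta d_{r+1})$ when $r<p$. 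Combining with (b) produces the interval condition stated in the theorem, and the formula \eqref{eq:cl_endemic_final} is read off from the construction.

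I expect the Part 1 global-attractivity step to be the main obstacle. The map in \eqref{eq:cl_sis_rational} is not componentwise monotone in $x$ when $\beta d_p>1-\delta$, so a direct comparison with the all-activate dynamics does not go through in one line, and the threshold switching at $1/(L\beta d)$ rules out a smooth Lyapunov analysis. The key trick is to collapse the analysis onto the scalar aggregate $\bar{x}(k)$ and to exploit the strictly positive minimum degree $d_1$ to extract the quadratic decrement $\beta d_1\,\bar{x}(k)^2$, which handles the boundary case $\delta=\beta\sum_d m_d d$ uniformly and sidesteps the need for a contractive linearization.
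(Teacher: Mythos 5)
Your proof is correct, and Part 2 follows essentially the paper's own route: the same reduction of any equilibrium to an active initial segment $\DD_j$, the same invocation of Proposition \ref{cor:dt_allact} to pin down $x^*(\DD_j)$ on the active set, and the same use of the monotonicity in Proposition \ref{prop:endemic_monotone} to rule out $j\neq r$ and extract the interval condition. Where you genuinely diverge is Part 1. The paper disposes of it in two lines by observing that \eqref{eq:allact_sis_dyn} with $\DD'=\DD$ upper-bounds \eqref{eq:cl_sis_rational} componentwise in one step and then citing Proposition \ref{cor:dt_allact}; as you correctly note, iterating such a one-step bound requires the dominating map to be order-preserving, which can fail here when $\beta d_p>1-\delta$, so the paper's comparison is terser but leaves that monotonicity step implicit. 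Your alternative --- collapsing onto the scalar aggregate $\bar{x}(k)$ and extracting the decrement $\bar{x}(k+1)\leq\bar{x}(k)-\beta d_1\bar{x}(k)^2$ via $\sum_{d\in J(k)}m_d d(1-x_d(k))\leq\sum_{d\in\DD}m_d d-d_1\bar{x}(k)$ --- is self-contained, handles the boundary case $\delta=\beta\sum_d m_d d$ uniformly, and sidesteps the switching structure entirely; the price is that you should state explicitly that $[0,1]^p$ is invariant (immediate from $\beta d_p\leq 1$) so the componentwise bound $x_d(k)\leq\bar{x}(k)/m_d$ and the nonnegativity of the bracketed factor are licensed, and you might rename the limit $L$ of $V(k)$ to avoid clashing with the loss parameter. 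Both approaches deliver the same conclusion; yours buys rigor on the attractivity claim at the cost of a short extra computation.
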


The proof of the above theorem is presented in Appendix \ref{sec:appendix_sishet_proofs}. 

\begin{remark}
We conclude this section with two open questions. 
\begin{itemize} 
\item While the above result characterizes the existence of an endemic equilibrium, we conjecture that $\mathbf{x}^*_{cl}$ is asymptotically stable with domain of attraction $[0,1]^p \setminus \{\mathbf{0}\}$, possibly under mild additional assumptions. 
\item If $\bar{x}^{**}(\DD_{r}) \leq \frac{1}{L\beta d_{r+1}}$, we conjecture that $\frac{1}{L\beta d_{r+1}}$ acts as a sliding surface for the closed-loop dynamics. We hope for further explorations along these lines in follow up work. \hfill \oprocendsymbol
\end{itemize}
\end{remark}

\section{Simulation Results}
\label{section:simulation}

In this section, we illustrate the evolution of the epidemic states under game-theoretic activation via extensive simulations. 

\subsection{A-SIR epidemic with homogeneous node degrees} 

In order to isolate the impacts of game-theoretic activation decisions and cost parameters ($\lambda, c$ and $L$) from the impacts of asymptomatic carriers and heterogeneous node degrees, we first consider the special case of A-SIR epidemic with homogeneous node degrees. 

\begin{figure*}[tb]
	\centering
	\includegraphics[scale=0.5]{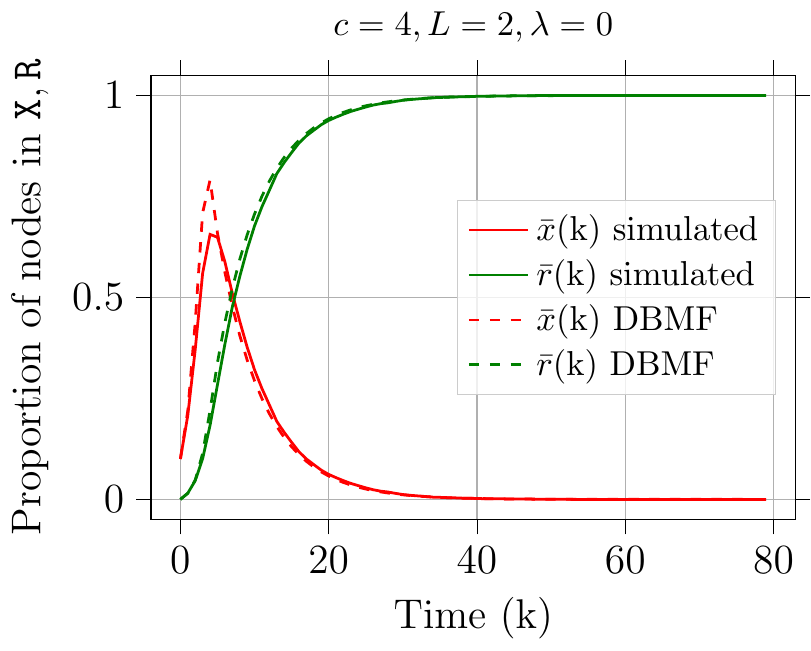}
	\includegraphics[scale=0.5]{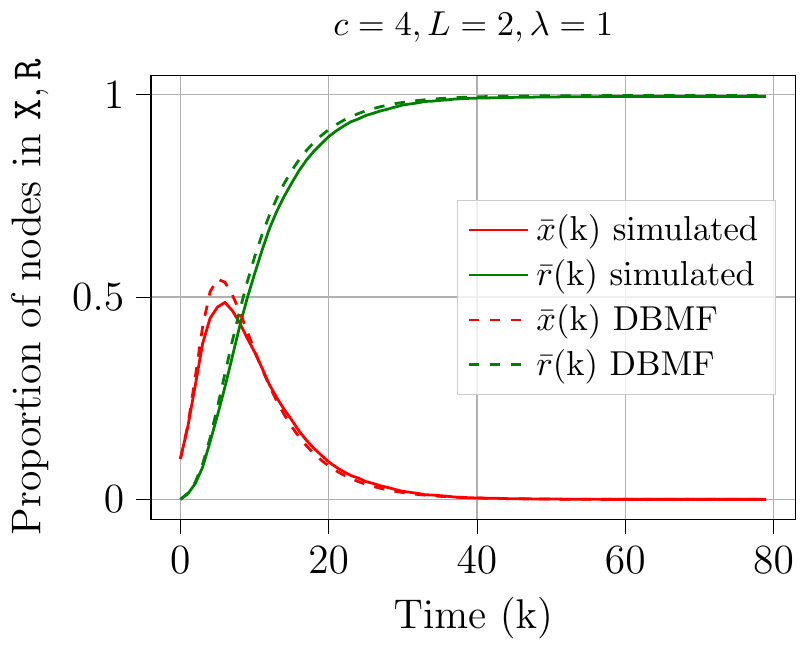}
	\includegraphics[scale=0.5]{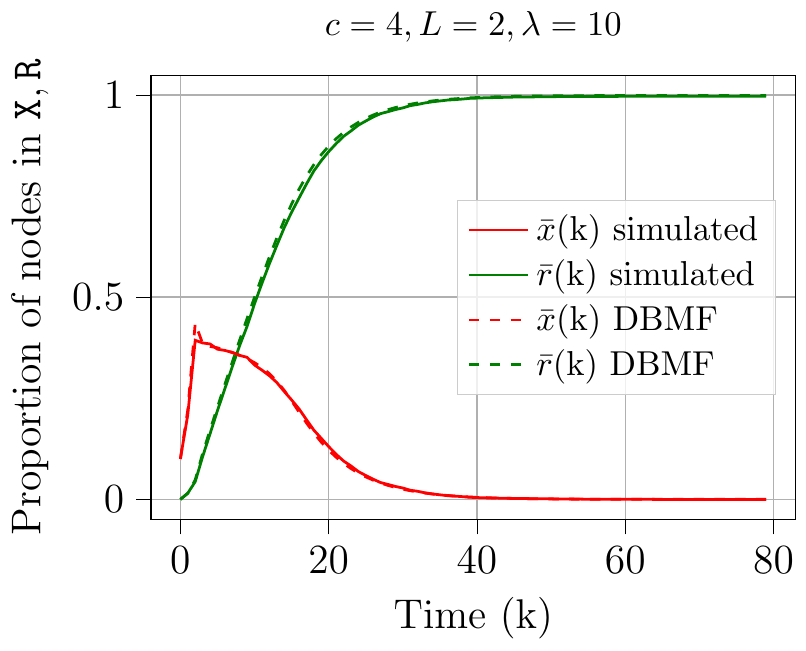} \\[2mm]
	\includegraphics[scale=0.5]{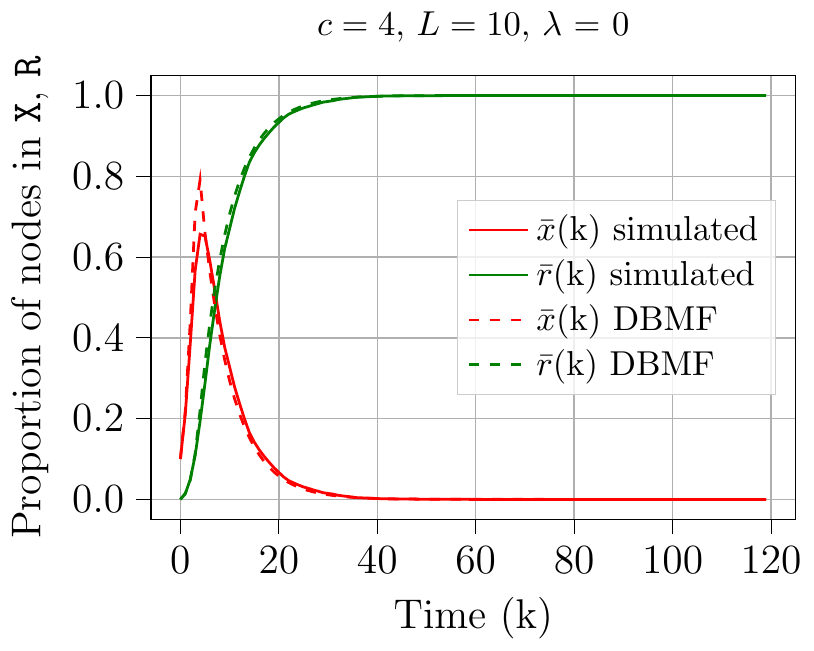}
	\includegraphics[scale=0.5]{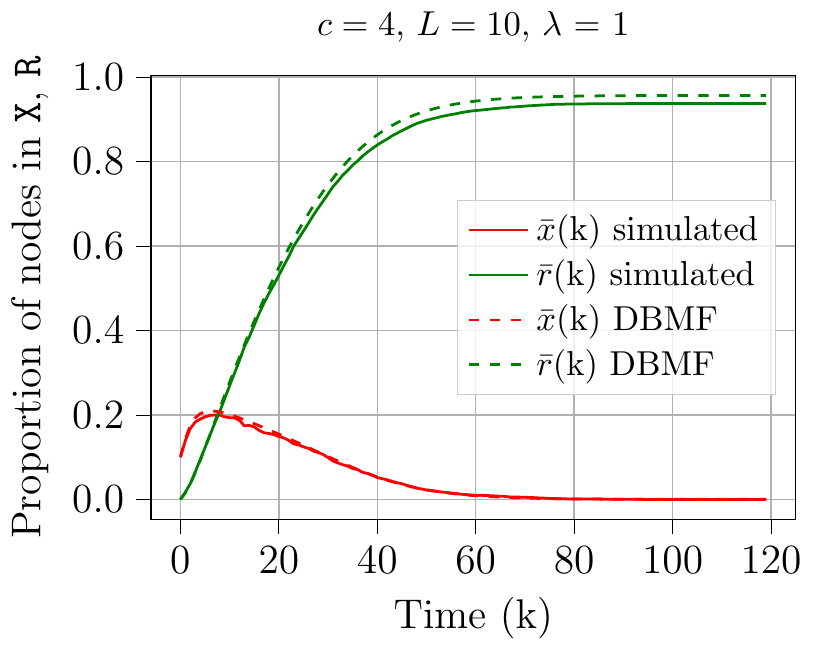}
	\includegraphics[scale=0.5]{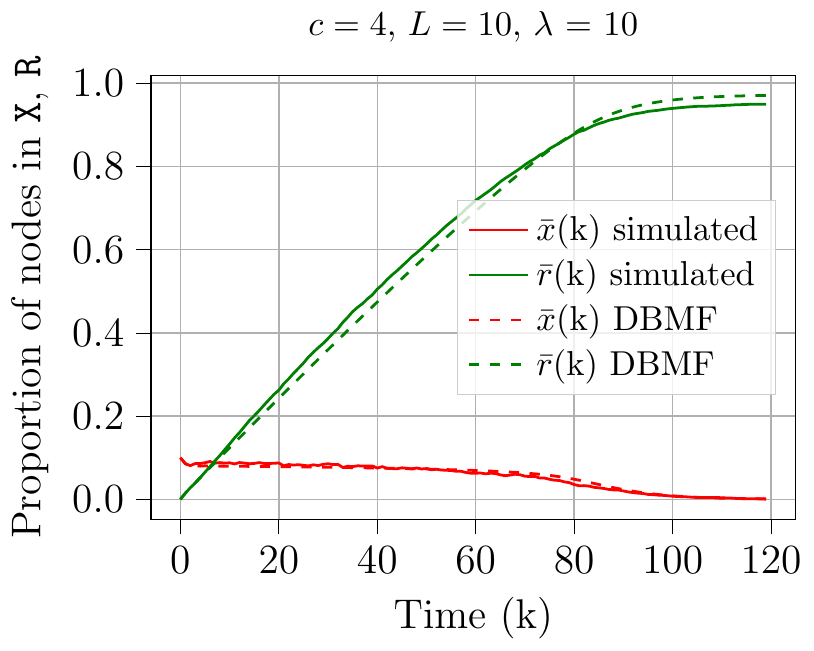} \\[2mm]
	\includegraphics[scale=0.5]{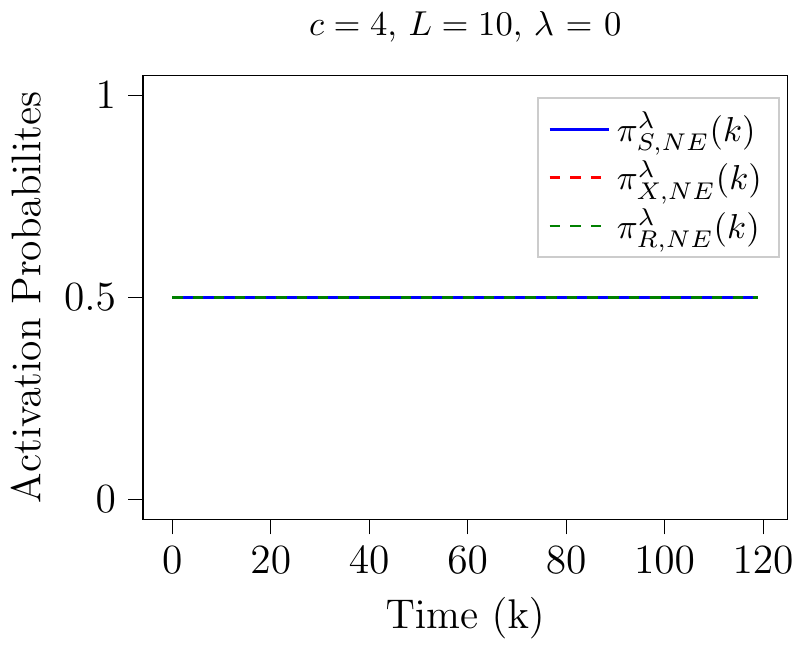}
	\includegraphics[scale=0.5]{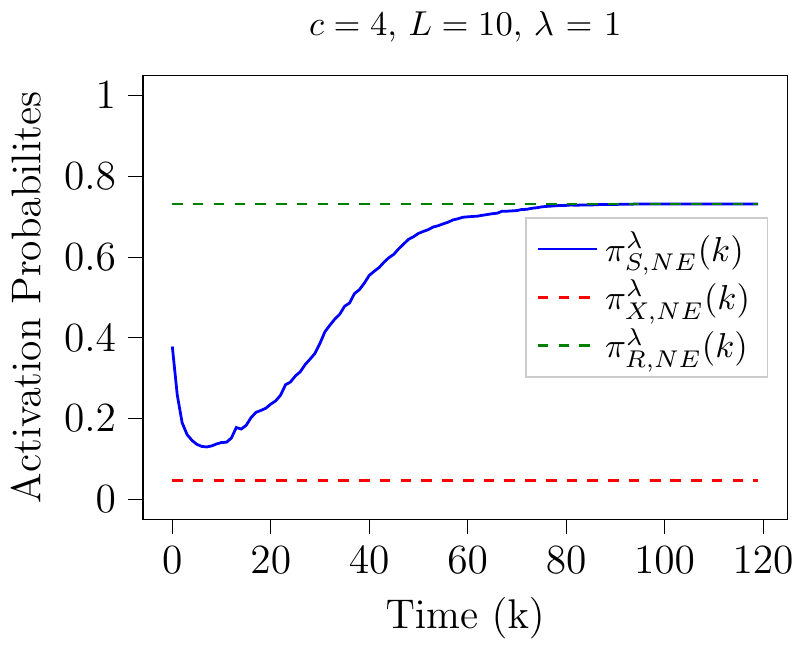}
	\includegraphics[scale=0.5]{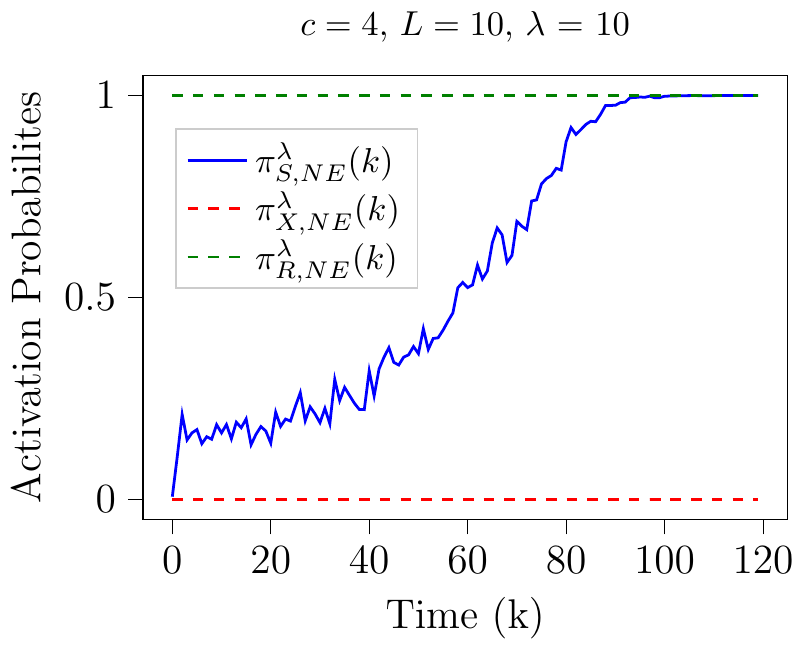}
	\caption{\small Evolution of the proportion of nodes in infected ($\Xt$) (in red) and recovered ($\Rt$) (in green) states, both actual proportions averaged over $75$ independent runs (solid lines) and under the DBMF approximation (dashed lines), are shown. The bottom row shows the evolution of the activation probabilities at the QRE in the A-SIR epidemic model.}
	\label{fig:asir_detailed}
\end{figure*}

We consider a set of $n=100$ nodes and set the rate of infection $\beta = 0.3$, rate of recovery $\delta = 0.15$ and degree $d=5$ for all the nodes. We assume that at $k=0$, 10 nodes are infected and the remaining are susceptible, and simulate the evolution of the epidemic states when the activation probabilities of the nodes are given by the QRE characterization in \eqref{eq:qre_asir_prob}. In Figure \ref{fig:asir_detailed}, we plot the proportion of nodes in infected ($\Xt$) and recovered ($\Rt$) states; \rev{both actual proportions averaged over $75$ independent runs (shown in solid lines) and under the DBMF approximation (shown in dashed lines).} The evolution of the state-dependent activation probabilities for different values of $\lambda$ are also shown in the bottom panel. The proportions under the DBMF approximation closely track the actual proportions of nodes in different epidemic states in all cases. 

From the figures in the left panel of Figure \ref{fig:asir_detailed}, we observe that if $\lambda = 0$, the activation probability is constant at $0.5$ for all nodes irrespective of their epidemic states. This corresponds to the setting where nodes act completely randomly (independent of the utility they may obtain by activating or not activating).  Thus, the state evolution is independent of parameters $c$ and $L$. In this case, there is a sharp initial increase in the proportion of infected nodes and almost all nodes recover by $k=40$. 

As $\lambda$ increases, nodes are increasingly likely to take actions that give a higher utility. Nodes that have recovered receive a higher utility upon activation and thus, the activation probability for recovered nodes is approximately $1$ when $\lambda = 10$. The equilibrium activation probability of infected nodes depends on the parameter $c$ which captures the penalty or cost of activation when a node is infected (or incentives offered to self-isolate). When $c > 1$, the activation probability is approximately $0$ for sufficiently high $\lambda$ . The plots in the bottom row of Figure \ref{fig:asir_detailed} illustrate this for $c = 4$. 

The activation probabilities of susceptible nodes at the QRE depend on the proportion of infected nodes ($\bar{x}$) and is illustrated in Figure \ref{fig:asir_home_actprob} for different values of $\lambda$ and $L$. We observe the following characteristic from Figure \ref{fig:asir_home_actprob}: 
\begin{itemize}
\item as $\bar{x}$ increases, susceptible nodes encounter a larger risk of becoming infected when they activate, and consequently choose a smaller activation probability,
\item when $\lambda$ is large, there is sharp transition in equilibrium activation probability around the value of $\bar{x}$ at which the nodes are indifferent between activating or not activating (i.e., around $\bar{x}$ where $\Delta u_{d,\St}(\pi^\lambda_{\NEt},\bar{x}) = 0$), and
\item this threshold is inversely proportional to the magnitude of loss $L$.
\end{itemize}

The above characteristics impact the evolution of epidemic states as shown in Figure \ref{fig:asir_detailed}. The plots in the top row show the evolution for $c = 4$ and $L = 2$ which corresponds to the plot in the left panel of Figure \ref{fig:asir_home_actprob}. As the proportion of infected nodes exceeds $0.4$, the activation probability of susceptible nodes decreases to $0$, and consequently, the epidemic dies down. When $\lambda = 10$, the activation probability drops sharply which is reflected in the sharp decline in the proportion of nodes in state $\Xt$. 

\begin{figure*}[tb]
	\centering
	\includegraphics[scale=0.6]{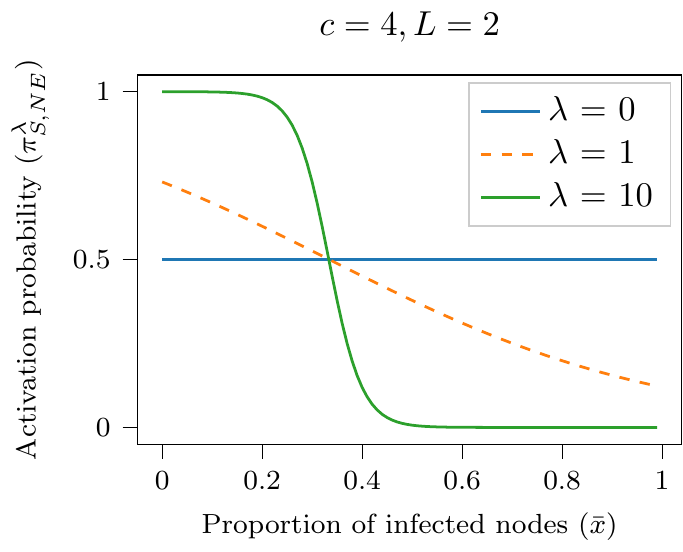}
	\includegraphics[scale=0.6]{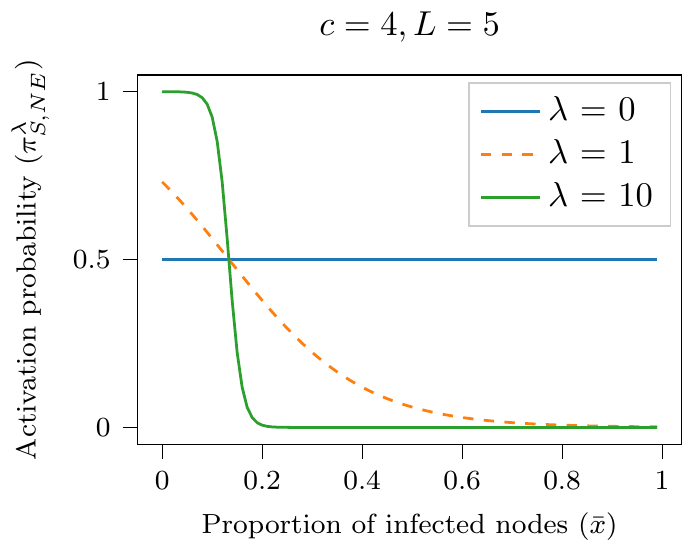}
	\includegraphics[scale=0.6]{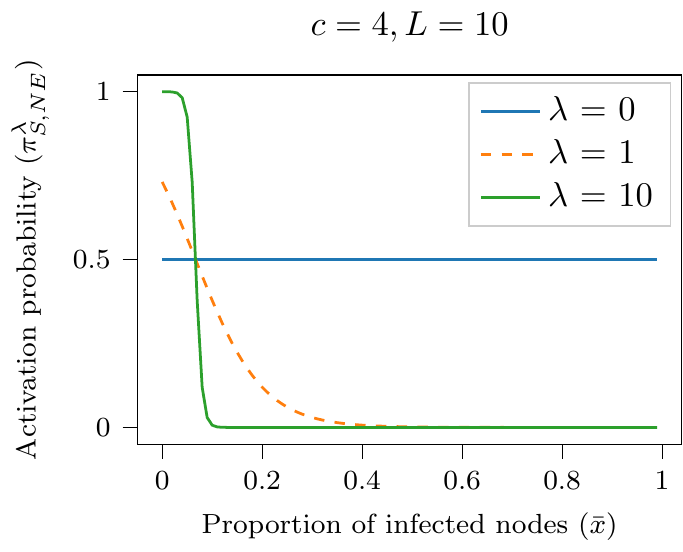}
	\caption{\small Activation probability for susceptible nodes at the QRE as a function of proportion of infected nodes for different values of $\lambda$ and $L$}
	\label{fig:asir_home_actprob}
\end{figure*}

When $c = 4$ and $L = 10$, the evolution of states and the activation probability with time is shown in the middle and bottom rows of Figure \ref{fig:asir_detailed}. Note from the plot on the right panel of Figure \ref{fig:asir_home_actprob} that the transition in activation probability occurs around the threshold $\bar{x} = 0.1$. Specifically, when $\bar{x}(k)$ is slightly larger than the threshold, the susceptible nodes reduce their activation probabilities. However, this reduction results in $\bar{x}(k)$ decreasing below the threshold which then leads to susceptible nodes increasing their activation probabilities. As a result, the proportion of nodes in state $\Xt$ fluctuates around the threshold value for a significant amount of time; any reduction in infected proportion is counteracted by an increase in activation probabilities by self-interested nodes. The plots in the right most panel in the middle and bottom rows in Figure \ref{fig:asir_detailed} clearly illustrate this phenomenon. Thus, activation decisions made by self-interested nodes can lead to a {\it flattening} of the proportion of infected nodes. 

\rev{The above discussion has the following implications for policy makers. 
\begin{itemize}
\item Recall that the parameter $c$ captures the penalty imposed on infected nodes for breaking quarantine or isolation protocols. Our results suggest that this cost should be set higher than the benefit of activation for effective adherence to such protocols. 
\item Our results also suggest that the infected proportion tends to remain in the vicinity of the threshold at which susceptible nodes significantly reduce their activation probabilities. While a larger value of $L$ leads to a smaller threshold, and hence a smaller peak infection level, it could lead to the epidemic sustaining in the population for a longer duration. Therefore, it is critical to set this parameter appropriately to keep the peak infection low depending on the available healthcare facilities. This parameter could also be adapted as healthcare resources are augmented. 
\end{itemize}}


\subsection{A-SAIR epidemic} 

In the previous subsection, we examined the impacts of cost parameters $c$ and $L$ and the logistic choice parameter $\lambda$ on the evolution of epidemic states in the A-SIR epidemic model. We now study the A-SAIR epidemic model and we focus on understanding the impacts of asymptomatic carriers and heterogeneous node degrees. 

\subsubsection{A-SAIR epidemic with homogeneous node degrees} 

\begin{figure*}[tb]
	\centering
	\includegraphics[scale=0.5]{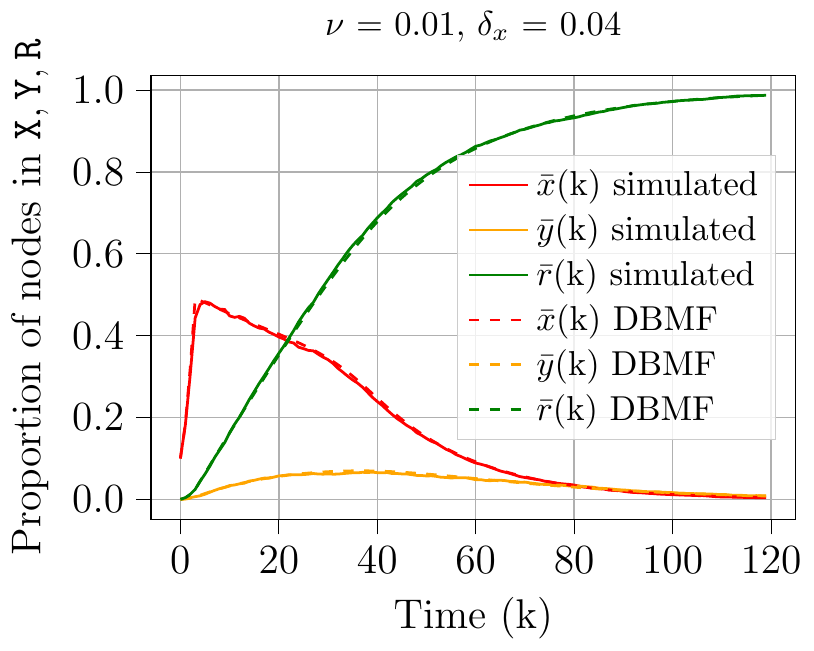}
	\includegraphics[scale=0.5]{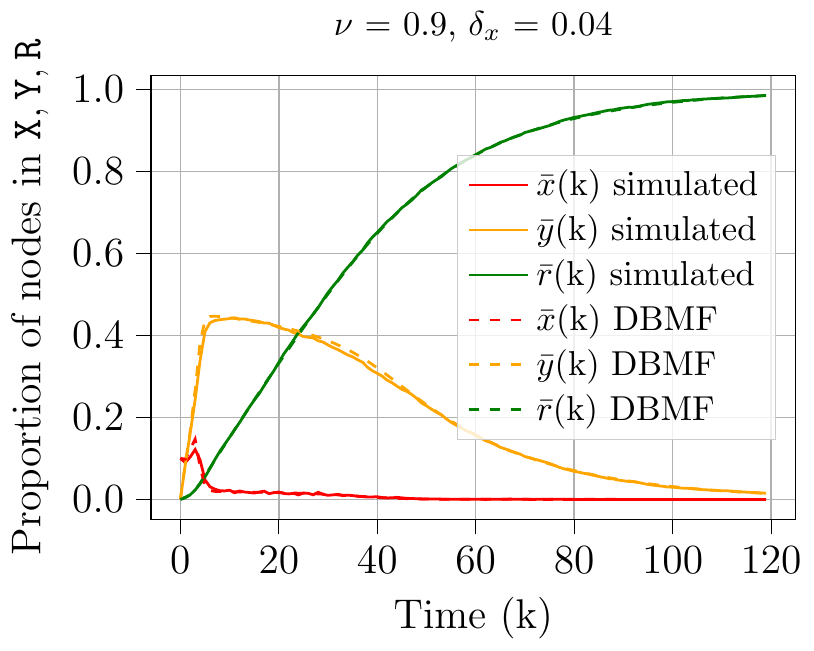}
	\includegraphics[scale=0.5]{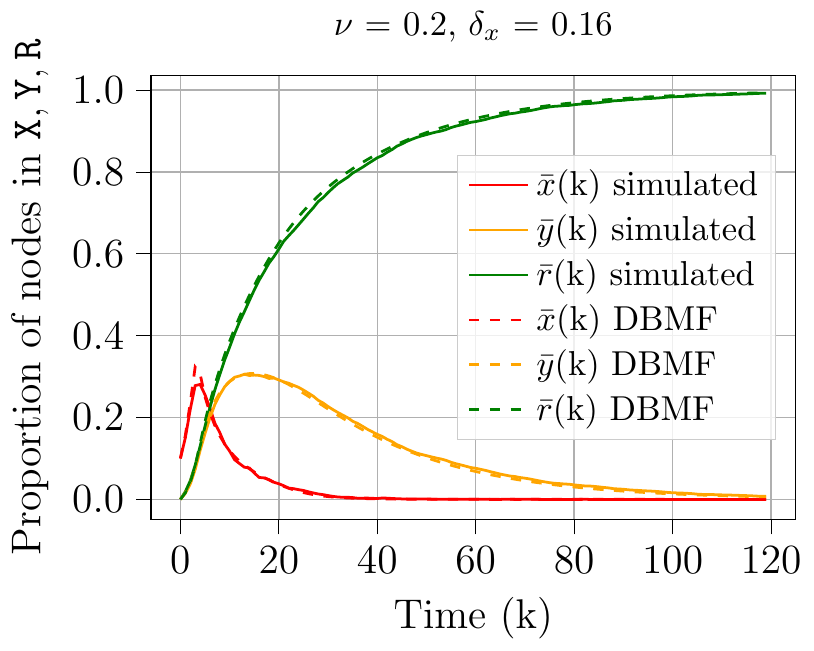} \\[2mm]
	\includegraphics[scale=0.5]{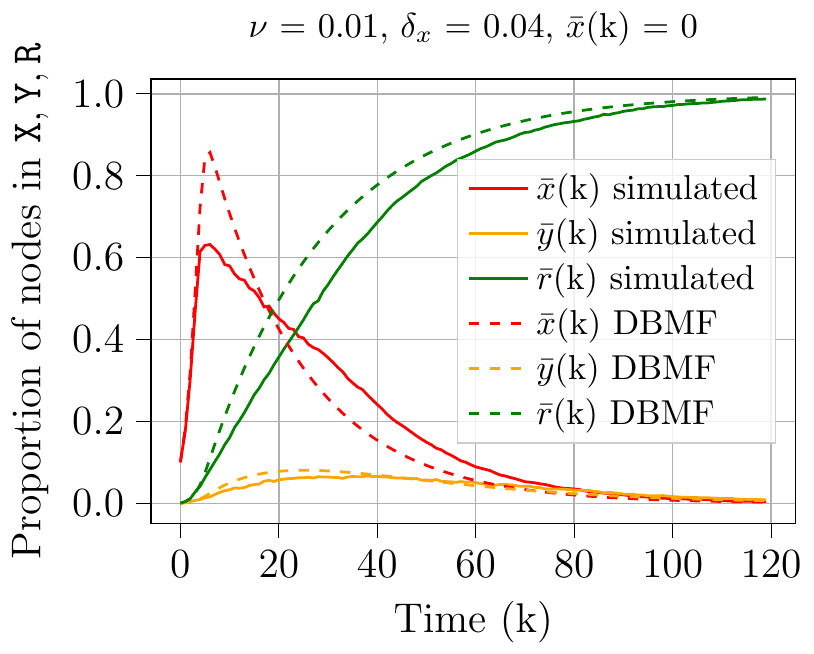}
	\includegraphics[scale=0.5]{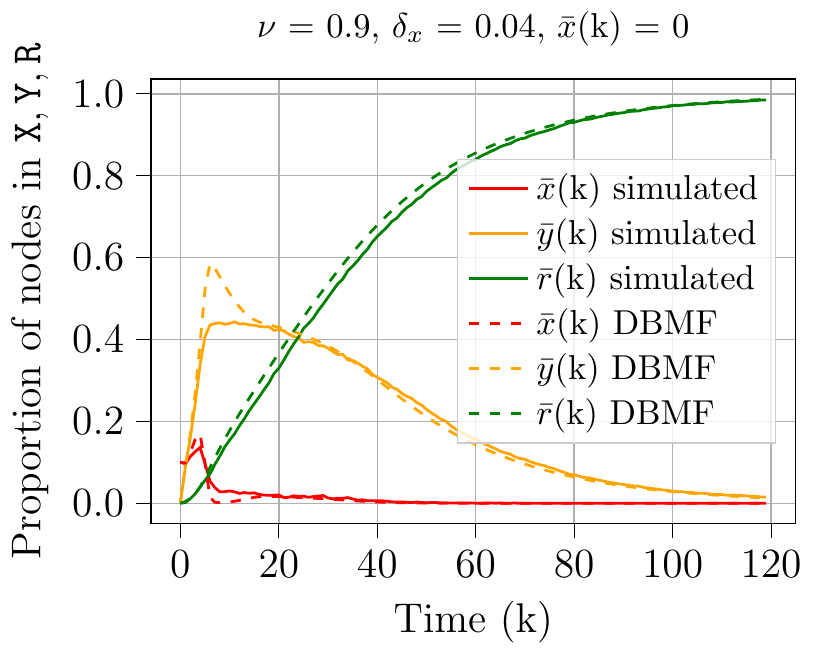}
	\includegraphics[scale=0.5]{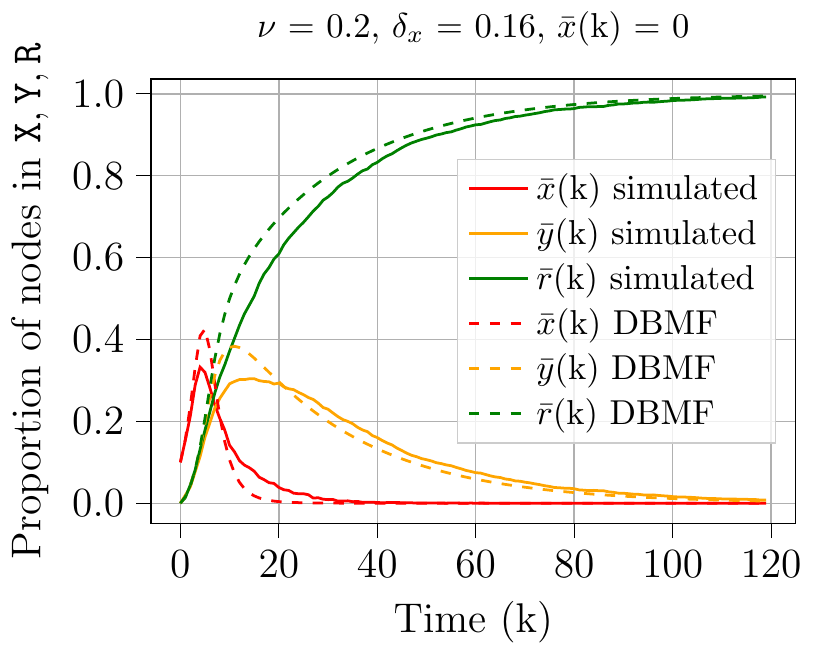} \\[2mm]
	\includegraphics[scale=0.5]{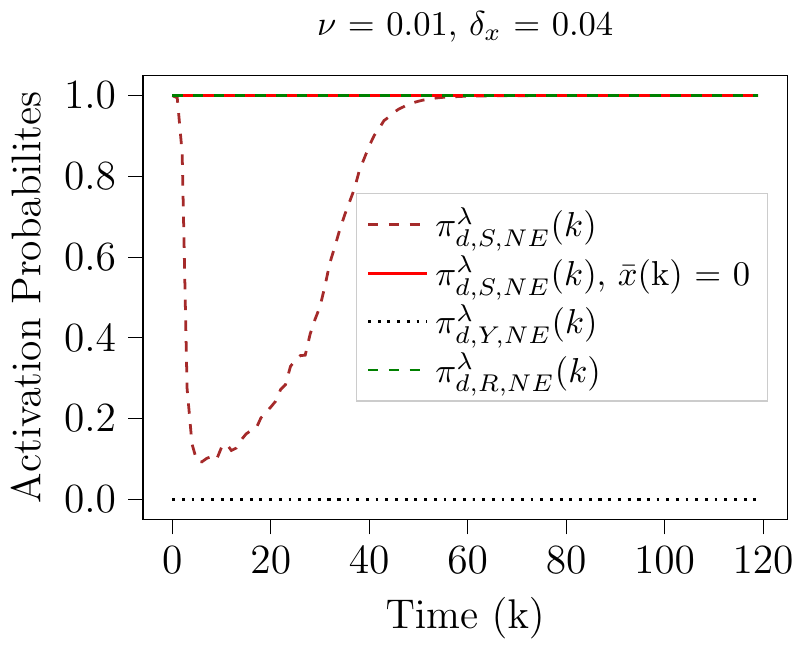}
	\includegraphics[scale=0.5]{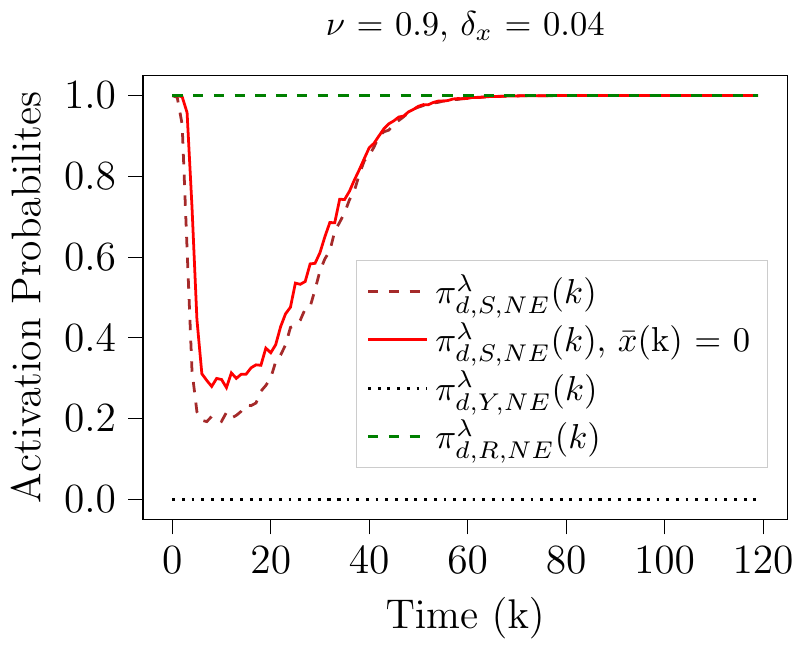}
	\includegraphics[scale=0.5]{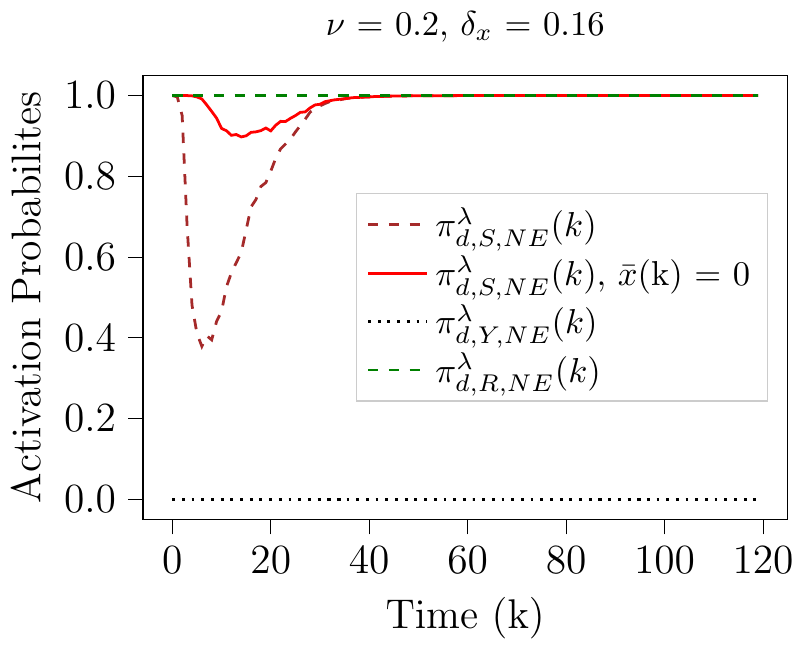}
	\caption{\small Evolution of the proportion of nodes in asymptomatic ($\Xt$), symptomatic ($\Yt$) and recovered ($\Rt$) states (both actual proportions averaged over $50$ independent runs and under the DBMF approximation) and the evolution of the state-dependent activation probabilities at the QRE in the A-SAIR epidemic model. When nodes take activation decisions without being aware of the proportion of asymptomatic nodes, we denote it by $\bar{x}(k) = 0$.}
\label{fig:asiyr_detailed}
\end{figure*}

We consider a set of $n=100$ nodes and set the infection rates $\beta_x = \beta_y = 0.1$, recovery rate $\delta_y = 0.04$ and degree $d = 5$ for all the nodes. \rev{The ratio $\beta / \gamma = 2.5$ corresponds to the basic reproduction number of several infectious diseases such as common cold and SARS \cite{world2003consensus}.} In addition, we choose $\lambda = 10, c = 2, L = 5$. Figure \ref{fig:asiyr_detailed} shows the evolution of proportion of nodes in states $\Xt, \Yt$ and $\Rt$ and the activation probabilities at the QRE (from Proposition \ref{prop:qre_asiyr}) for different values of $\delta_x$ and $\nu$. In order to examine the impacts of asymptomatic carriers on epidemic evolution, we consider a variant where nodes are not aware of the proportion of asymptomatic nodes, i.e., they choose the activation probabilities by setting $\bar{x}(k) = 0$. The evolution of the epidemic states in this case are shown in the middle row of Figure \ref{fig:asiyr_detailed}. The corresponding activation probabilities are shown in the bottom row of Figure \ref{fig:asiyr_detailed}. 

We plot the evolution of states under the DBMF approximation and by taking the average of epidemic states across $50$ independent simulations of the activity-driven epidemic model. The figures show that the DBMF approximation is largely accurate. 

The plots in the left panel of Figure \ref{fig:asiyr_detailed} correspond to when transition rate $\nu = 0.01$. The plot in the left panel of the top row shows that asymptomatic nodes rarely become symptomatic, and largely recover without ever exhibiting symptoms. The plot in the left panel of the middle row shows that if the nodes are not aware of $\bar{x}(k)$, they continue to activate with probability close to $1$, and as a result, there is a much larger peak in the asymptomatic proportion compared to when nodes are aware of $\bar{x}(k)$. 

In contrast, when $\nu = 0.9$, asymptomatic nodes quickly transition to the symptomatic state as shown in the plots in the middle column. In this case, nodes not being aware of $\bar{x}(k)$ has a more benign effect since most of the infected nodes are symptomatic. When the recovery rate for asymptomatic nodes is higher (i.e., $\delta_x = 0.16$), the epidemic dies quickly as expected and observed in the plots on the right column. 

\subsubsection{A-SAIR epidemic with heterogeneous node degrees} 

\begin{figure*}[tb]
	\centering
	\includegraphics[scale=0.5]{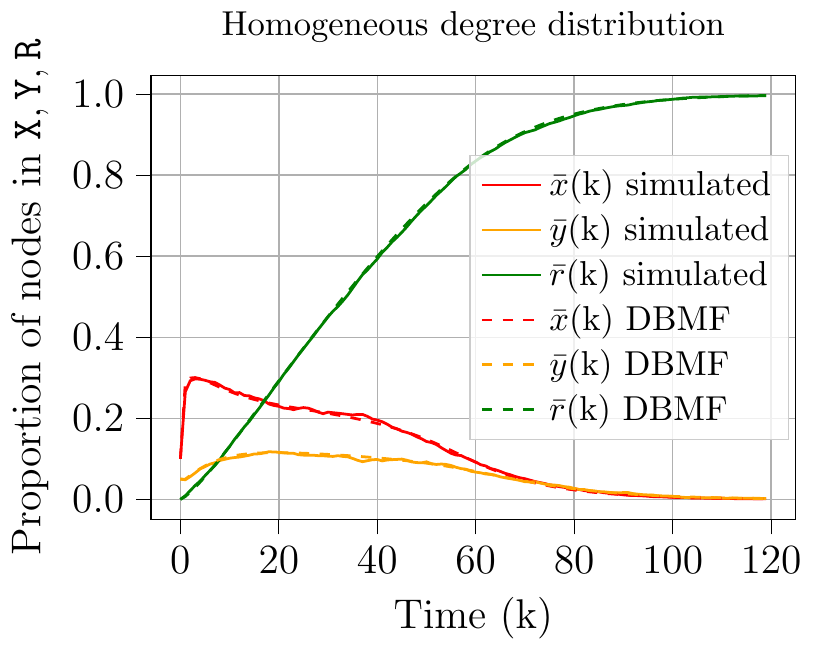}
	\includegraphics[scale=0.5]{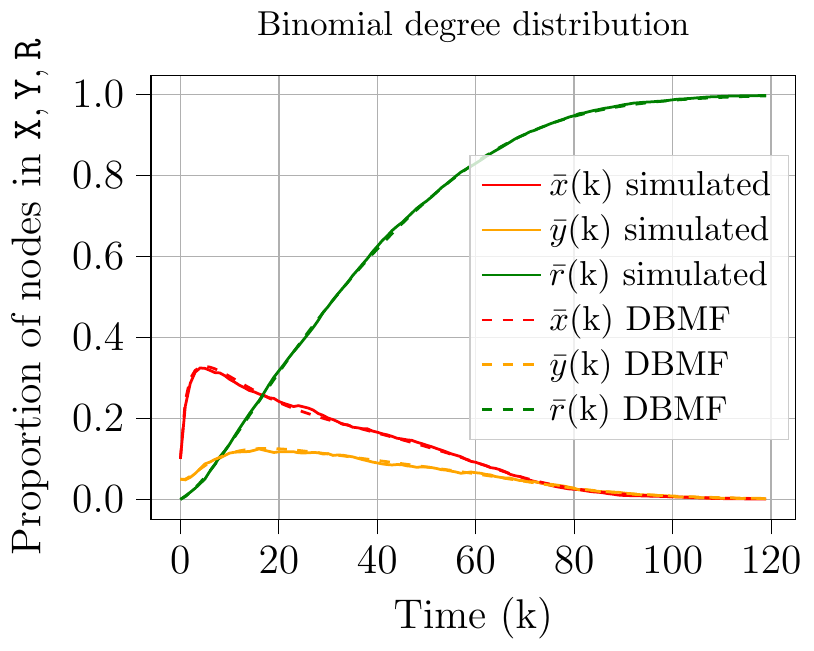}
	\includegraphics[scale=0.5]{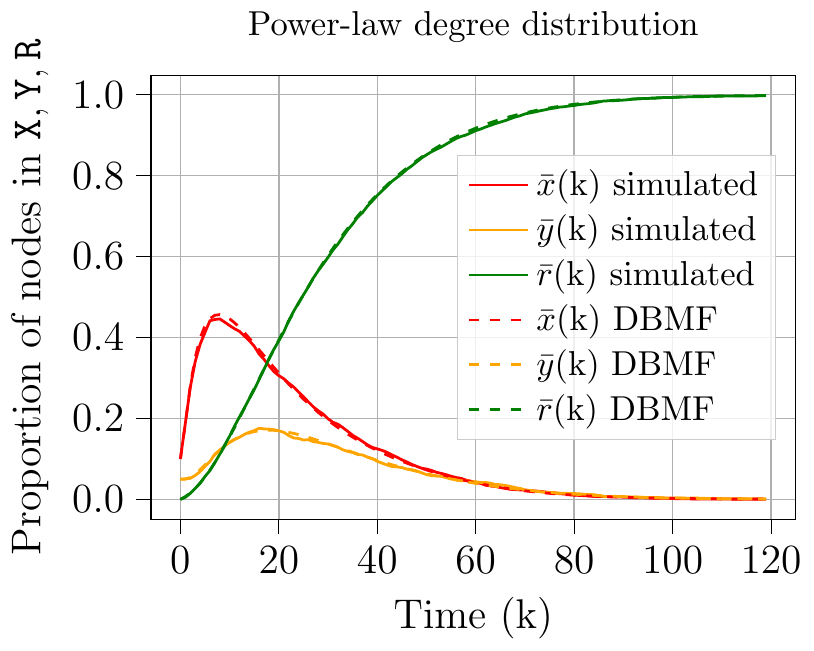} \\[2mm]
	\includegraphics[scale=0.5]{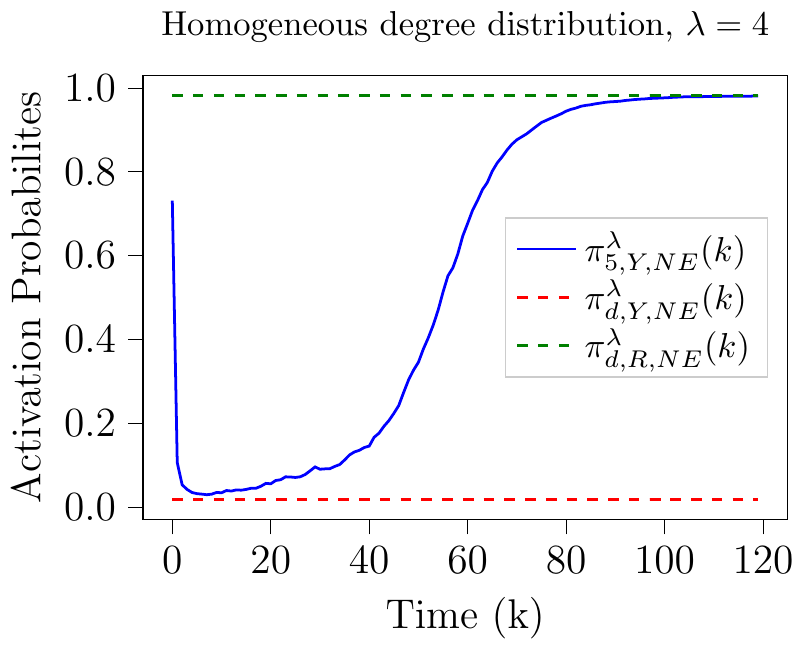}
	\includegraphics[scale=0.5]{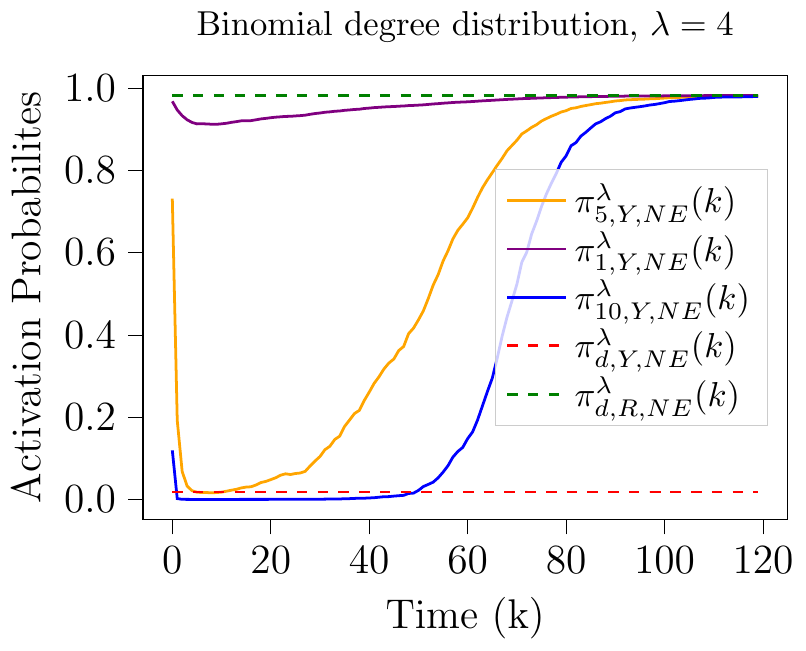}
	\includegraphics[scale=0.5]{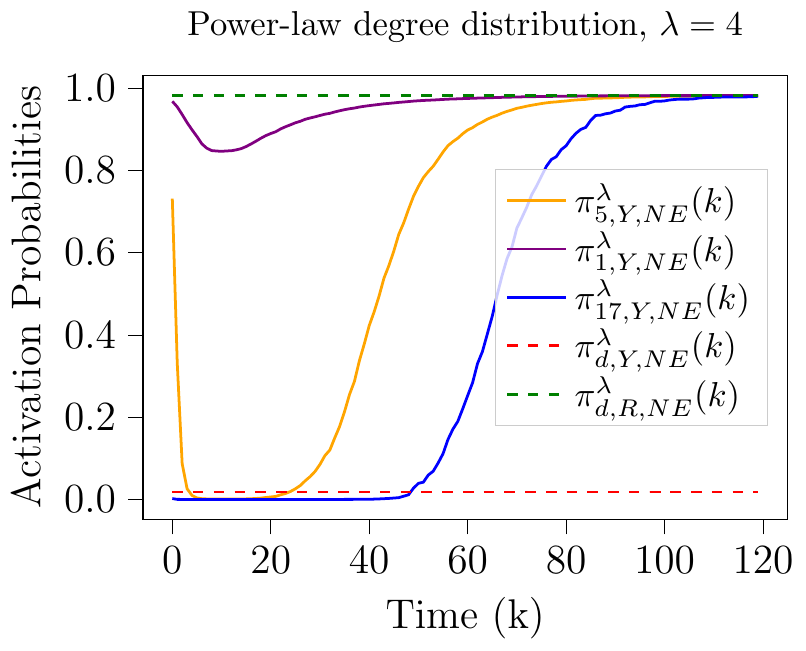}
	\caption{\small Evolution of the proportion of nodes in asymptomatic ($\Xt$), symptomatic ($\Yt$) and recovered ($\Rt$) states (both actual proportions averaged over $50$ independent runs and under the DBMF approximation) and the evolution of the state-dependent activation probabilities at the QRE in the A-SAIR epidemic model.}
	\label{fig:asir_detailed_hetero}
\end{figure*}

We now investigate the evolution of the A-SAIR epidemic in its full generality under game-theoretic activation. We consider a set of $n=100$ nodes and set $\beta_x = \beta_y = 0.25, \delta_x = 0.02, \delta_y = 0.1$ and transition rate $\nu = 0.05$. We set the logistic choice parameter $\lambda = 4$ and the loss parameters $L = 4$ and $c = 2$. We compare the epidemic evolution when all nodes have degree $5$ (the homogeneous case) with settings where degrees have a Binomial distribution and a Power-law distribution with exponent $2$. The above distributions characterize the degree distributions in several important classes of random graphs \cite{newman2010networks}. We choose the parameters of these two distributions with $d_{\max} = 20$ such that mean degree approximately $5$. Figure \ref{fig:asir_detailed_hetero} shows the results obtained with the above parameters.

The evolution of the proportion of nodes in asymptomatic ($\Xt$), symptomatic ($\Yt$) and recovered ($\Rt$) states (both actual proportions averaged over $50$ independent runs and under the DBMF approximation) and the evolution of the state-dependent activation probabilities at the QRE with time are shown in the top and bottom panels of Figure \ref{fig:asir_detailed_hetero}. Under the chosen parameters, recovered nodes activate with probability approximately $1$ and symptomatic nodes activate with probability close to $0$. 

When all nodes have homogeneous degrees (figures in the left panel), an initial increase in the infected proportion causes the activation probability of susceptible nodes to fall close to $0$ and eventually the epidemic declines. Under the Binomial distribution (middle panel), the activation probabilities depend on the degrees of the nodes as well. Specifically, nodes with small degrees continue to activate at a very high rate which leads to a larger peak in the infected proportion compared to the homogeneous case. Under the Power-law degree distribution (right panel), a significant proportion of the nodes have very small degree, and as a result, a large proportion of nodes continue to activate at a high rate leading to an even sharper peak in the infected proportion.


\subsection{A-SIS Epidemic}\label{sec:sim_asis}

We now examine the evolution of the A-SIS epidemic under game-theoretic activation. We consider $n = 100$ nodes and set the infection rate $\beta = 0.2$, recovery rate $\delta = 0.4$, degree $d = 4$, penalty parameter $c = 10$ and $\lambda = 20$. The evolution of the proportion of infected nodes, both in actual simulations and under the DBMF approximation, for different values of $L$ is shown in Figure \ref{fig:asis_sim}. 

The plots illustrate the theoretical findings in Proposition \ref{prop:sis_cl_homo}. Specifically, for the considered parameters, there exists an endemic state of the epidemic with $x^*_d = 1 - \frac{\delta}{\beta d} = 0.5$. When the loss parameter $L = 2$, the threshold $\frac{1}{L\beta d} = 0.625 > x^*_d$. Accordingly, the infected proportion under the DBMF approximation settles around $x^*_d$ and infected proportion under actual simulation settles at a slightly smaller value. When $L = 5$, we have $\frac{1}{L\beta d} = 0.25 < x^*_d$. We observe that the actual infected fraction and the DBMF approximation oscillates around $\frac{1}{L\beta d}$ as discussed in Section \ref{section:sis_closedloop}. Any increase in $x(k)$ beyond $\frac{1}{L\beta d}$ results in activation probability dropping to $0$ which causes $x(k)$ to drop down to $\frac{1}{L\beta d}$. Finally, when $L = 10$, the threshold $\frac{1}{L\beta d} = 0.125$. The infected proportion decreases from the initial value of $0.2$ to the threshold value. 

Thus, for sufficiently large $L$ and rationality parameter $\lambda$, susceptible individuals reduce their activation probability close to $0$ when the infection exceeds beyond the threshold $\frac{1}{L \beta d}$. Due to this protective action, the infected proportion under game-theoretic activation is considerably smaller than $x^*_d$. 

\begin{figure*}[tb]
	\centering
	\includegraphics[scale=0.5]{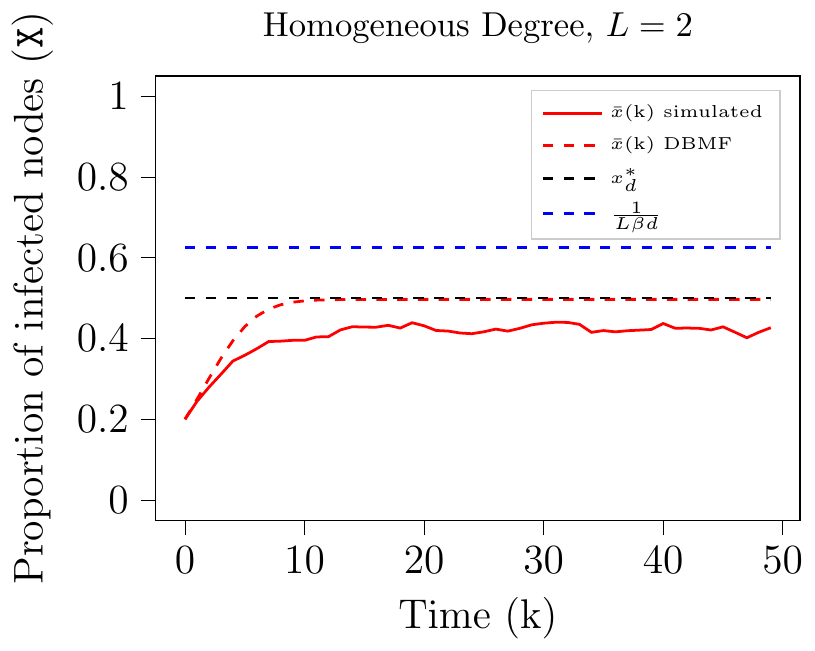}
	\includegraphics[scale=0.5]{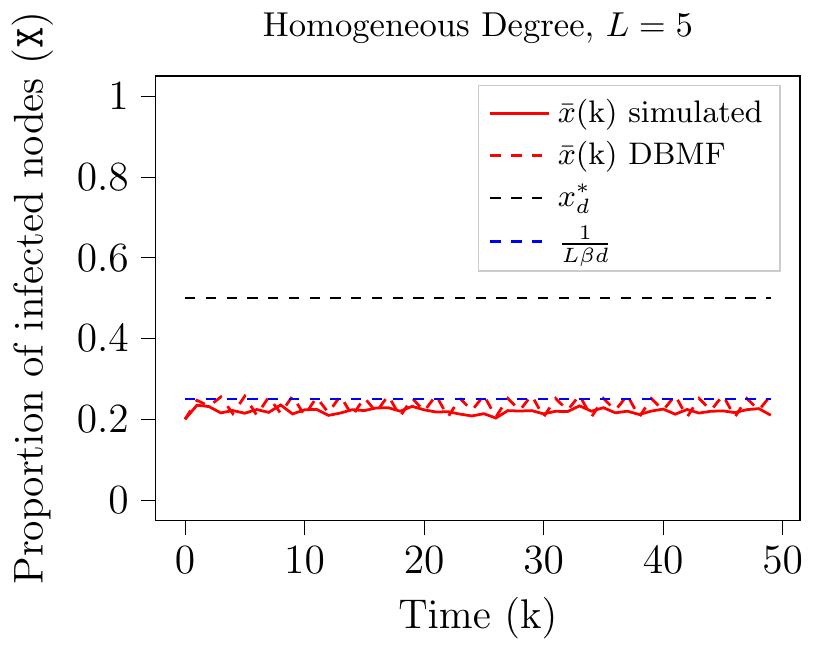}
	\includegraphics[scale=0.5]{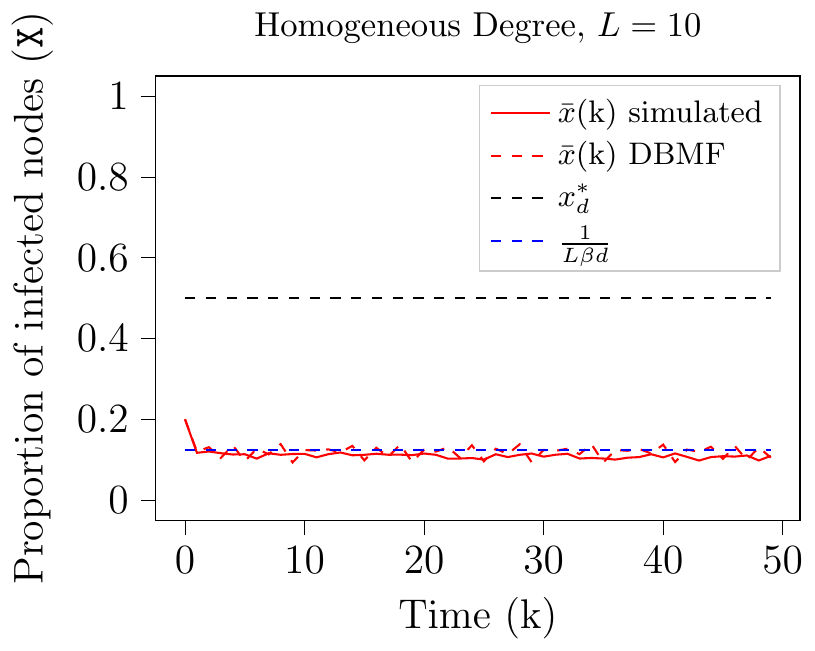} 
	\caption{\small Evolution of the proportion of infected nodes (both actual proportions averaged over $50$ independent runs and under the DBMF approximation) in the A-SIS epidemic model.}
	\label{fig:asis_sim}
\end{figure*}

\section{Discussion and Conclusion}\label{sec:discussion} Our work is one of the first to conduct a rigorous and comprehensive analysis of the impacts of game-theoretic activation on epidemic spread over dynamical networks. One of the main insights from our results is that, in both A-SAIR and A-SIS epidemic models, the infection can persist in the population for a long time under game-theoretic activation as any increase or decrease in the infected proportion is counteracted by a decrease or increase in the activation probabilities, respectively. There are several promising directions for future research. 
\begin{itemize}
\item \rev{Analyzing equilibria when players are not myopic, but rather maximize payoffs over a future time horizon, is an important problem. In an ongoing work, we build upon this paper and investigate the behavior of agents who consider discounted payoffs over a longer time horizon \cite{elokdadynamic}. This topic remains relatively unexplored and is a promising avenue for future research.}
\item The analysis of closed-loop dynamics under game-theoretic activation for the A-SAIR epidemic remains a challenging open problem. 
\item While we show that the equilibrium activation probabilities depend on the proportion of infected nodes, this information may not be available in practice as testing data is often not accurate and represents the epidemic state with a certain delay. Therefore, the impacts of inaccurate and delayed information on activation probabilities and epidemic evolution should be investigated. 
\item In this work, we have assumed that edges or connections are formed with probability $1$. This assumption can potentially be relaxed by letting nodes choose whether to accept incoming connections or not in a strategic manner.
\item Much of the earlier work on inferring epidemic states and resource allocation for epidemic containment rely on centralized algorithms (e.g., modifying infection and recovery rates) and do not account for behavioral changes in the activation pattern of the nodes \cite{nowzari2016analysis,sagar2020_2,hota2020closed}. The analysis carried out in our work should be integrated with inference, estimation and optimal resource allocation schemes to obtain more effective countermeasures.  
\item Finally, we should explore how to infer different parameters of interest (such as $\lambda, L, c$) from real data, such as mobility data made available for various activities in \cite{Mobility_Google}, and validate the theoretical predictions.
\end{itemize}
We hope our work stimulates further investigations along the above avenues.

\section*{Acknowledgments}
We thank Prof. Shreyas Sundaram (Purdue University), Prof. Philip Par{\'e} (Purdue University), Dr. Bala Kameshwar Poolla (NREL), Prof. Ceyhun Eksin (TAMU) and Prof. D. Manjunath (IIT Bombay) as well as anonymous reviewers for helpful suggestions. 

\bibliographystyle{IEEEtran}
\bibliography{refs,refs_new}

\myclearpage
\appendix
\section{Preliminary results and proofs pertaining to the A-SIS epidemic model}

\rev{We here introduce a few preliminary results regarding the equilibria and stability of the SIS epidemic model followed by the detailed proofs of some of the results from Section \ref{section:a-sis}.}

\subsection{Equilibria and stability of discrete-time networked SIS epidemic}
\label{sec:appendix_prior}

\rev{Consider the discrete-time SIS epidemic dynamic on a network with $N$ nodes. The proportion of infected individuals at node $i \in [N]$\footnote{\rev{We define $[N] := \{1,2,\ldots,N\}$.}} at time $k$, denoted $x_i(k)$, evolves as
\begin{equation}\label{eq:dt_sis_prior}
x_i(k+1) = x_i(k) + h \big[ (1-x_i(k)) \sum^N_{j = 1} \beta_{ij} x_j(k) - \delta_i x_i(k) \big], \qquad i \in [N],
\end{equation}
where $\beta_{ij} \geq 0$ denotes the infection rate from $j$ to $i$, $\delta_i$ denotes the recovery rate at node $i$ and $h > 0$ is the sampling parameter. We now state a few assumptions followed by a characterization of the equilibria of the above dynamics which depends critically on the spectral radius of the matrix $\Ib - hD + hB$, where $\Ib$ is the identity matrix of dimension $N$, $B$ denotes the square matrix with entry $(i,j)$ being $\beta_{ij}$, and $D$ denotes the diagonal matrix with entries being the $\delta_i$ parameters.} 

\begin{assumption}\label{ass:dt_sis_prior}
\rev{The following are true. 
\begin{enumerate}
\item The initial infected proportion $x(0) \in [0,1]^N$.  
\item The parameters satisfy $\beta_{ij} \geq 0, \delta_i \geq 0$. 
\item For every node $i$, $h\delta_i \leq 1$, and $h\sum^N_{j=1} \beta_{ij} \leq 1$.
\item The matrix $B$ is irreducible. 
\end{enumerate}}
\end{assumption}

\begin{theorem}[\cite{pare2018analysis,liu2020stability}]\label{theorem:dt_sis_prior}
\rev{Consider the discrete-time SIS epidemic dynamic given by \eqref{eq:dt_sis_prior} under Assumption \ref{ass:dt_sis_prior}. Then, $[0,1]^N$ is an invariant set for \eqref{eq:dt_sis_prior}. In addition,  
\begin{enumerate}
\item if $\rho(\Ib-hD+hB) \leq 1$, then the disease-free state, $x = \mathbf{0}$, is the only equilibrium of \eqref{eq:dt_sis_prior} and is asymptotically stable with domain of attraction $[0,1]^N$, and
\item if $\rho(\Ib-hD+hB) > 1$, then \eqref{eq:dt_sis_prior} possesses two equilibria, $\mathbf{0}$ and $x^*$ with $x_i^* > 0, \forall i \in [N]$. Furthermore, if $h(\delta_i +\sum^N_{j=1} \beta_{ij}) \leq 1, \forall i \in [N]$, then the endemic equilibrium $x^*$ is asymptotically stable with domain of attraction $[0,1]^N \setminus \{\mathbf{0}\}$. \end{enumerate}}
\end{theorem}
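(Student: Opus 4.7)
The plan is to proceed in three blocks: invariance of $[0,1]^N$, analysis of the disease-free equilibrium, and then existence and stability of the endemic equilibrium. Throughout, I write the dynamics as $x(k+1)=F(x(k))$ with
\begin{equation*}
F_i(x) = (1-h\delta_i)x_i + h(1-x_i)\sum_{j=1}^N \beta_{ij} x_j,
\end{equation*}
and I denote $M := \Ib - hD + hB$, which is entrywise nonnegative by Assumption~\ref{ass:dt_sis_prior}(3) and irreducible by Assumption~\ref{ass:dt_sis_prior}(4).

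For invariance, I would verify $F_i(x)\in[0,1]$ whenever $x\in[0,1]^N$. Nonnegativity follows because $(1-h\delta_i)x_i \geq 0$ (by $h\delta_i \le 1$) and the second summand is a product of nonnegative terms. For the upper bound I bound $x_j \le 1$ in the sum and rewrite $F_i(x) \le x_i(1-h\sum_j \beta_{ij}) + h\sum_j \beta_{ij}$, which is a convex combination of $x_i$ and $1$ because $h\sum_j \beta_{ij}\le 1$, hence at most $1$.

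For Part~1, the key inequality is the affine upper bound
\begin{equation*}
x_i(k+1) \le (1-h\delta_i)x_i(k) + h\sum_j \beta_{ij} x_j(k) = (Mx(k))_i,
\end{equation*}
iterating which gives $0 \le x(k) \le M^k x(0)$ componentwise. When $\rho(M)<1$ the conclusion is immediate, and I would handle the boundary case $\rho(M)=1$ via a Perron–Frobenius argument: since $M$ is nonnegative and irreducible, there is a strictly positive left eigenvector $w^\top M = \rho(M) w^\top$, and $V(x) := w^\top x$ acts as a Lyapunov function satisfying $V(x(k+1)) \le \rho(M) V(x(k)) \le V(x(k))$ with strict decrease whenever $x(k)\neq 0$ (using the strict inequality $(1-x_i)x_j < x_j$ in the bound whenever some $x_i>0$, coupled with irreducibility of $B$ to propagate this through all coordinates). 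Uniqueness of $\mathbf{0}$ as equilibrium in this regime follows because any nonzero fixed point would satisfy $x^* \le Mx^*$ componentwise; left-multiplying by $w^\top$ forces $\rho(M)\geq 1$ with equality only in the degenerate case, and a tight Perron argument rules this out under $\rho(M)\le 1$.

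For Part~2, when $\rho(M)>1$ I would establish existence of an endemic equilibrium $x^*$ via a monotone fixed-point iteration. The operator $F$ is coordinatewise nondecreasing on $[0,1]^N$ under the strengthened condition $h(\delta_i + \sum_j \beta_{ij}) \le 1$ (one checks $\partial F_i/\partial x_j \ge 0$ everywhere on the cube using this bound). Starting from $x^{(0)} = \varepsilon v$ where $v>0$ is the right Perron eigenvector of $M$ and $\varepsilon$ is small, the instability of $\mathbf{0}$ (the Jacobian at $\mathbf{0}$ is $M$, with $\rho(M)>1$) yields $F(x^{(0)}) \ge x^{(0)}$ componentwise, so monotonicity of $F$ together with the invariant cube gives a nondecreasing iterate $x^{(k+1)} = F(x^{(k)})$ bounded above by $\mathbf{1}$; its limit is a fixed point $x^*\neq \mathbf{0}$. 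Irreducibility of $B$ then forces every coordinate of $x^*$ to be strictly positive. Uniqueness of this nonzero fixed point would follow from a standard concavity-type argument on the map $x \mapsto F(x)$, exploiting that $F_i$ is strictly concave in each $x_j$. For asymptotic stability of $x^*$ with domain of attraction $[0,1]^N\setminus\{\mathbf{0}\}$, I would leverage the monotonicity of $F$ under $h(\delta_i+\sum_j\beta_{ij})\le 1$: any initial condition $x(0) \ne \mathbf{0}$ can be sandwiched between a small multiple of $v$ and $\mathbf{1}$ (the former strictly positive after one step of $F$ using irreducibility), and monotone convergence from above and below both limit to the unique nonzero fixed point $x^*$.

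The main obstacle will be the global asymptotic stability of $x^*$, because monotone convergence requires the strengthened step-size condition and a careful argument that trajectories starting anywhere in $[0,1]^N\setminus\{\mathbf{0}\}$ eventually lie between two comparable iterates sandwiching $x^*$. Invariance and disease-free stability are essentially algebraic; endemic existence is a fixed-point exercise; but endemic global stability is where the subtle interplay between nonnegativity, monotonicity, and Perron–Frobenius structure really does the work, and is exactly why the hypotheses of Theorem~\ref{theorem:dt_sis_prior} bifurcate between parts that use only Assumption~\ref{ass:dt_sis_prior} and parts that require the stronger bound on $h$.
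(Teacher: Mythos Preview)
The paper does not prove this theorem at all: it is stated in the appendix as a preliminary result imported from \cite{pare2018analysis,liu2020stability}, with only the two-sentence attribution ``The existence of the endemic equilibrium $x^*$ was shown in \cite{pare2018analysis}\ldots The asymptotic stability of this equilibrium was established recently in \cite{liu2020stability}\ldots'' So there is no in-paper proof to compare your proposal against; you are effectively reconstructing the cited results.

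As a reconstruction, your sketch is broadly sound but has one structural gap. Existence of the endemic equilibrium in Part~2 is asserted in the theorem under Assumption~\ref{ass:dt_sis_prior} alone, whereas your monotone-iteration argument hinges on coordinatewise monotonicity of $F$, which you (correctly) note requires the strengthened bound $h(\delta_i+\sum_j\beta_{ij})\le 1$. So your existence proof covers strictly less than the statement claims. To match the theorem you would need a different route for existence that does not rely on monotonicity of $F$---for instance, working directly with the equilibrium relation $x_i = \sum_j \beta_{ij} x_j/(\delta_i+\sum_j\beta_{ij}x_j)$ and applying a Brouwer/Perron-type argument on this auxiliary map, or following the original construction in \cite{pare2018analysis}. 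A secondary point: in Part~1 your claim that $V(x)=w^\top x$ decreases \emph{strictly} whenever $x\ne 0$ is not immediate, since $w^\top F(x)=\rho(M)w^\top x - h\sum_i w_i x_i\sum_j\beta_{ij}x_j$ and the quadratic correction can vanish on nontrivial states (e.g.\ when the support of $x$ has no $B$-loop within it). You will need a LaSalle-type step or an irreducibility propagation over several iterates to close this.
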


\rev{The existence of the endemic equilibrium $x^*$ was shown in \cite{pare2018analysis} under Assumption \ref{ass:dt_sis_prior}. The asymptotic stability of this equilibrium was established recently in \cite{liu2020stability} under the additional assumption $h(\delta_i +\sum^N_{j=1} \beta_{ij}) \leq 1, \forall i \in [N]$. We leverage the above theorem to establish several results for our setting.}

\subsection{Omitted proofs from Section \ref{section:asis-mfa}}
\label{sec:appendix_proofs}

\subsubsection{Proof of Theorem \ref{thm:asis}}
\begin{proof}
We compute expectation on both sides of \eqref{eq:asis_markov} and obtain
\begin{align}
x_i(k+1) & = (1-\delta) x_i(k) + \Eb\Big[(1-X_i(k)) \big[1 - \underset{j \neq i}{\Pi} \left[1 - A_{ij}(k) X_j(k)N_{\beta} \right]\big]\Big] \nonumber
\\ & \leq (1-\delta) x_i(k) + \sum_{j \neq i} \beta \Eb\big[ A_{ij}(k) X_j(k) (1-X_i(k)) \big]
\\ & = (1-\delta) x_i(k) + \sum_{j \neq i} \beta \Pb(A_{ij}(k)=1|\SXt^k_{ij})\Pb(\SXt^k_{ij}), \label{eq:asis_mfa2_x}  
\end{align}
following Weierstrass product inequality. Here, $\SXt^k_{ij}$ denotes the event that node $v_i(k) \in \St$ (i.e., $X_i(k) = 0$) and node $v_j(k) \in \Xt$ (i.e., $X_j(k) = 1$) as in \eqref{eq:def_SXtij}. We now compute
\begin{align}
\Pb(A_{ij}(k)=1|\SXt^k_{ij}) & = \Pb(\Gamma^k_{i \to j} \cup \Gamma^k_{j \to i}| \SXt^k_{ij}) \nonumber
\\ & = 1-[1-\Pb(\Gamma^k_{i \to j} | \SXt^k_{ij})][1-\Pb(\Gamma^k_{j \to i} | \SXt^k_{ij})] \nonumber
\\ & = 1-[1-\pi_{\St,i}(k)\bar{d}_i][1-\pi_{\Xt,j}(k)\bar{d}_j], \label{eq:mfa_asis_int1}
\end{align}
where $\Gamma^k_{i \to j}$ is as defined in \eqref{eq:def_Gammatij}. Note that the probability of an activated node $v_i$ choosing a specific node $v_j$ is $\bar{d}_i = \frac{d_i}{n-1}$. 

We now approximate $\Pb(\SXt^k_{ij}) \simeq (1-x_i(k)) x_j(k)$ assuming that states of nodes $v_i$ and $v_j$ are independent. The result now follows upon substituting \eqref{eq:mfa_asis_int1} in \eqref{eq:asis_mfa2_x}. 
\end{proof}

\subsubsection{Proof of Proposition \ref{prop:sis_endemic_rho}}
\begin{proof}
\rev{Note that when the activation probabilities are given by static constants $\pi_{d,\St}$ and $\pi_{t,\Xt}$, the dynamics in \eqref{eq:dbmf_asis} is analogous to the dynamics in \eqref{eq:dt_sis_prior}. In particular, drawing parallel with the notation in \eqref{eq:dt_sis_prior}, we have $h=1$, $N = |\DD|$, and $\beta_{ij}$ being analogous to $nm_t \beta [1-(1-\pi_{d,\St} \bar{d})(1-\pi_{t,\Xt}\bar{t})]$ between degrees $d$ and $t$. Since $\beta, \delta \in (0,1]$ and $\pi_{d,\St} > 0, \pi_{d,\Xt} > 0$, it is easy to see that the requirements in Assumption \ref{ass:dt_sis_prior} are satisfied with the exception of the third point. We now verify this remaining point. For a given degree $d \in \DD$, we compute
\begin{align*}
\sum_{t \in \DD} nm_t \beta \big[1-(1-\pi_{d,\St} \bar{d})(1 - \pi_{t,\Xt} \bar{t})\big] & \leq \sum_{t \in \DD} nm_t \beta \big[\pi_{d,\St} \bar{d} + \pi_{t,\Xt} \bar{t}\big]
\\ & \leq \sum_{t \in \DD} nm_t \beta (\bar{d} + \bar{t}) 
\\ & \simeq \sum_{t \in \DD} (m_t \beta d + m_t \beta t) = \beta (d+d_{\avg}),
\end{align*}
where we assume $n \simeq n-1$ (which is accurate when the number of nodes is sufficiently large) and $\pi_{d,\St} \leq 1, \pi_{t,\Xt} \leq 1$ being probabilities. Thus, when $\beta(d_{\avg} + d_{\max}) \leq 1$, Assumption \ref{ass:dt_sis_prior} is satisfied. The result now follows from Theorem \ref{theorem:dt_sis_prior}.} 
\end{proof}


\subsection{Omitted proofs from Section \ref{section:sis_closedloop}} 
\label{sec:appendix_sishet_proofs}

\subsubsection{Proof of Proposition \ref{prop:sis_cl_homo}}

\begin{proof}
\rev{When all nodes have degree $d$, i.e., $m_d = 1$, then \eqref{eq:cl_sis_rational} yields 
\begin{align}\label{eq:cl_sis_rational_homo}
x_d(k+1) = \begin{cases}
(1-\delta) x_d(k), \qquad & \text{if } x_d(k) \in (\frac{1}{L\beta d},1],
\\ (1-\delta) x_d(k) + (1-x_d(k)) \beta d x_d(k), & \text{if } x_d(k) \in [0, \frac{1}{L\beta d}],
\end{cases} 
\end{align}
as $\bar{x}(k) = x_d(k)$. Following \cite{pare2018analysis}, it is easy to see that $[0,1]$ is invariant for the above dynamic when $\delta, \beta \in (0,1]$ and $\beta d \leq 1$. Furthermore, when $x_d(k) > \frac{1}{L\beta d}$, we have $x_d(k+1) = (1-\delta)x_d(k) < x_d(k)$, i.e., the infected proportion is strictly decreasing. We now examine the following cases.}

\rev{\noindent{\bf Case 1: $\delta \geq \beta d$.} In this case, for $x_d(k) \leq \frac{1}{L\beta d}$, we have
$$ x_d(k+1) - x_d(k) \leq (-\delta+(1-x_d(k))\beta d) x_d(k) \leq (\beta d- \delta) x_d(k).$$
Note that for $x_d(k) > 0$, we have $-\delta + (1-x_d(k))\beta d < 0$. As a result, $x_d(k+1) < x_d(k)$ for all $x_d(k) \in (0,1]$. Therefore $0$ is the only equilibrium point with domain of attraction $[0,1]$.}  

\rev{\noindent{\bf Case 2: $\delta < \beta d$ and $x^*_d \leq \frac{1}{L\beta d}$.} When $\delta < \beta d$, then the dynamic
\begin{equation}\label{eq:prop46_int}
x_d(k+1) = (1-\delta) x_d(k) + (1-x_d(k)) \beta d x_d(k) 
\end{equation}
has two equilibrium points: $0$ and $x^*_d = 1 - \frac{\delta}{\beta d}$, and $x^*_d$ has domain attraction $(0,1]$ following \cite{liu2020stability} (see Theorem \ref{theorem:dt_sis_prior}). From \eqref{eq:prop46_int}, we obtain
\begin{align*}
x_d(k+1) - x^*_d & = (1-\delta) x_d(k) + (1-x_d(k)) \beta d x_d(k) - x^*_d
\\ & = (x_d(k)-x^*_d) + ((1-x_d(k)) \beta d - \delta) x_d(k)
\\ & = (x_d(k)-x^*_d) + ((1-x_d(k)) \beta d - (1-x^*_d)\beta d) x_d(k)
\\ & = (x_d(k)-x^*_d) (1-\beta d x_d(k))
\\ \implies |x_d(k+1) - x^*_d| & \leq |x_d(k)-x^*_d|, 
\end{align*}
where we have used the fact that $\delta = (1-x^*_d)\beta d$ and that $\beta d x_d(k) \in (0,1)$. In other words, if $x_d(k) > x^*_d$, then $x_d(k)$ converges to $x^*_d$ monotonically from above, and vice versa. We now consider the dynamic in \eqref{eq:cl_sis_rational_homo} in this parameter regime.}  

\rev{If $x_d(0) \in (0,\frac{1}{L\beta d}]$, then by the above arguments, $|x_d(k+1)-x^*_d|$ decreases monotonically. Therefore, $x_d(k) \in (0, \frac{1}{L\beta d}]$ for all $k > 0$ where it continues to evolve according to \eqref{eq:prop46_int}. As a result, $x_d(k)$ converges to $x^*_d$.}

\rev{If $x_d(0) \in (\frac{1}{L\beta d},1]$, then we have $x_d(k+1) = (1-\delta)x_d(k)$. Therefore, $x_d(k)$ is strictly decreasing for $x_d(k) > \frac{1}{L\beta d}$, and there exists a finite $k^*$ such that $x_d(k^*) \leq \frac{1}{L\beta d}$. For $k > k^*$, $x_d(k)$ follows the dynamic in \eqref{eq:prop46_int}, and from the above arguments, converges to $x^*_d$ without exceeding $\frac{1}{L\beta d}$.}

\rev{\noindent{\bf Case 3: $\delta < \beta d$ and $x^*_d > \frac{1}{L\beta d}$.} If $x_d(0) > \frac{1}{L\beta d}$, then $x_d(k)$ is monotonically decreasing as $x_d(k+1) = (1-\delta)x_d(k)$ in this regime. If $0 < x_d(0) \leq \frac{1}{L\beta d}$, $|x_d(k)-x^*_d|$ is monotonically decreasing, which implies $x_d(k)$ is monotonically increasing as $x^*_d > \frac{1}{L\beta d}$. Consequently, $\frac{1}{L\beta d}$ acts as a sliding surface for $x_d(k)$ under the dynamic \eqref{eq:cl_sis_rational_homo} in this regime. If $x_d(0) = 0$, then $x_d(k) = 0$ for all $k$ as before.}
\end{proof}

\subsubsection{Proof of Proposition \ref{cor:dt_allact}}

\begin{proof}
\rev{The dynamics in \eqref{eq:allact_sis_dyn} is analogous to \eqref{eq:dt_sis_prior} with $h = 1$, $\delta_i = \delta$ and the matrix $B \in \Rb^{|\DD'| \times |\DD'|}$ with entries $[B]_{d,t} = \beta d m_t$. Consequently, the matrix $\Ib - hD + hB$ takes the form $(1-\delta)\Ib + \beta \mathbf{d}\mathbf{m}^\top$, where $\mathbf{d}$ is the vector that contains the degrees of the nodes contained in set $\DD'$ and $\mathbf{m}$ is the vector that contains the mass of nodes with each degree $d \in \DD'$. The matrix $(1-\delta)\Ib + \beta \mathbf{d} \mathbf{m}^\top$ is a rank$-1$ perturbation to an identity matrix, and its spectral radius is $1-\delta + \beta \mathbf{m}^\top \mathbf{d}$. The proof now follows as a consequence of Theorem \ref{theorem:dt_sis_prior}.}
\end{proof}
 
\subsubsection{Proof of Proposition \ref{prop:endemic_monotone}}

\begin{proof}
\rev{If the dynamics with degrees in $\DD'$ admits an endemic state, then $\delta < \beta \sum_{t \in \DD'} (m_t t) \leq \beta \sum_{t \in \DD} (m_t t)$ since $\DD' \subseteq \DD$. Therefore, following Proposition \ref{cor:dt_allact}, the dynamics with degrees in $\DD$ also admits an endemic state.}

\rev{At the endemic equilibrium of \eqref{eq:allact_sis_dyn} with degrees in $\DD$, we have
\begin{align*}
& \delta x^*_d(\DD) = (1-x^*_d(\DD)) \beta d \bar{x}^*(\DD)
\\ \implies & x^*_d(\DD) (\delta + \beta d \bar{x}^*(\DD)) = \beta d \bar{x}^*(\DD)
\\ \implies & x^*_d(\DD) = \frac{\beta d \bar{x}^*(\DD)}{\delta + \beta d \bar{x}^*(\DD)}
\\ \implies & \sum_{d \in \DD} m_d x^*_d(\DD) = \bar{x}^*(\DD) = \sum_{d \in \DD} m_d \frac{\beta d \bar{x}^*(\DD)}{\delta + \beta d \bar{x}^*(\DD)}
\\ \implies & \sum_{d \in \DD} \frac{m_d \beta d}{\delta + \beta d \bar{x}^*(\DD)} = 1,
\end{align*}
since $\bar{x}^*(\DD) > 0$ at the endemic equilibrium. Similarly, at the endemic equilibrium with degrees in $\DD'$, we have 
\begin{align*} 
& 1 = \sum_{d \in \DD'} \frac{m_d \beta d }{\delta + \beta d \bar{x}^*(\DD')}
\\ \implies & \sum_{d \in \DD'} \frac{m_d \beta d }{\delta + \beta d \bar{x}^*(\DD')} = \sum_{d \in \DD} \frac{m_d \beta d }{\delta + \beta d \bar{x}^*(\DD)} \geq \sum_{d \in \DD'} \frac{m_d \beta d }{\delta + \beta d \bar{x}^*(\DD)}
\\ \implies & \bar{x}^*(\DD') \leq \bar{x}^*(\DD).
\end{align*}
This concludes the proof.}
\end{proof}


\subsubsection{Proof of Theorem \ref{thm:dt_cl_sis_main}}

\begin{proof}
\rev{From \eqref{eq:cl_sis_rational}, we obtain
$$ x_d(k+1) \leq (1\!-\delta) x_d(k) \!+ (1-x_d(k)) \beta d \sum_{t \in \DD} m_t x_t(k), $$
i.e., the dynamics \eqref{eq:allact_sis_dyn} with $\DD' = \DD$ acts as an upper bound on \eqref{eq:cl_sis_rational} for all $\bar{x}(k) \in [0,1]$. When $\beta, \delta \in (0,1]$ and $\beta d_p \leq 1$, it is easy to see that Assumption \ref{ass:dt_sis_prior} is satisfied, and $[0,1]^p$ is invariant for the above dynamics, and consequently for \eqref{eq:cl_sis_rational}. When $\delta \geq \beta \sum_{d \in \DD} d m_d$, it follows from Proposition \ref{cor:dt_allact} that $x = \mathbf{0}$ is the unique equilibrium of \eqref{eq:allact_sis_dyn} with region of attraction $[0,1]^p$. The first part of the theorem now follows.} 

\rev{We now consider the second part of the theorem. Recall that $d_1 < d_2 < \ldots < d_p$. Consequently, the thresholds satisfy $\frac{1}{L\beta d_p} < \frac{1}{L\beta d_{p-1}} < \ldots < \frac{1}{L\beta d_1}$. Since $\delta < \beta \sum_{d \in \DD} dm_d$, we have $\bar{x}^{**}(\DD_p) > 0$. Following Proposition \ref{prop:endemic_monotone}, we also have $\bar{x}^{**}(\DD_1) \leq \bar{x}^{**}(\DD_2) \leq \ldots \leq \bar{x}^{**}(\DD_p)$. We distinguish between the following cases.}

\medskip

\noindent{\emph{Case 1: $d_r = d_p$.}} \rev{In this case, we have $\bar{x}^{**}(\DD_p) \in (0,\frac{1}{L\beta d_p}]$. Note that for any $\bar{x}(k) \in (0,\frac{1}{L\beta d_p}]$, all susceptible nodes activate with probability $1$, and the closed-loop dynamics \eqref{eq:cl_sis_rational} is given by
$$ x_d(k+1) = (1\!-\delta) x_d(k) \!+ (1-x_d(k)) \beta d \sum_{t \in \DD} m_t x_t(k), $$
i.e., it coincides with \eqref{eq:allact_sis_dyn}. It follows from Proposition \ref{cor:dt_allact} that when $\delta < \beta \sum_{d \in \DD} dm_d$, there exists a nonzero endemic equilibrium $x^{*}(\DD_p)$ of \eqref{eq:allact_sis_dyn}. Since $\bar{x}^{**}(\DD_p) \leq \frac{1}{L\beta d_p}$, $x^{*}(\DD_p)$ acts as an equilibrium point of \eqref{eq:cl_sis_rational} as well.}  

\rev{On the other hand, if $\bar{x}^{**}(\DD_p) > \frac{1}{L\beta d_p}$, then the dynamics \eqref{eq:cl_sis_rational} no longer coincides with \eqref{eq:allact_sis_dyn}; rather we have $x_{d_p}(k+1) = (1-\delta)x_{d_p}$ for $\bar{x}(k) > \frac{1}{L\beta d_p}$. Then, $x^{*}(\DD_p)$, with $x^{*}_{d_p}(\DD_p) > 0$ is no longer an equilibrium point of \eqref{eq:cl_sis_rational}.} 

\medskip

\noindent{\emph{Case 2: $d_r < d_p$.}} \rev{Recall that $d_r$ is the largest degree for which $\bar{x}^{**}(\DD_{r}) \leq \frac{1}{L\beta d_r}$. Now, for any $\bar{x}(k) \in (\frac{1}{L\beta d_{r+1}},\frac{1}{L\beta d_r}]$, the closed-loop dynamics \eqref{eq:cl_sis_rational} is given by
\begin{equation}\label{eq:thm_int1}
x_d(k+1) = 
\begin{cases}
(1-\delta) x_d(k) + (1-x_d(k)) \sum_{t \in \DD} \beta m_t d x_t(k), \qquad & d \in \DD_r, 
\\ (1-\delta) x_d(k), \qquad & d \in \DD \setminus \DD_r. 
\end{cases}
\end{equation}
Thus, if \eqref{eq:allact_sis_dyn} admits an endemic equilibrium with degrees in $\DD_r$, and $\bar{x}^{**}(\DD_{r}) \in (\frac{1}{L\beta d_{r+1}},\frac{1}{L\beta d_r}]$, then it is easy to see that $\mathbf{x}^*_{cl}$ defined in \eqref{eq:cl_endemic_final} is an equilibrium of the closed-loop dynamics.} 

\rev{We now prove that the above conditions are necessary for any nonzero equilibrium. Let $\mathbf{x}^* \in [0,1]^p$ be a nonzero equilibrium of \eqref{eq:cl_sis_rational} (with $\mathbf{x}^*_d > 0$ for some $d \in \DD$). Let $\bar{x}^* = \sum_{d \in \DD} m_d \mathbf{x}^*_d > 0$. Let $d_k$ be the largest degree such that $\bar{x}^* \leq \frac{1}{L\beta d_k}$, i.e., $\bar{x}^* \in (\frac{1}{L\beta d_{k+1}}, \frac{1}{L\beta d_k}]$ for $d_k < d_p$ and $\bar{x}^* \in [0, \frac{1}{L\beta d_k}]$ for $d_k = d_p$.}

\rev{First, suppose $d_k < d_p$. Note that when $\bar{x}(k) \in (\frac{1}{L\beta d_{k+1}}, \frac{1}{L\beta d_k}]$, the closed loop dynamics is given by 
\begin{equation}\label{eq:thm_int2}
x_d(k+1) = 
\begin{cases}
(1-\delta) x_d(k) + (1-x_d(k)) \sum_{t \in \DD} \beta m_t d x_t(k), \qquad & d \in \DD_k, 
\\ (1-\delta) x_d(k), \qquad & d \in \DD \setminus \DD_k. 
\end{cases}
\end{equation}}

\rev{Thus, for $\mathbf{x}^*$ to be an equilibrium of \eqref{eq:thm_int2}, we must necessarily have $\mathbf{x}^*_d = 0$ for $d > d_k$ and $\{\mathbf{x}^*_d\}_{d \in \DD_k}$ must be an equilibrium of \eqref{eq:allact_sis_dyn} with degrees being $\DD_k$. It follows from Proposition \ref{cor:dt_allact} that \eqref{eq:allact_sis_dyn} has at most one endemic equilibrium given by $x^*(\DD_k)$. As a result, we must have $\bar{x}^* = \bar{x}^*(\DD_k) \in (\frac{1}{L\beta d_{k+1}}, \frac{1}{L\beta d_k}]$. Furthermore, following Proposition \ref{prop:endemic_monotone}, we have $\frac{1}{L\beta d_{k+1}} < \bar{x}^*(\DD_k) \leq \bar{x}^*(\DD_{k+1})$, i.e., $d_k$ is the largest degree for which $\bar{x}^*(\DD_k) \leq \frac{1}{L\beta d_k}$. The necessity part of the proof when $d_k = d_p$ follows from identical arguments as above, and is omitted.}
\end{proof}


\end{document}